\newtheorem{mydef}{Definition}
\newtheorem{proposition}{Proposition}
\newtheorem{Lemma}{Lemma}
\newtheorem{corollary}{Corollary}
\newtheorem{assumption}{Assumption}
\newtheorem{remark}{Remark}
\newtheorem{theorem}{Theorem}
\begin{document}
\title{On the Global Optimality of Whittle's index policy for minimizing the age of information}
\author[*]{KRIOUILE Saad}
\author[*]{ASSAAD Mohamad}
\author[*] {MAATOUK Ali}
\affil[*] {TCL Chair on 5G, Laboratoire des Signaux et Systemes, CentraleSupelec, Gif-sur-Yvette, France}

\maketitle

\begin{abstract}
This paper examines the average age minimization problem where only a fraction of the network users can transmit simultaneously over unreliable channels. Finding the optimal scheduling scheme, in this case, is known to be challenging. Accordingly, the Whittle's index policy was proposed in the literature as a low-complexity heuristic to the problem. Although simple to implement, characterizing this policy's performance is recognized to be a notoriously tricky task. In the sequel, we provide a new mathematical approach to establish its optimality in the many-users regime for specific network settings. Our novel approach is based on intricate techniques, and unlike previous works in the literature, it is free of any mathematical assumptions. These findings showcase that the Whittle's index policy has analytically provable asymptotic optimality for the AoI minimization problem. Finally, we lay out numerical results that corroborate our theoretical findings and demonstrate the policy's notable performance in the many-users regime.  
\end{abstract}

\section{Introduction}
Technological advances in wireless communications and the cheap cost of hardware have led to the emergence of real-time monitoring services. In these systems, an entity is interested in knowing the status of one or multiple processes observed by a remote source. Accordingly, the source sends packets to the monitor to provide information about the process/processes of interest. The main goal in these applications is to keep the monitor up to date. In fact, in such applications, information has the highest value when it is fresh since the outcome of the monitor's tasks is better when it is based on new rather than outdated data. To quantify this notion of freshness, the Age of Information (\textbf{AoI}) was introduced in \cite{kaul2012real}. Ever since, 
the AoI has become a hot research topic, and a considerable number of research works have been published on the subject \cite{buyukatestimely,farazi2018age,maatouk2018age,zou2019waiting,sun2017update,talak2017minimizing,bedewy2017age,kosta2019age}. 

Among the most fundamental issues that the research community aimed to address is age-based resource allocation. In most real-time applications, numerous sources share the same transmission channel where the available resources are scarce. The scarcity can be a consequence of battery considerations for the devices involved or physical interference that may limit the number of simultaneous transmissions. Consequently, a smart resource allocation scheme has to be adopted to minimize the AoI and attain the desired timeliness objective. In \cite{sun2018age}, the authors proposed both age-optimal and near age-optimal scheduling policies for the single and multi-server cases, respectively. In particular, they have shown that a greedy policy is age-optimal under certain assumptions in the single exponential server case. In \cite{ceran2019average}, the authors examined a single-source scenario where the source’s update rate cannot exceed a predefined limit due to battery considerations. In this case, they were able to propose an age-optimal scheduling policy when the channel exhibit possible decoding errors. Age-optimal policies were also proposed in various network settings such as distributed scheduling and random access environments \cite{maatouk2020age,jiang2018timely,talak2018distributed}.

Among the scheduling problems investigated in the literature, we cite the following: consider $N$ users communicating with a central entity over unreliable channels where, at most, $M<N$ users can transmit simultaneously. What is the age-optimal strategy in this case? The wide range of applications that this problem encompasses let it emerge as a fundamental one that needs to be investigated. Unfortunately, this problem belongs to the family of Restless Multi-Armed Bandit (\textbf{RMAB}) problems,
which are generally difficult to solve optimally. To address this difficulty, the authors in \cite{kadota2018scheduling} have examined this problem and proved that a greedy algorithm is optimal when users have identical channel statistics. For the asymmetric case, the authors proposed a sub-optimal policy, known as the Whittle's index policy. The Whittle's index policy has been embraced by many works in various frameworks \cite{ansell2003whittle,kriouile2018asymptotically,kriouile2019whittle,larranaga2017asymptotically,liu2010indexability,maatouk2020optimality,ouyang2015downlink,papadaki2002exploiting,weber1990index,whittle1988restless} as it is recognized for its low complexity and its notable performance. For example, in \cite{kriouile2018asymptotically}, the Whittle's index policy was adopted to minimize the average delay of queues. In another line of work, the authors in \cite{ouyang2015downlink} employed a Whittle's index-based policy to maximize the average throughput over Markovian channels. Although it is simple to implement, the main challenge that arises when adopting this policy is characterizing its performance since its analysis is known to be notoriously difficult. To attend to this difficulty, the authors in \cite{weber1990index} provided a sufficient condition, dubbed as Weber’s condition, for the Whittle’s index policy's asymptotic optimality in the many-users regime. However, this condition requires ruling out the existence of both closed orbits and chaotic behavior of a high-dimensional non-linear differential equation, which is extremely difficult to verify even numerically. To further facilitate the analysis of the policy, the works in \cite{ouyang2015downlink,kriouile2018asymptotically} have provided an approach based on a fluid limit model for the delay minimization and throughput maximization frameworks. By leveraging this model, they proved the asymptotic optimality of the Whittle's index policy in these frameworks under a recurrence assumption that is easier than Weber's condition but still requires numerical verification. Following the same footsteps, the present authors adopted the fluid limit model and provided proof of the asymptotic optimality of the Whittle's index policy in the AoI framework under similar assumptions \cite{maatouk2020optimality}. This raises the following important question: can we prove the Whittle's index policy's asymptotic age-optimality in specific network settings without recoursing to \textbf{any} assumptions? Answering this question is extremely difficult and has yet to be answered even for the standard delay and throughput metrics. In this paper, we examine this question in the AoI framework, and we provide rigorous theoretical results that showcase the validity of the Whittle's index asymptotic optimality in \emph{certain} network settings without imposing any assumptions. Note that the importance of the asymptotic many-users regime stems from the astronomical growth in the number of interconnected devices. For example, machine-type communications and the IoT in 5G networks require supporting tens of thousands of connected devices in a \emph{single} cell. To that end, we summarize in the following the structure of the paper along with its key contributions:
\begin{itemize}
\item We start by formulating the problem of minimizing the average age of a network where $M$ out of $N$ users can communicate simultaneously with the central entity. As previously explained, this problem belongs to the class of RMAB problems, which are known to be notoriously difficult to solve. Accordingly, the Whittle's index policy has been proposed in previous works as a low-complexity solution, which is the main focus of our work. To establish the Whittle's index policy, the following steps have to be taken:
\begin{enumerate}
\item Provide a relaxed version of the original problem and tackle it through a Lagrangian approach.
\item Prove the indexability property of the relaxed problem and derive the Whittle’s index expressions.
\end{enumerate}
These steps have been carried out in previous works by the authors in \cite{kadota2018scheduling}, and their main results are reported in our paper for completeness. 
\item Next, we present a fluid limit model that approximates the Whittle's index policy behavior. In the many-users regime, we prove that the fluid limit can be made arbitrarily close to the actual network's evolution. Therefore, we mainly focus on the evolution of the fluid limit vector in our optimality analysis. The method previously carried out in the literature to establish the Whittle's index policy's asymptotic optimality follows a spectral analysis approach \cite{ouyang2015downlink}. However, this approach is highly contingent on the initial state of the system. Accordingly, to extend their results to any random initial state, the authors imposed a restrictive assumption, which can only be verified numerically. In our paper, we take a different approach to analyze the fluid model. Specifically, we propose a novel method based on intricate techniques (e.g., Cauchy criterion) to prove the fluid model's convergence to a fixed point. We stress that this step's technical details are intricate and constitute our paper's main technical contribution. Note that, even for the standard delay and throughput metrics, such proof was not provided in the literature, which further highlights our approach's novelty. Afterwards, we establish the global optimality of Whittle's index policy leveraging the fact that the aforementioned fixed point is nothing but the optimal system's operating point in the many-users regime. Finally, we provide numerical results that corroborate the theoretical results and highlight the Whittle's index policy's notable performance in the many-users regime. 
\end{itemize}
The rest of the paper is organized as follows: Section \ref{sec:systemModel} is devoted to the system model and the problem formulation. Section \ref{sec:Relaxed_Dual_Prob} is dedicated to the establishment of the Whittle's index policy. In Section \ref{sec:Whittle_Index_optimal}, we provide our main results where we prove the asymptotic optimality of the Whittle's index policy. Numerical results that corroborate our theoretical findings are given in Section \ref{sec:Num_Reslts} while Section \ref{sec:Conclusion} concludes the paper.

\section{System Model and Problem Formulation}\label{sec:systemModel}
\subsection{System Model}
We consider a time-slotted system with one base station, $M$ uncorrelated channels, and $N$ users ($N>M$). Time is considered to be normalized to the slot duration (i.e, $t=1,2,\ldots$). We suppose that any of the $M$ channels can be allocated to at most one user. Hence, at most $M$ users will be able to transmit in each time slot $t$. If a user is scheduled at time $t$, it generates a fresh new packet and sends it to the base station. This packet is successfully decoded by the base station at time $t+1$ with a certain success probability. We consider that if a decoding error takes place, the packet is discarded (i.e., users are not
equipped with buffers). In practice, users may share similar
channel conditions. Accordingly, we suppose that the users can be partitioned into $K=2$ different classes such that users within the same class share the same decoding success probability. In other words, each user $i$ belonging to class $k\in\{1,2\}$ has a decoding success probability $p_k$, which is assumed to be known by the scheduler. We let $\gamma_k$ be the proportion of users belonging to class $k$. To that end, the following always holds: $\gamma_1+\gamma_2=1$.

A scheduling policy $\pi$ is defined as a sequence of actions $\pi=(\boldsymbol{a}^{\pi}(0),\boldsymbol{a}^{\pi}(1),\ldots)$ where \\ $\boldsymbol{a}^{\pi}(t)=(a_1^{1,\pi}(t),a_2^{1,\pi}(t),\ldots,a_{\gamma_1N}^{1,\pi}(t),a_1^{2,\pi}(t),a_2^{2,\pi}(t),\ldots,a_{\gamma_2N}^{2,\pi}(t))$ is a binary vector such that $a_i^{k,\pi}(t)=1$ if user $i$ of class $k$ is scheduled at time $t$.
We also let the binary random variable $c_i^k(t)$ denote the channel state of user $i$ of class $k$ such that $c_i^k(t)=1$ if no decoding error takes place. As per our system model, we always have $\Pr(c_i^k(t)=1)=p_k$ and $\Pr(c_i^k(t)=0)=1-p_k$ for any user $i$ of class $k$. We let $B_i^{k,\pi}(t)$ denote the time-stamp of the freshest packet delivered by user $i$ of class $k$ to the base station at time $t$ under the scheduling policy $\pi$. The age of information, or simply the age, of user $i$ of class $k$ is defined as \cite{kaul2012real}:
\begin{equation}
s_i^k(t)=t-B_i^k(t)
\end{equation}
By taking into account the variables defined, the age of this user under policy $\pi$ evolves as follows:
\begin{align}
 s_i^{k,\pi}(t+1)=\left\{
    \begin{array}{ll}
        1 & \text{if} \ a_i^{k,\pi}(t)=1, c_i^k(t)=1  \\
        s_i^{k,\pi}(t)+1 & \text{if}  \ a_i^{k,\pi}(t)=1, c_i^k(t)=0\\
        s_i^{k,\pi}(t)+1 & \text{if} \ a_i^{k,\pi}(t)=0,
    \end{array}
\right.
\end{align}
We let $\boldsymbol{s}^{\pi}(t)$ denote the vector of all users' age $\boldsymbol{s}^{\pi}(t)=(s_1^{1,\pi}(t),\cdots,s_{\gamma_1 N}^{1,\pi}(t),s_1^{2,\pi}(t),\cdots,s_{\gamma_2 N}^{2,\pi}(t))$ under policy $\pi$. With all these notations in mind, we can formulate the optimization problem that we focus on in our paper.
\subsection{Problem Formulation}
In this paper, we are interested in minimizing the total expected average age of information of the network under the constraint on the number of users scheduled at each time slot $t$. The latter must be less than the total number of channels $\alpha N$ where $\alpha$ is equal to $\frac{M}{N}$. We let $\Pi$ denote the set of all causal scheduling policies in which the scheduling decisions are made based on the
history and current states of the system. To that end, and given an initial system state $\boldsymbol{s}(0)=(s_1^{1}(0),\cdots,s_{\gamma_1 N}^{1}(0),\cdots,s_1^{K}(0),\cdots,s_{\gamma_K N}^{K}(0))$, our problem can be formulated as follows:
\begin{align}
& \underset{\pi \in \Pi}{\text{min}} \limsup\limits_{T\rightarrow\infty}\frac{1}{T} \mathbb{E}^{\pi}\left[ \sum_{t=0}^{T-1}  \sum_{k=1}^{K} \sum_{i=1}^{{\gamma_k}N} s_i^{k,\pi}(t) \mid \boldsymbol{s}(0)\right]
\nonumber \\
& \text{s.t.}  \sum_{k=1}^{K} \sum_{i=1}^{{\gamma_k}N} a_i^{k,\pi}(t)\leq {\alpha}N, \quad t=0,1,2,\ldots
\label{eq:original_problem}
\end{align}   
This problems belongs to the family of RMAB problems,
which are generally difficult to solve optimally (see Papadimitriou et al. \cite{papadimitriou1999complexity}). For this reason, one should aim to develop a well-performing sub-optimal policy. As it has been mentioned, the low-complexity scheduling policy that we are interested in throughout this paper is the Whittle's index policy. To establish this policy and derive the Whittle's indices expressions, one has to follow the steps below:
\begin{enumerate}
\item Provide a relaxed version of the original problem and tackle it through a Lagrangian approach.
\item Prove the indexability property of the problem and derive the Whittle’s index expressions.
\end{enumerate}
As previously mentioned, these steps have been carried out in previous works by the authors in \cite{kadota2018scheduling}. For completeness, and as we will use these steps later in our optimality analysis, we report them along with the main results of \cite{kadota2018scheduling} in the following section. 

\section{Relaxed Problem and Whittle's Index Policy}\label{sec:Relaxed_Dual_Prob}
\subsection{Relaxed Problem}
The first step toward establishing the Whittle's index policy consists of relaxing the constraint on the number of scheduled users of the problem in (\ref{eq:original_problem}). Specifically, instead of having the constraint satisfied at each time slot, we consider that it has to be satisfied on average. Therefore, the relaxed problem can be formulated as follows:
\begin{align}
& \underset{\pi \in \Pi}{\text{min}} \limsup\limits_{T\rightarrow\infty}\frac{1}{T} \mathbb{E}^{\pi}\left[ \sum_{t=0}^{T-1}  \sum_{k=1}^{K} \sum_{i=1}^{{\gamma_k}N} s_i^{k,\pi}(t) \mid \boldsymbol{s}(0)\right]\nonumber \\
& \text{s.t.}   \limsup\limits_{T\rightarrow\infty}\frac{1}{T} \mathbb{E}^{\pi}\left[ \sum_{t=0}^{T-1}  \sum_{k=1}^{K} \sum_{i=1}^{{\gamma_k}N} a_i^{k,\pi}(t)\right] \leq {\alpha}N
\label{eq:constraint_relaxed}
\end{align}   
To study this problem, one has to introduce a Lagrangian approach to transform the problem into an unconstrained one as will be detailed in the sequel.
\subsection{Dual Problem}
To circumvent the difficulty of studying the constrained problem in (\ref{eq:constraint_relaxed}), a Lagrangian approach has to be adopted. In particular, let us denote by $\lambda\geq0$ the Lagrangian parameter. For a fixed $\lambda$, the Lagrangian function of the relaxed problem is:
\begin{align}
F(\lambda, \pi)= \limsup\limits_{T\rightarrow\infty}\frac{1}{T} \mathbb{E}^{\pi}\left[ \sum_{t=0}^{T-1}  \sum_{k=1}^{K} \sum_{i=1}^{{\gamma_k}N} s_i^{k,\pi}(t)+\lambda (a_i^{k,\pi}(t)-\alpha) \mid \boldsymbol{s}(0)\right]
\end{align}
Based on the dual approach, the next step consists of finding the policy $\pi$ that minimizes $F(\lambda, \pi)$. Note that the term $\frac{1}{T}\sum_{t=0}^{T-1}  \sum_{k=1}^{K} \sum_{i=1}^{{\gamma_k}N}\lambda \alpha$, which is equal to $N \lambda \alpha$, doesn't depend on $\pi$. Therefore, the policy that minimizes the above function $F(\lambda, \pi)$ also minimizes the following function:
\begin{align}
f(\lambda, \pi)= \limsup\limits_{T\rightarrow\infty}\frac{1}{T} \mathbb{E}^{\pi}\left[ \sum_{t=0}^{T-1}  \sum_{k=1}^{K} \sum_{i=1}^{{\gamma_k}N} s_i^{k,\pi}(t)+\lambda a_i^{k,\pi}(t) \mid \boldsymbol{s}(0)\right]
\label{lagrangianfunction}
\end{align}
Then, we can formulate the dual problem as follows: 
\begin{equation}\label{eq:dual_relaxed_problem}
\underset{\pi \in \Pi}{\text{min}} f(\lambda,\pi)
\end{equation}

\subsection{Structural Results}
To solve the problem in (\ref{eq:dual_relaxed_problem}), it can be shown that this $N$-dimensional problem can be decomposed into $N$ one-dimensional problems
that can be solved independently \cite{kadota2018scheduling}. Therefore, we can drop the $i$ and $k$ indices from (\ref{lagrangianfunction}) and simply investigate the following one-dimensional problem:
\begin{equation}\label{eq:individual_dual_problem}
\underset{\pi \in \Pi}{\text{min}} \limsup\limits_{T\rightarrow\infty}\frac{1}{T} \mathbb{E}^{\pi}\left[ \sum_{t=0}^{T-1} s^{\pi}(t)+\lambda a^{\pi}(t) \mid s(0)\right]
\end{equation}
It turns out that the above one dimensional problem can be cast into an infinite horizon average cost Markov Decision Process (\textbf{MDP}) that is defined as follows:
\begin{itemize}
\item \textbf{States}: The state of the MDP at time $t$ is the age of the user $s(t)$ that can take any integer value strictly higher than $0$. Therefore, the considered state space is countable and infinite.
\item \textbf{Actions}: The action at time $t$, denoted by $a(t)$, indicates if a transmission is attempted (value $1$) or the user
remains idle (value $0$).
\item \textbf{Transitions probabilities}: The transitions probabilities
between the different states have been previously detailed
in Section \ref{sec:systemModel}.
\item \textbf{Cost}: The cost function at time $t$ is designated by $C(s(t),a(t))=s(t)+\lambda a(t)$.
\end{itemize}
To solve this MDP, the authors in \cite{kadota2018scheduling} have leveraged the Bellman equation and studied the characteristics of the value function involved. Based on the particularity of the value function, the following result was found: 
\begin{proposition}\label{prop:threshold_policy}
The optimal policy that solves problem \eqref{eq:individual_dual_problem} is of a threshold nature.
\end{proposition}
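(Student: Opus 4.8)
The plan is to analyze the MDP in \eqref{eq:individual_dual_problem} through its discounted counterpart and then pass to the average-cost regime via a vanishing-discount argument. For a discount factor $\beta\in(0,1)$ I would write the discounted Bellman optimality equation
\begin{equation*}
V_\beta(s)=s+\min\Bigl\{\beta V_\beta(s+1),\ \lambda+\beta p V_\beta(1)+\beta(1-p)V_\beta(s+1)\Bigr\},
\end{equation*}
where $p$ is the user's success probability, and establish that $V_\beta$ is well defined and non-decreasing in $s$. This monotonicity is the structural heart of the argument, and I would obtain it by value iteration: starting from $V_\beta^{(0)}\equiv 0$ and defining $V_\beta^{(m+1)}$ by plugging $V_\beta^{(m)}$ into the right-hand side above, a short induction shows that every iterate is non-decreasing in $s$ (the additive term $s$ is increasing, and a pointwise minimum of two non-decreasing functions of $s$ is non-decreasing), while the iterates increase in $m$ to $V_\beta$ since the per-stage cost is non-negative.

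Given monotonicity, the structure is immediate: comparing the two terms inside the minimum, transmitting is optimal in state $s$ exactly when $\lambda\le \beta p\bigl(V_\beta(s+1)-V_\beta(1)\bigr)$; since the right-hand side is non-decreasing in $s$, the set of states in which the discounted-optimal policy transmits is upward closed, i.e.\ of the form $\{s\ge n_\beta\}$ for some threshold $n_\beta\in\{1,2,\dots\}\cup\{\infty\}$ (the value $\infty$, corresponding to the degenerate ``never transmit'' policy, arises for large $\lambda$ and is still of threshold type).

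Next I would transfer this to the average-cost problem \eqref{eq:individual_dual_problem}. Since the per-stage cost $s+\lambda a$ is non-negative and unbounded, I would verify the standard (Sennott-type) conditions for the vanishing-discount approach: the cost is bounded below, and the relative value functions $h_\beta(s):=V_\beta(s)-V_\beta(1)$ are finite and uniformly bounded along a sequence $\beta\uparrow 1$ — this last point uses $p>0$, which guarantees a finite expected hitting time of the reference state $s=1$ from every state. A monotone/diagonal limit then produces a solution $(g,h)$ of the average-cost optimality equation together with an average-cost optimal stationary policy obtained as a limit of the discounted-optimal policies; because each of those is a threshold policy and a limit of upward-closed sets is upward closed, the resulting average-cost optimal policy is again of threshold nature.

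The main obstacle I anticipate is the analysis over the countably infinite, unbounded state space: justifying convergence of value iteration, carrying the monotonicity of $V_\beta^{(m)}$ through to $V_\beta$ and then through the $\beta\uparrow 1$ limit, and, above all, establishing the uniform bound on the relative value functions $h_\beta$. Positivity of the success probability enters crucially there, and some care is needed around the reference state $s=1$ because of the $V_\beta(1)$ term appearing in the ``transmit'' branch of the Bellman equation. Once monotonicity of the value function is in hand, the threshold conclusion itself is essentially free; the bulk of the rigor lies in these limiting arguments.
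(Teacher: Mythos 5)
Your proposal is correct, and its structural core---monotonicity of the value function in the age $s$, hence an upward-closed set of states where transmitting beats idling, hence a threshold---is exactly the idea underlying the result the paper relies on. The difference is in packaging: the paper does not prove Proposition \ref{prop:threshold_policy} itself but defers entirely to \cite[Proposition~14]{kadota2018scheduling}, and the authors' own (unused) sketch of the argument runs value iteration directly on the average-cost Bellman equation, asserting convergence of the iterates $V_t$ to a solution $V$ and reading off the threshold from $\Delta V(s)=\lambda+pV(1)-pV(s+1)$ being decreasing. You instead go through the $\beta$-discounted problem and a vanishing-discount (Sennott-type) limit. What your route buys is precisely the rigor that the direct average-cost sketch glosses over on a countably infinite state space with unbounded per-stage cost: existence of a solution to the average-cost optimality equation and of an optimal stationary policy are not free there, and your verification via finite expected hitting times of the reference state (using $p>0$) is the right way to get them. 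One small imprecision: you do not need (and will not get) a bound on $h_\beta(s)=V_\beta(s)-V_\beta(1)$ that is uniform over all states $s$; Sennott's conditions only require, for each fixed $s$, a bound $h_\beta(s)\le M(s)<\infty$ uniform in $\beta$, together with a uniform lower bound---here $h_\beta\ge 0$ comes for free from the monotonicity you already established, and the ``transmit until success'' comparison policy gives $M(s)$ of roughly geometric-sum size. With that reading, the limiting argument goes through and the limit of threshold (upward-closed) policies is again of threshold type, so your proof is a sound, self-contained alternative to the paper's citation.
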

\begin{proof}
See \cite[Proposition~14]{kadota2018scheduling}.  
\end{proof}
The above results tell us that there exists an integer $l_k \in \mathbb{N}^*$ such that by only letting users of class $k$ with an age larger or equal to $l_k$ to transmit, we attain the optimal operating point of (\ref{eq:individual_dual_problem}). These results are pivotal to proceed with establishing the Whittle's index policy.

\subsection{Indexability and Whittle's Index Expressions}
To proceed toward our goal, one has to analyze the behavior of the MDP when a threshold policy is adopted. To that end, we note that for any fixed threshold $n$, the MDP can be modeled through a Discrete Time Markov Chain (\textbf{DTMC})
where:
\begin{itemize}
\item The state is the age $s(t)$.
\item For any state $s(t)<n$, the user is idle. On the other hand, when $s(t)\geq n$, the user is scheduled. 
\end{itemize}
The DTMC is reported in Fig. \ref{dtmc}. 
\begin{figure}[!ht]
\center
\includegraphics[scale=0.65]{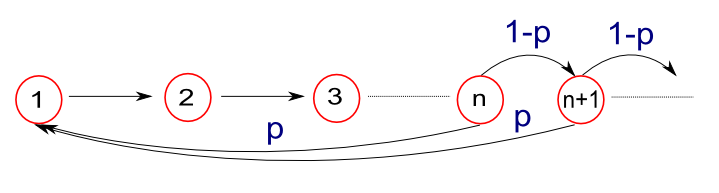}
\caption{The states transition when a threshold policy is adopted}
\label{dtmc}
\end{figure}
To be able to prove the indexability property and find the Whittle's index expression, one has to find the average objective function in (\ref{eq:individual_dual_problem}) when a threshold policy is adopted. To that end, we provide the following propositions.
\begin{proposition}\label{prop:stat_distribution}
For a fixed threshold $n$, the stationary distribution $u^n$ of the DMTC when the decoding success probability is equal to $p$ is:
\begin{equation}
 u^n(i)=\left\{
    \begin{array}{ll}
        \frac{p}{np+1-p} & \text{if} \ 1 \leq i \leq n  \\
        (1-p)^{i-n} \frac{p}{np+1-p} & \text{if} \ i \geq n
    \end{array}
\right.
\end{equation}
\end{proposition}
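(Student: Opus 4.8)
The plan is to solve the global balance equations of the DTMC in Fig.~\ref{dtmc} directly and then invoke uniqueness of the stationary distribution of an irreducible positive recurrent chain. First I would make the threshold-$n$ dynamics explicit: for $1\le i\le n-1$ the user is idle, so the chain moves deterministically $i\to i+1$; for $i\ge n$ the user transmits, so $i\to 1$ with probability $p$ (successful decoding) and $i\to i+1$ with probability $1-p$ (decoding error). Consequently the only predecessor of any state $j\ge 2$ is $j-1$, while the only way to enter state $1$ is through a successful transmission from some state $i\ge n$.

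Next I would write $u^n=u^nP$ case by case. For $2\le j\le n$ the unique predecessor $j-1$ is idle, so $u^n(j)=u^n(j-1)$; iterating gives $u^n(1)=u^n(2)=\cdots=u^n(n)$. For $j>n$ the unique predecessor $j-1$ is a transmitting state that feeds mass $(1-p)\,u^n(j-1)$ into $j$, so $u^n(j)=(1-p)\,u^n(j-1)$ and hence $u^n(j)=(1-p)^{\,j-n}u^n(n)$ for all $j\ge n$. The balance equation at state $1$, namely $u^n(1)=p\sum_{i\ge n}u^n(i)$, is then automatically satisfied and need not be imposed separately — a point I would note in passing. At this stage every $u^n(i)$ is expressed through the single unknown $u^n(n)$, and imposing normalization, $(n-1)u^n(n)+\big(\sum_{i\ge n}(1-p)^{\,i-n}\big)u^n(n)=\big(n-1+\tfrac1p\big)u^n(n)=1$, yields $u^n(n)=\frac{p}{np+1-p}$, which reproduces exactly the two-branch formula in the statement (the value at $i=n$ agreeing with both branches since $(1-p)^0=1$). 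Finally I would remark that the chain is irreducible — every state communicates with state $1$ via the deterministic increments and, from states $\ge n$, a successful transmission — and that the summable solution just built certifies positive recurrence, so this probability vector is the unique stationary distribution.

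The computation is elementary, so I do not anticipate a genuine obstacle; the only places requiring a little care are the boundary index $i=n$, which belongs to both regimes, and the degenerate case $n=1$, where state $1$ is itself a transmitting state and the argument collapses to $u^1(i)=(1-p)^{\,i-1}p$, still matching the claimed expression. Accordingly, the ``hard part'' here is merely the bookkeeping in the case split of the balance equations and the verification that no additional constraint beyond normalization is needed.
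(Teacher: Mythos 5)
Your proposal is correct and follows exactly the route the paper indicates, namely solving the full balance equations of the threshold-$n$ DTMC and normalizing, with the extra (welcome but not essentially different) care about irreducibility, the redundant balance equation at state $1$, and the boundary cases $i=n$ and $n=1$. The resulting expression $u^n(n)=\frac{p}{np+1-p}$ and the geometric tail $(1-p)^{i-n}u^n(n)$ match the stated distribution, so no gap remains.
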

\begin{proof}
The results can be easily obtained by solving the full balance equations.
\end{proof}
The next step consists of calculating the average objective function in (\ref{eq:individual_dual_problem}) when a threshold policy is employed. 

\begin{proposition}\label{prop:stat_cost}
For a fixed threshold $n$, the average cost of the threshold policy of the problem (\ref{eq:individual_dual_problem}) is:
\begin{equation}
\overline{C}(n,\lambda)=\frac{[(n-1)^2+(n-1)]p^2+2p(n-1)+2}{2p((n-1)p+1)}+\frac{\lambda}{np+1-p}
\label{costgamma}
\end{equation}
\end{proposition}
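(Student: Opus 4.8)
The plan is to obtain $\overline{C}(n,\lambda)$ by evaluating the long-run average cost directly against the stationary distribution $u^n$ furnished by Proposition~\ref{prop:stat_distribution}. Under a fixed threshold $n$, the induced DTMC of Fig.~\ref{dtmc} has a single positive recurrent class (state $1$ is reached with probability $p$ from every state $i\ge n$ and deterministically after finitely many steps from every state $i<n$), and $u^n$ has a geometric tail, so the unbounded cost $C(s,a)=s+\lambda a$ is $u^n$-integrable. Hence the average-cost criterion in \eqref{eq:individual_dual_problem} is well defined and, by the ergodic theorem for positive recurrent chains together with a standard renewal-reward / dominated-convergence argument, it equals the stationary expectation
\begin{equation}
\overline{C}(n,\lambda)=\sum_{i=1}^{\infty}u^n(i)\,i+\lambda\sum_{i\ge n}u^n(i),
\end{equation}
since under the threshold rule the action satisfies $a(i)=1$ for $i\ge n$ and $a(i)=0$ for $i\le n-1$. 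The computation therefore splits into an \emph{age term} $\sum_{i\ge1}i\,u^n(i)$ and an \emph{activation term} $\lambda\sum_{i\ge n}u^n(i)$.

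For the activation term I would use only the geometric branch of $u^n$: $\sum_{i\ge n}u^n(i)=\frac{p}{np+1-p}\sum_{j\ge0}(1-p)^j=\frac{1}{np+1-p}$, which is exactly the second summand of \eqref{costgamma}. For the age term I would split the sum at the threshold. On $\{1,\dots,n-1\}$ the mass is flat, contributing $\frac{p}{np+1-p}\sum_{i=1}^{n-1}i=\frac{p}{np+1-p}\cdot\frac{(n-1)n}{2}$. On $\{n,n+1,\dots\}$ I would substitute $i=n+j$ and use the elementary identities $\sum_{j\ge0}(1-p)^j=1/p$ and $\sum_{j\ge0}j(1-p)^j=(1-p)/p^2$ to get $\frac{p}{np+1-p}\bigl(\frac{n}{p}+\frac{1-p}{p^2}\bigr)$. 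Adding the two pieces, placing everything over the common denominator $2p(np+1-p)$, and using $(n-1)n=(n-1)^2+(n-1)$ together with $2pn+2(1-p)=2p(n-1)+2$ collapses the numerator to $[(n-1)^2+(n-1)]p^2+2p(n-1)+2$, i.e. the first summand of \eqref{costgamma}.

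The two geometric-series evaluations and the final regrouping are routine; the only genuine point requiring care is the justification that the $\limsup$-average cost equals $\mathbb{E}_{u^n}[C]$ rather than merely being bounded by it, which rests on positive recurrence of the threshold chain (guaranteed since $u^n$ sums to one) and on $u^n$-integrability of the unbounded cost (guaranteed by the geometric tail). A minor bookkeeping subtlety is the index $i=n$, which appears in both branches of the piecewise formula in Proposition~\ref{prop:stat_distribution}; since the branches agree there, one fixes a convention (say, use the geometric branch for $i\ge n$ and the flat branch for $i\le n-1$) so that no term is counted twice.
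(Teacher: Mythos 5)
Your proposal is correct and follows essentially the same route as the paper: the paper's proof simply invokes the identity $\overline{C}(n,\lambda)=\sum_{i\geq 1} i\,u^n(i)+\lambda\sum_{i\geq n}u^n(i)$ together with the stationary distribution of Proposition~\ref{prop:stat_distribution}, which is exactly your decomposition into age and activation terms followed by the geometric-series algebra. Your additional remarks on ergodicity and $u^n$-integrability only make explicit what the paper leaves implicit.
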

\begin{proof}
The results can be concluded by leveraging the stationary distribution expressions and the fact that $\overline{C}(n,\lambda)=\sum_{i=1}^{+\infty} iu^n(i)+\lambda \sum_{i=n}^{+\infty}u^n(i)$.
\end{proof}
Using the stationary distribution, and the average cost, one can then prove the indexability property of the problem, which ensures
the existence of the Whittle’s indices. Before providing these results, we first lay out the definition of the aforementioned property.
\begin{mydef}[Indexability]
For a fixed $\lambda$, consider the vector $\boldsymbol{l}(W)=(l_1(\lambda),\ldots,l_K(\lambda))$ where $l_k(\lambda)$ is the optimal threshold for the problem in (\ref{eq:individual_dual_problem}) for each user of class $k$. We define $D^k(\lambda)=\{s\in\mathbb{N}^*:s<l_k(\lambda)\}$ as the set of states for which the optimal action is to not schedule the users belonging to class $k$. The one-dimensional problem associated with these users is said to be indexable if $D^k(\lambda)$ is increasing in $\lambda$. More specifically, the following should hold:
\begin{equation}
\lambda'\leq \lambda \Rightarrow D^k(\lambda) \subseteq D^k(\lambda)
\end{equation}
\label{indexabilitydefinition}
\end{mydef}
The indexability property for the problem in (\ref{eq:individual_dual_problem})
was established by the authors in \cite{kadota2018scheduling}. With the Whittle's indices ensured to exist, one can then leverage the stationary distribution and the average cost reported in Proposition \ref{prop:stat_distribution} and \ref{prop:stat_cost} to derive the Whittle's index expressions as previously done in \cite{kadota2018scheduling} and \cite{maatouk2020optimality}. 
\begin{proposition}\label{prop:whittle_index}
For any given class $k$, the Whittle's index expression of state $i$ is:
\begin{equation}
W^k(i)=\frac{(i-1)p_k i}{2}+i
\end{equation}
\end{proposition}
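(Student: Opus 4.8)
The plan is to identify the Whittle's index of state $i$ as the value of the Lagrangian multiplier $\lambda$ that makes the one-dimensional problem \eqref{eq:individual_dual_problem} indifferent between the two threshold policies that disagree only in state $i$: the threshold $n=i$, which schedules a user whose age equals $i$, and the threshold $n=i+1$, which leaves that user idle. By Proposition~\ref{prop:threshold_policy} the optimal policy is a threshold policy for every $\lambda$, and by the indexability established in \cite{kadota2018scheduling} the optimal threshold $l_k(\lambda)$ is nondecreasing in $\lambda$ (larger $\lambda$ penalizes transmissions, hence pushes the threshold up). Consequently the crossover value $W^k(i)$ at which state $i$ leaves the active set is precisely the $\lambda$ solving $\overline{C}(i,\lambda)=\overline{C}(i+1,\lambda)$ with $p=p_k$, where $\overline{C}$ is the stationary average cost of Proposition~\ref{prop:stat_cost}.

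First I would substitute $n=i$ and then $n=i+1$ into the closed form \eqref{costgamma}. Writing $\overline{C}(n,\lambda)=A(n)+\lambda\,g(n)$, where $g(n)=\frac{1}{(n-1)p+1}$ is the stationary transmission frequency and $A(n)$ is the $\lambda$-independent age term, the indifference equation rearranges to $W^k(i)=\dfrac{A(i+1)-A(i)}{g(i)-g(i+1)}$. The denominator telescopes at once: $g(i)-g(i+1)=\dfrac{p}{\big((i-1)p+1\big)(ip+1)}$.

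The only substantive computation is the numerator $A(i+1)-A(i)$. I would place it over the common denominator $2p\big((i-1)p+1\big)(ip+1)$, expand the two products in the resulting numerator, and check that the $p^1$ terms and the constant terms cancel while the cubic and quadratic terms collapse to $i(i-1)p^3+2ip^2=ip^2\big((i-1)p+2\big)$; this yields $A(i+1)-A(i)=\dfrac{ip\big((i-1)p+2\big)}{2\big((i-1)p+1\big)(ip+1)}$. Dividing by $g(i)-g(i+1)$ cancels the common factor $\dfrac{p}{\big((i-1)p+1\big)(ip+1)}$ and leaves $W^k(i)=\dfrac{i\big((i-1)p+2\big)}{2}=\dfrac{(i-1)p\,i}{2}+i$, which with $p=p_k$ is the asserted formula.

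I expect the main obstacle to be purely one of bookkeeping: organizing the polynomial expansion of the two numerator products $\big(i(i+1)p^2+2pi+2\big)\big((i-1)p+1\big)$ and $\big((i-1)ip^2+2p(i-1)+2\big)(ip+1)$ so that the spurious lower-order terms visibly vanish. A minor additional point worth stating is that the indifference $\lambda$ is well defined and unique because $g$ is strictly decreasing (so $g(i)-g(i+1)>0$), and that it is nonnegative, so it is a legitimate Whittle's index; both facts follow from $p\in(0,1]$ and $i\ge 1$.
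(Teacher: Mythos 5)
Your proposal is correct and follows essentially the route the paper intends: the paper defers the proof to \cite{kadota2018scheduling}, but explicitly states that the index is obtained by leveraging the stationary distribution and the average cost of Propositions \ref{prop:stat_distribution} and \ref{prop:stat_cost}, which is exactly your indifference computation between thresholds $i$ and $i+1$. Your algebra checks out: the numerator difference indeed reduces to $ip^2\bigl((i-1)p+2\bigr)$, giving $W^k(i)=\frac{(i-1)p_k i}{2}+i$.
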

\begin{proof}
See \cite[pp.~10]{kadota2018scheduling}. 
\end{proof}
With the Whittle's index expression derived, we can now establish the Whittle's index scheduling policy. This can be summarized in the following algorithm description.
\begin{algorithm}
\caption{Whittle's index scheduling policy}\label{euclid}
\begin{algorithmic}[1]
\State At each time slot $t$, calculate the Whittle's index of all users in the network using (\ref{prop:whittle_index}).
\State Schedule the $M$ users having the highest Whittle's index values at time $t$, with ties broken arbitrarily.
\end{algorithmic}
\end{algorithm}\\
Although the above scheduling policy is easy to implement,
it remains sub-optimal. Accordingly, characterizing its performance
compared to the optimal policy is important. Equipped with the above results and notations, we can now tackle the main issue that we aim to address in our paper: the asymptotic optimality of this policy. 

\section{Asymptotic Optimality of the Whittle's Index Policy}
\subsection{Optimal Solution of the Relaxed Problem}\label{sec:opt_sol_relax_prob}
To be able to prove the asymptotic optimality of the Whittle's index policy, one has to compare its performance to the optimal policy that solves (\ref{eq:original_problem}). However, as previously explained, the optimal policy of (\ref{eq:original_problem}) is not known. To circumvent this, and to have a benchmark performance to compare to, we note that the following always holds:
\begin{equation}
\frac{C^{RP,N}}{N}\leq \frac{C^{OP,N}}{N}\leq \frac{C^{WIP,N}}{N}
\end{equation}
where $\frac{C^{WIP,N}}{N}$ is the average age per-user under the Whittle's index policy, $\frac{C^{OP,N}}{N}$ is the optimal expected average age per-user of the original problem \eqref{eq:original_problem}, and $\frac{C^{RP,N}}{N}$ is the optimal average age per-user of the relaxed problem (\ref{eq:constraint_relaxed}). Thus, in order to show the asymptotic optimality, it is sufficient to prove that for a large number of users $N$, $\frac{C^{WIP,N}}{N}$ converges to $\frac{C^{RP,N}}{N}$. To that end, the next task is to find an expression of $C^{RP}=\frac{C^{RP,N}}{N}$. For this purpose, we provide the following proposition.

\begin{proposition}\label{prop:optmial_and_cost_relaxed_problem}
The optimal solution of the relaxed problem is of type threshold for each class. More precisely, it is a linear combination between two threshold vectors $(l_1^1,\cdots,l_K^1)$ and $(l_1^2,\cdots,l_K^2)$ such that:
\begin{itemize}
\item There exists a unique real value $W^*\in\mathbb{R}$, a class $m$ and state $p$ such that $W^*=W^m(p)$.
\item The expressions of $l_k^1$ and $l_k^2$ are as follows:
\begin{equation*}
l_k^1=\underset{i\in\mathbb{N}^*}{\text{argmax}} \{W^k(i):W^k(i)\leq W^*\}+1 \quad\forall k\in\{1,\ldots,K\}
\end{equation*}
\begin{equation}
l_k^2=\underset{i\in\mathbb{N}^*}{\text{argmax}} \{W^k(i):W^k(i)<W^*\}+1 \quad\forall k\in\{1,\ldots,K\}
\label{twoexpressions}
\end{equation}
\item There exists a unique $0 < \theta \leq 1$ that satisfies $\theta \sum_{k=1}^{K} \gamma_k \sum_{i=l_k^1}^{+\infty} u_k^{l_k^1}(i)+ (1-\theta)\sum_{k=1}^{K} \gamma_k \sum_{i=l_k^2}^{+\infty} u_k^{l_k^2}(i)=\alpha$, where $u_k^n$ is the stationary distribution of the age given a threshold $n$ for class $k$.  
\end{itemize}
\end{proposition}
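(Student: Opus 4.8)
The plan is to solve the relaxed problem via Lagrangian duality, exploiting the threshold structure of Proposition~\ref{prop:threshold_policy} and the indexability property, and then to recover the primal optimum through complementary slackness.

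\textbf{Step 1: decompose and write the dual.} For a fixed multiplier $\lambda\ge 0$ I would use the decomposition of $\min_{\pi\in\Pi}f(\lambda,\pi)$ into the $N$ one-dimensional problems \eqref{eq:individual_dual_problem}; by Proposition~\ref{prop:threshold_policy} each is solved by a threshold policy, and by the indexability property (Definition~\ref{indexabilitydefinition}) together with the monotonicity of $i\mapsto W^k(i)$ the optimal class-$k$ threshold at multiplier $\lambda$ is $l_k(\lambda)=\min\{i\in\mathbb{N}^*:W^k(i)\ge\lambda\}$, with \emph{both} actions optimal at a state $i$ where $W^k(i)=\lambda$. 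Using Proposition~\ref{prop:stat_cost}, the per-user dual function is
\[
 g(\lambda)=\sum_{k=1}^K\gamma_k\min_{n\in\mathbb{N}^*}\overline{C}(n,\lambda)-\lambda\alpha .
\]

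\textbf{Step 2: locate the dual optimum $W^*$.} From \eqref{costgamma}, $\overline{C}(n,\lambda)$ is affine in $\lambda$ with slope $\frac{1}{np+1-p}=\sum_{i\ge n}u^n(i)$ (Proposition~\ref{prop:stat_distribution}), which is strictly decreasing in $n$; hence each $\min_n\overline{C}(n,\cdot)$, and therefore $g$, is concave and piecewise linear, and a subgradient of $g$ at $\lambda$ equals $R(\lambda)-\alpha$, where $R(\lambda)=\sum_k\gamma_k\sum_{i\ge l_k(\lambda)}u_k^{l_k(\lambda)}(i)$ is the scheduled fraction under the optimal threshold policy. The map $\lambda\mapsto R(\lambda)$ is a nonincreasing step function whose jump points are exactly the Whittle values $\{W^k(i)\}$; since $R(0)=1>\alpha$ (because $\alpha=M/N<1$) and $R(\lambda)\to0$ as $\lambda\to\infty$, $g$ increases then decreases and its maximizer is a unique point $W^*$ sitting at one of these jump points, i.e.\ $W^*=W^m(p)$ for some class $m$ and state $p$. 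Reading off the optimal thresholds just to the left and just to the right of $W^*$ then gives exactly the vectors $(l_k^2)$ and $(l_k^1)$ of \eqref{twoexpressions}, with $l_k^2\le l_k^1$ and $l_m^2<l_m^1$.

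\textbf{Step 3: recover the primal optimum and the weight $\theta$.} At $\lambda=W^*$ both threshold vectors are inner-optimal, so the randomized stationary policy that independently assigns each class-$k$ user threshold $l_k^1$ with probability $\theta$ and $l_k^2$ with probability $1-\theta$ is inner-optimal for every $\theta\in[0,1]$. Evaluating the one-sided slopes of $g$ at $W^*$ gives $R_1:=\sum_k\gamma_k\sum_{i\ge l_k^1}u_k^{l_k^1}(i)\le\alpha\le R_2:=\sum_k\gamma_k\sum_{i\ge l_k^2}u_k^{l_k^2}(i)$, so there is $\theta\in(0,1]$ with $\theta R_1+(1-\theta)R_2=\alpha$; strict monotonicity of $n\mapsto\sum_{i\ge n}u^n(i)$ makes $\theta$ unique. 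Complementary slackness forces the relaxed optimum to saturate the average constraint, and by strong duality for the constrained average-cost MDP this mixed threshold policy is optimal, which establishes the three bullet points (and then $C^{RP}$ follows by plugging $l_k^1,l_k^2$ into Proposition~\ref{prop:stat_cost}).

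\textbf{Expected main obstacle.} The hard part is Step~3's appeal to strong duality and, in particular, showing that the relaxed optimum is attained by a \emph{finite mixture of stationary deterministic} policies when the state space is countably infinite and the criterion is the $\limsup$ time-average. I would need to verify that the threshold chains are positive recurrent (so that the $\limsup$ is a genuine limit equal to the stationary cost of Propositions~\ref{prop:stat_distribution}--\ref{prop:stat_cost}) and then either invoke the occupation-measure/linear-programming formulation of constrained MDPs and its extreme-point structure, or argue directly that no causal policy can beat $\max_{\lambda\ge0}g(\lambda)$. Establishing that the maximum of $g$ is a single point (no flat piece), hence the uniqueness of $W^*$, and the uniqueness of $\theta$ both rely on the strict monotonicities above; the remaining manipulations are routine calculus on \eqref{costgamma} and the stationary distribution of Proposition~\ref{prop:stat_distribution}.
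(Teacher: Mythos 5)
Your Lagrangian route---decompose the dual into single-user threshold problems, note that the per-user dual function is concave and piecewise linear with subgradient $R(\lambda)-\alpha$, place the optimal multiplier $W^*$ at a Whittle-index jump point, and then mix the two adjacent threshold policies so that $\theta R_1+(1-\theta)R_2=\alpha$ closes the duality gap via complementary slackness---is essentially the same argument on which this proposition rests: the paper itself does not reprove it but defers to Proposition~5 of \cite{maatouk2020optimality}, whose proof proceeds along these very lines. The only caveat is the one you flag yourself, namely the attainment and weak-duality details in the countable-state average-cost setting (positive recurrence of the threshold chains and the direct argument that no causal policy beats $\max_{\lambda\ge 0}g(\lambda)$), which the cited proof carries out and your proposal only sketches.
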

\begin{proof}
See \cite[Proposition~5]{maatouk2020optimality}.
\end{proof}
Thanks to this proposition, we can conclude that the optimal per-user cost of the relaxed problem has the following expression:
\begin{equation}
C^{RP}=\sum_{k=1}^{K} \gamma_k \sum_{i=1}^{+\infty} [\theta u_k^{l_k^1}(i)+(1-\theta) u_k^{l_k^2}(i)]i 
\label{costrelaxedproblemmm}
\end{equation}
By leveraging these results, we can proceed with characterizing the performance of the Whittle's index policy.
\color{black}
\subsection{Global Optimality of the Whittle's index policy}\label{sec:Whittle_Index_optimal}
This section constitutes the main contribution of the paper where we show the asymptotic optimality of the Whittle's index policy. The idea is to show that the performance of this policy converges to $C^{RP}$ when $N$ is large and the ratio $\alpha=\frac{M}{N}$ is kept constant.

We let $Z_i^{k,N}(t)$ denote the proportion of users belonging to class $k$ in state $i$ at time $t$. In other words, it denotes the ratio of the number of users in class $k$ having an age equal to $i$ to the total number of users $N$. We have that $\boldsymbol{Z}^N(t)=(Z^{1,N}(t),.....,Z^{K,N}(t))$ with $Z^{k,N}(t)=(Z_1^{k,N}(t),......,Z^{k,N}_{m^k(t)}(t))$, where $m^k(t)$ is the highest state at time $t$ in class $k$ and $\sum_{i=0}^{m^k(t)} Z_i^{k,N}(t)=\gamma_k$ for each class $k$. We also denote by $\boldsymbol z^*$ the proportion corresponding to the optimal policy of the relaxed problem. Thus, the elements of the vector $\boldsymbol z^*$ are exactly the set $\{\gamma_k(\theta u_k^{l_k^1}(i)+(1-\theta) u_k^{l_k^2}(i))\}_{1\leq k \leq K \atop 1 \leq i}$ where $i$ and $k$ refer to the user $i$ and class $k$ respectively. This can be easily concluded from the results previously laid out in eq. (\ref{costrelaxedproblemmm}).

In the sequel, we will establish the global optimality for two different classes of users where $p_1$ and $p_2$ are the successful transmission probabilities of class $1$ and $2$ respectively ($p_1 > p_2$).
In order to prove that, we show that when the Whittle's index policy is adopted, $\boldsymbol Z^N(t)$ converges in probability to $\boldsymbol z^*$ when $N$ and $t$ are very large. To that extent, we follow the steps below:
\begin{itemize}
\item We show that the fluid approximation of $\boldsymbol Z^N(t)$, denoted by $\boldsymbol z(t)$, converges to $\boldsymbol z^*$. Such a convergence has been proven in previous works under restrictive mathematical assumptions that can only be verified numerically \cite{ouyang2015downlink}. We escape these assumptions as we will detail in the following. 
\item Since the relation between $\boldsymbol z(t+1)$ and $\boldsymbol z(t)$ is not linear, our approach to establish the convergence of $\boldsymbol z(t)$ involves two terms: $\alpha_1(t)$ and $\alpha_2(t)$.  These two proportions are nothing but the scheduled proportion at time $t$ of class $1$ and $2$, respectively. Note that we always have $\alpha_1(t)+\alpha_2(t)=\alpha$.  Based on Lemma \ref{lem:z_evolution}, we show that for a large enough time $t$, there exists $T_t$ such that we can find a partial relation between each element of the vector $\boldsymbol z(t+T_t)$ and terms of the sequence $\{ \alpha_k(t')\}_{k=1,2 \atop t' \leq t+T_t}$. More precisely, we prove that for $T_t$, we can express each proportion that is not scheduled at time $t+T_t$ in function of one term of $\{\alpha_k(t')\}_{k=1,2 \atop t' \leq t+T_t}$. This allows us to obtain $1-\alpha$ as a linear combination between the terms of $\{\alpha_k(t')\}_{k=1,2 \atop t' \leq t+T_t}$ at time $t+T_t$. 
\item Subsequently, we introduce in Definition \ref{def:time_max}, $T_{\max}$ that satisfies these two following properties proven in Propositions \ref{prop:interval_alternation_under_assump} and \ref{prop:alternation_Whittle_index} using Lemma \ref{lem:alpha_greater_stric_0}: the Whittle's index alternates between the two classes from state $1$ to $T_{\max}$ under a given assumption on $\alpha$; the instantaneous thresholds $l_1(.)$ and $l_2(.)$ are bounded by $T_{\max}$ at time $t+T_t$. 
\item Based on that, we derive the relation between the instantaneous thresholds at time $t+T_t$ in Proposition \ref{prop:unique_threshold_and_expression}. Taking as initial time $t+T_t=T_0$, we show by induction in Proposition \ref{prop:unique_expression_for_all_T} that, for all $T \geq T_0$, the instantaneous thresholds are less than $T_{\max}$ and that all none scheduled proportions can be expressed in function of terms of the sequence $\{ \alpha_k(t')\}_{k=1,2 \atop t' \leq T}$. Next, we define for each class $k$ a vector $A_k(T)$ composed by $\alpha_k(T)$ (the scheduled proportion at time $T$) plus the finite subset of the sequence $\{ \alpha_k(t')\}_{ t' \leq T}$ such that for all proportion of users in class $k$ at a given state at time $T$ that is not scheduled can be expressed by one element belonging to this subset. After that, we provide the relation between the elements of the vectors $A_k(T)$ and $A_k(T+1)$ in Propositions \ref{prop:elements_inclusion} and \ref{prop:inequality_satisfied_by_first_element}
\item As was mentioned in Introduction, our proof is based on Cauchy criterion which states that in the real number space $\mathbb{R}$, a given sequence $h(T)$ is convergent if and only if its terms become closer together as $T$ increases. To that extent, we show that the elements of the vector $A_k(\cdot)$ which are nothing but the terms of the sequence $\alpha_k(\cdot)$ are getting closer when $T$ increases. For that purpose, we prove that the highest and the smallest element of $A_k(T)$ converge to the same limit when $T$ grows. 
To that end, we start by establishing the convergence of the highest and the smallest element of $A_k(T)$ in Theorem \ref{theor:convergence}. Then, we demonstrate by contradiction that the highest and the smallest element of $A_k(T)$ must converge to the same limit in Proposition \ref{prop:max_alpha_less_l_2}. This last result implies that $\alpha_k(t)$ converges when $t$ scales. In light of that fact, we prove that $\boldsymbol z(t)$ converges to $\boldsymbol z^*$ in Proposition \ref{prop:z_convergence}. Finally, using Kurth theorem, we show in Proposition \ref{prop:kurth_theorem_age} that $\boldsymbol Z^N(t)$ converges to $\boldsymbol z^*$ in probability. And finally we establish in Proposition \ref{prop:optimality_whittle_index} the convergence of $\frac{C^{WIP,N}}{N}$ to $C^{RP}$.   
\end{itemize}

With the steps of our approach clarified, we can proceed with introducing the fluid limit approximation. \color{black}The fluid limit technique consists of analyzing the evolution of the expectation of $\boldsymbol Z^N(t)$ under the Whittle's index policy. For that, we define the vector $\boldsymbol z(t)$ as follows:
\begin{equation}\label{eq:fluid_approxiamtion}
\boldsymbol z(t+1)-\boldsymbol z(t)|_{\boldsymbol z(t)=\boldsymbol z}=\mathbb{E}\left[\boldsymbol Z^N(t+1)-\boldsymbol Z^N(t)|\boldsymbol Z^N(t)=\boldsymbol z\right]
\end{equation}

This above equation reveals to us that we have a sequence $\boldsymbol{z}(t)$ defined by recurrence for a fixed initial state $\boldsymbol{z}(0)$ that we should study its behavior when $t$ is very large. Hence, we end up with a function $\boldsymbol{z}(t)$ that depends on two variables, $t$ and the initial value $\boldsymbol{z}(0)$.    
To that extent, our aim is to prove that $\boldsymbol{z}(t)$ converges to $\boldsymbol{z}^*$ regardless of the initial state $\boldsymbol{z}(0)$.
We let $\boldsymbol{z}(t)=(\boldsymbol{z}^{1}(t),.....,\boldsymbol{z}^{K}(t))$ with $\boldsymbol{z}^{k}(t)=(z_1^{k}(t),......,z^{k}_{m^k(t)}(t))$ where $z_i^k(t)$ is the expected proportion of users at state $i$ in class $k$ at time $t$ with respect to the equation \eqref{eq:fluid_approxiamtion}. Accordingly we have that $\sum_{i=0}^{m^k(t)} z_i^{k}(t)=\gamma_k$ for each class $k$.\\

One can notice that $\boldsymbol{z}^*$ is a particular vector with respect to the equation \eqref{eq:fluid_approxiamtion}.

\begin{proposition}\label{prop:fixed_point}
$\boldsymbol{z}^*$ is the \textbf{unique} fixed point of the fluid approximation equation. In other words, $\boldsymbol{z}(t)=\boldsymbol{z}(t+1)$, if and only if $\boldsymbol{z}(t)=\boldsymbol{z}^*$.
\end{proposition}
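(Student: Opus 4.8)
The plan is to first make equation~\eqref{eq:fluid_approxiamtion} explicit and then read off its fixed points. At a state $\boldsymbol z$ the Whittle's index policy serves the fraction $\alpha$ of users with the largest indices, and since $W^k(\cdot)$ is strictly increasing (Proposition~\ref{prop:whittle_index}) this means that, in each class $k$, the served users are those whose age exceeds some threshold $n_k$, plus possibly a fraction of the users at a single ``marginal'' state whose index equals the cut-off value; write $\alpha_k$ for the resulting scheduled proportion of class $k$, so that $\alpha_1+\alpha_2=\alpha$. Under the fluid map a class-$k$ user at age $i$ goes to age $1$ with probability $p_k$ if it is served and to age $i+1$ otherwise, so imposing $\boldsymbol z(t+1)=\boldsymbol z(t)$ yields, within each class, a simple system of flow-balance equations whose solution is: the masses at the non-served states are all equal, the masses past the threshold decay geometrically with ratio $1-p_k$, and the inflow to state $1$ equals $p_k\alpha_k$. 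Solving them identifies the class-$k$ part of any fixed point with $\gamma_k\,u_k^{n_k}$ from Proposition~\ref{prop:stat_distribution} (or, for the one marginal class, with a convex combination $\gamma_k(\theta u_k^{l_k^1}+(1-\theta)u_k^{l_k^2})$ of two consecutive such distributions); to rule out degenerate configurations one uses that the scheduled proportion of each class is strictly positive.

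For the ``only if'' direction, observe that the thresholds $n_k$ at a fixed point are not free: they are forced by the greedy rule through a single cut-off index value, and the requirement that the total served mass equal $\alpha$ pins down that cut-off together with the interpolation weight at the marginal class. These are exactly the data $(W^*,\theta)$ that Proposition~\ref{prop:optmial_and_cost_relaxed_problem} attaches to the optimal relaxed solution, and that proposition asserts them to be \emph{unique}; hence every fixed point of \eqref{eq:fluid_approxiamtion} is a relaxed-optimal configuration in the sense of Proposition~\ref{prop:optmial_and_cost_relaxed_problem}, i.e.\ equals $\boldsymbol z^*$ as given by \eqref{costrelaxedproblemmm}. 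For the ``if'' direction, at $\boldsymbol z^*$ the served fraction is exactly $\alpha$ by the defining constraint of Proposition~\ref{prop:optmial_and_cost_relaxed_problem}, and the served users are precisely those of age $\geq l_k^1$ in each class together with a fraction of the marginal state; substituting $\boldsymbol z^*=\gamma_k(\theta u_k^{l_k^1}+(1-\theta)u_k^{l_k^2})$ into the update rule and using that $u_k^{l_k^1}$ and $u_k^{l_k^2}$ are invariant under their respective threshold dynamics, a short computation verifies every balance equation, so $\boldsymbol z^*$ is a fixed point.

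I expect the real work to be in two places: (i) showing that at a fixed point the scheduler's decision is genuinely of the greedy ``single cut-off plus at most one fractional state'' form --- in particular that the capacity is fully used and that the per-class distribution is exactly a threshold stationary distribution, which is where the strict monotonicity of the indices, the assumption $p_1\neq p_2$, and the strict positivity of the scheduled proportions are needed; and (ii) the bookkeeping that matches the fractional weight used at the marginal state with the weight $\theta$ of \eqref{costrelaxedproblemmm}, so that the fixed-point balance equations coincide with the conditions of Proposition~\ref{prop:optmial_and_cost_relaxed_problem}. Once these are settled, uniqueness of the fixed point is inherited from the uniqueness statement in Proposition~\ref{prop:optmial_and_cost_relaxed_problem}, and the forward implication is the routine check sketched above.
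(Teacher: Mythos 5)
Your sketch is correct and is essentially the argument the paper intends: the paper's own ``proof'' is only a pointer to the methodology of \cite[Lemma~9]{ouyang2015downlink}, which is exactly what you reconstruct --- at a fixed point the flow-balance of the fluid map forces, in each class, equal masses on the unserved states, inflow $p_k\alpha_k$ at state $1$, and geometric decay with ratio $1-p_k$ past the cut-off (i.e.\ a threshold stationary distribution from Proposition \ref{prop:stat_distribution}, with at most one fractional marginal state), and the full-capacity condition together with the uniqueness of $(W^*,\theta)$ in Proposition \ref{prop:optmial_and_cost_relaxed_problem} then identifies any fixed point with $\boldsymbol z^*$, the converse being the direct substitution you describe. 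The two items you flag as ``the real work'' (the greedy single-cut-off structure at a fixed point, and matching the fractional service weight at the marginal state with $\theta$) are indeed the only content and are routine computations of the kind you outline, with the degenerate case $\alpha_k=0$ excluded simply because an unserved class cannot have a stationary mass distribution.
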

\begin{proof}
The proof follows the same methodology of the paper \cite[Lemma~9]{ouyang2015downlink} 
\end{proof}

According to this proposition, it is sufficient to show that $\boldsymbol{z}(t)$ converges starting from any initial state $\boldsymbol{z}(0)$, as the only eventual finite limit of $\boldsymbol{z}(t)$ when $t$ tends to $+\infty$ is the fixed point of the equation \eqref{eq:fluid_approxiamtion}, $\boldsymbol{z}^*$.\\

\begin{remark}
We highly emphasize that the proportion $\alpha_k(t)$ and $1-\alpha$ refer to the scheduled users'  proportion at time $t$ in class $k$ and the non scheduled users'  proportion either for class $1$ or $2$ respectively. Meanwhile, for any other proportion $A$, it refers only to the number of users in this proportion over the total users' number of the system whatever the different states of users that contains. Having said that, $A=B$ means that they are equal in terms of proportion, while they can contain users in different states.
\end{remark}

In the following, we prove that the fluid approximation vector of $\boldsymbol{Z}^N(t)$, $\boldsymbol{z}(t)$ under the Whittle Index Policy converges starting from any initial state. We prove this result for 2 different classes of users where $p_1$ and $p_2$ are the successful transmission probabilities of the class 1 and 2 respectively ($p_1 > p_2$), given a sufficient condition on $\alpha$.
Throughout this section, we denote by $w_1(n)$ and $w_2(n)$, the Whittle's index, whose expression is given in Proposition \ref{prop:whittle_index}, of state $n$ in class 1 and class 2 respectively.
We need to prove that $z_i^k(t)$ converges for each state $i$ in class $k$.\\

Now, focusing on the Whittle index policy, we can see it as an instantaneous threshold policy for each class, where the thresholds vary over time $t$. Moreover, under the Whittle index policy, the proportion of users that are scheduled at each time slot $t$ is fixed and equals to $\alpha$ since the number of scheduled users at each time slot $t$ is $\alpha N$. This proportion $\alpha$ contains the users with the highest Whittle index values. In that respect, we define $\alpha_1(t)$ and $\alpha_2(t)$ the proportion of users in class 1 and class 2 respectively at time $t$ with the highest Whittle index values such that $\alpha_1(t)+\alpha_2(t)=\alpha$.
The remaining proportion of users which are not scheduled at each time slot $t$, which is equal to $1-\alpha$, contains the users with the smallest Whittle index values. Now, regarding this proportion, we give its decomposition into proportions of users at different states in different classes.    
Denoting by $l_1(t)$ and $l_2(t)$ at time $t$ the instantaneous threshold integers under Whittle index policy, then there exists two real values between 0 and 1, $\beta(t)$ and $\gamma(t)$, with $\gamma(t)=1$ and $0 < \beta(t) \leq 1$, or $0 < \gamma(t) \leq 1$ and $\beta(t)=1$, such that: 
\begin{equation}
\sum_{i=1}^{l_1(t)-1} z^1_i(t)+ \sum_{i=1}^{l_2(t)-1} z^2_i(t)+\beta(t) z^1_{l_1(t)}(t)+\gamma(t) z^2_{l_2(t)}(t)=1-\alpha
\end{equation}
and $\{z_i^1\}_{1 \leq i \leq l_1(t)}\cup \{z_i^2\}_{1 \leq i \leq l_2(t)}$ is exactly the set $\{z_i^k: w_k(i) \leq \max(w_1(l_1(t),w_2(l_2(t))\}$. 



In paper \cite{maatouk2020optimality}, in order to prove the convergence of $\boldsymbol z(t)$, the authors assume that $\boldsymbol z(0)$ is within a precise neighborhood of $\boldsymbol z^*$ and they consider that the number of states is finite. These assumptions allow them to find an easy linear relation between $\boldsymbol z(t)$ and $\boldsymbol z(t+1)$ ($\boldsymbol z(t+1)=\boldsymbol Q \boldsymbol z(t)+\boldsymbol c$ see \cite[Section~IV-C]{maatouk2020optimality}), and then deduce the convergence of $\boldsymbol z$ by establishing that the spectral value of $\boldsymbol Q$ is less strictly than one. In our case, as we aim to prove the convergence of $\boldsymbol z$ from any initial state, the relation between $\boldsymbol z(t+1)$ and $\boldsymbol z(t)$ is as follows 
\begin{equation}
\boldsymbol z(t+1)=\boldsymbol Q(\boldsymbol z(t)) \boldsymbol z(t)+\boldsymbol c(t)
\end{equation}
This equation is not linear which makes studying the evolution of $\boldsymbol{z}(\cdot)$ a hard task. Moreover, as the number of state is infinite, then the dimensions of $\boldsymbol{z}(t)$ varies per time. Therefore, the matrix $\boldsymbol{Q}(\boldsymbol{z}(t))$ is not square. Hence we can not apply the same method as in \cite{maatouk2020optimality} since the spectral values are not defined for a non square matrix.
For these reasons, we proceed differently than \cite{maatouk2020optimality}. Our method consists in fact on expressing each proportion $z_i^k(t)$ that belongs to a non scheduled users'  proportion at time $t$ in function of a term of $\alpha_k(\cdot)$ at a given time less than $t$. By this way, we will obtain a part of the vector $\boldsymbol{z}(t)$ in function of $\{\alpha_k(t')\}_{\substack{t' \leq t, \\ k \in \{0,1\}}}$, and the sum of the other part equal to $\alpha$. Then, we show that $\alpha_k(\cdot)$ converges for $k=1,2$. We will see later that it is sufficient to show that $\alpha_k(\cdot)$ converges in order to conclude for the convergence of $\boldsymbol{z}(\cdot)$. To find the partial relation between $\boldsymbol{z}(t)$ and $\{\alpha_k(t')\}_{\substack{t' \leq t \\ k \in \{0,1\}}}$, we prove the following lemma.   
\begin{Lemma}\label{lem:z_evolution}
Knowing $z^k(t)$, $\alpha_k(t)$ and $l_k(t)$, we have that:\\
For $i=1$:\\
$z^k_1(t+1)=p_k \alpha_k(t)$.\\
For $1 \leq i < l_k(t)$:\\
$z^k_{i+1}(t+1)=z^k_i(t)$. 
\end{Lemma}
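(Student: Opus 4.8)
The plan is to read off both identities directly from the per-slot age dynamics of the finite-$N$ system together with the defining relation \eqref{eq:fluid_approxiamtion}. First I would rewrite \eqref{eq:fluid_approxiamtion} in the equivalent form $\boldsymbol z(t+1)=\mathbb{E}[\boldsymbol Z^N(t+1)\mid \boldsymbol Z^N(t)=\boldsymbol z(t)]$, so that $z_i^k(t+1)$ is nothing but the conditional expected proportion of class-$k$ users whose age equals $i$ one slot later. By linearity of expectation this equals $\frac{1}{N}$ times the expected number of such users, so it suffices to count, for each individual user, the probability that its age becomes $i$ at time $t+1$.

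For $i=1$: the age-evolution rule shows that a user is at age $1$ at time $t+1$ if and only if it is scheduled at time $t$ \emph{and} its channel realisation is a success. Under the Whittle index policy the scheduled class-$k$ users at time $t$ form, by the very definition of $\alpha_k(t)$, a proportion $\alpha_k(t)$ of the network, and each of them has an independent success probability $p_k$. Hence the expected proportion of class-$k$ users at age $1$ at time $t+1$ equals $p_k\alpha_k(t)$, which is the first claim; note that this does not depend on which states the scheduled users come from, so the fractional scheduling at the threshold state is irrelevant here. For $1\le i<l_k(t)$: since a user's age can only reset to $1$ or increase by one, the only users that can be at age $i+1$ at time $t+1$ are those that were at age $i$ at time $t$ and were not reset. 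Because $i<l_k(t)$, the threshold structure recalled just before the lemma --- in which the non-scheduled mass of class $k$ contains every state $1,\dots,l_k(t)-1$ in full --- guarantees that \emph{every} class-$k$ user at age $i$ is idle at time $t$, and an idle user deterministically increments its age. Therefore the proportion at age $i+1$ at time $t+1$ equals exactly the proportion at age $i$ at time $t$, i.e. $z_{i+1}^k(t+1)=z_i^k(t)$.

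There is essentially no hard obstacle here: the argument is pure bookkeeping on the transition rule. The only two points that need a word of care are (i) that $\alpha_k(t)$ is by definition the total scheduled class-$k$ proportion, so the $i=1$ identity requires no case split on which state the scheduled users occupy; and (ii) that the strict inequality $i<l_k(t)$ is precisely what rules out any scheduling --- hence any resetting --- among the age-$i$ users, which is why this clean recursion breaks down at $i=l_k(t)$, where only a $\beta(t)$- or $\gamma(t)$-fraction of the mass is idle and the success probability $p_k$ re-enters the count.
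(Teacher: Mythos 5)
Your proposal is correct and follows essentially the same argument as the paper's proof: rewriting the fluid recursion as $\boldsymbol z(t+1)=\mathbb{E}[\boldsymbol Z^N(t+1)\mid \boldsymbol Z^N(t)=\boldsymbol z(t)]$, obtaining $z^k_1(t+1)=p_k\alpha_k(t)$ from the scheduled proportion $\alpha_k(t)$ with success probability $p_k$, and using the fact that states strictly below the instantaneous threshold are entirely idle so their mass shifts up by one. Your added remarks on the irrelevance of the fractional mass at the threshold state for the $i=1$ identity are a fair clarification but do not change the route.
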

\begin{proof}
See Appendix \ref{app:lem_z_evolution}.     
\end{proof}
According to Lemma \ref{lem:z_evolution}, after scheduling under the Whittle's Index Policy, we get at time $t+1$, a proportion of $p_1\alpha_1(t)$ of users at state 1 in class 1 and $p_2\alpha_2(t)$ of users at state 1 in class 2 respectively (i.e. $z_1^1(t+1)=p_1\alpha_1(t)$ and $z_1^2(t+1)=p_2\alpha_2(t)$).\\
According to the same lemma, at time $t+2$, a proportion of $p_1\alpha_1(t)$ and $p_2\alpha_2(t)$ of users will go to state 2 in class 1 and class 2 respectively and $p_1\alpha_1(t+1)$, $p_2\alpha_2(t+1)$ of users will move to state 1 in class 1 and class 2 respectively (i.e. $z_1^1(t+2)=p_1\alpha_1(t+1)$, $z_1^2(t+2)=p_2\alpha_2(t+1)$, $z_2^1(t+2)=p_1\alpha_1(t)$ and $z_2^2(t+2)=p_2\alpha_2(t)$).\\
At time $t+3$, a proportion of $p_1\alpha_1(t)$ and $p_2\alpha_2(t)$ of users will go to state 3 in class 1 and class 2 respectively, $p_1\alpha_1(t+1)$, $p_2\alpha_2(t+1)$ of users will move to state 2 in class 1 and class 2 respectively, $p_1\alpha_1(t+2)$ and $p_2\alpha_2(t+2)$ of users will move to state 1 in class 1 and class 2 respectively, (i.e. $z_1^1(t+3)=p_1\alpha_1(t+2)$, $z_1^2(t+3)=p_2\alpha_2(t+2)$, $z_2^1(t+3)=p_1\alpha_1(t+1)$ and $z_2^2(t+3)=p_2\alpha_2(t+1)$, $z_3^1(t+3)=p_1\alpha_1(t)$, $z_3^2(t+3)=p_2\alpha_2(t)$) 
\begin{figure}
\includegraphics[scale=0.9]{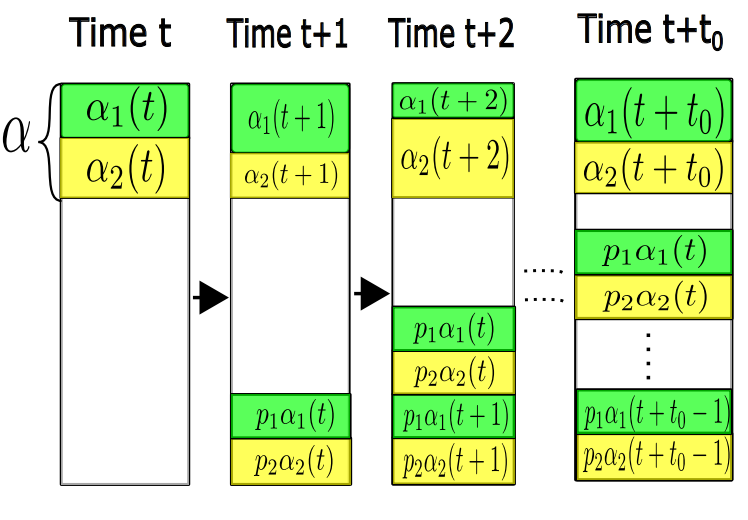}
\caption{Evolution of $z_i^k(\cdot))$ for different states $i$ in function of $\alpha_1(t)$ and $\alpha_2(t)$ under the Whittle Index Policy (the green and the yellow colors refer to class 1 and 2 respectively)}
\end{figure}

Thereby, at time $t+t_0$ where the instantaneous threshold $l_k(t+t_0) \geq t_0$, we get a set of proportions \\ $\{p_1\alpha_1(t),p_2\alpha_2(t),\cdots,p_1\alpha_1(t+t_0-1),p_2\alpha_2(t+t_0-1)\}$ that belong to the proportion $1-\alpha$ of users with the lowest Whittle index values, such that $z_1^1(t+t_0)=p_1\alpha_1(t+t_0-1)$, $z_1^2(t+t_0)=p_2\alpha_2(t+t_0-1)$, $\cdots$, $z^1_{t_0}(t+t_0)=p_1\alpha_1(t)$ and $z^2_{t_0}(t+t_0)=p_2\alpha_2(t)$. Hence, we obtain a $z^k_i(t+1)$ which is well expressed in function of terms of $\alpha_k(\cdot)$ ($k=1,2$) for $i \in [1,t_0]$, $k=1,2$.

\begin{remark}\label{rem:set_form_lowest_whittle_index}
Considering Whittle index policy framework, the order of the different users' proportions with respect to their Whittle index values must be taking into account throughout this analysis.
In fact, as we have already mentioned, we need to give the expression of the non scheduled users' proportions in function of the terms of $\alpha_k(\cdot)$ for $k=1,2$, which can not be done only if we consider the order of the Whittle index values. To that extent, since the set of the non scheduled users'  proportions, according to the Whittle's index policy, is exactly the set of users'  proportions with the lowest Whittle index values among all the different users'  proportions of the system, then the form at time $t$ of this specific set will be $\{z^{k}_{i}(t): w_k(i) \leq w_m(n)\}$ for a given $m$ and $n$ that vary with $t$.   
\end{remark}

Based on this remark above, we need to find at time $t+t_0$, a set of the form $\{z^{k}_{i}(t+t_0): w_k(i) \leq w_m(n)\}$ for a given class $m$ and state $n$, such all the elements of this set are well expressed in function of $\alpha_k(\cdot)$. We show in the sequel that the highest Whittle index of this set could be $w_2(t_0)$.   

Indeed, given that the Whittle index function is increasing with $n$ where $n$ refers to a given age of information state, then for any state in class $2$ with Whittle index less than $w_2(t_0)$, belongs to $[1,t_0]$. Moreover, considering the state $q$ in class $1$ such that $w_1(q)\leq  w_2(t_0)\leq w_1(t_0)$ ($p_1 > p_2$), then $w_1(q) \leq w_1(t_0)$, which means that $q \in [1,t_0]$. Hence, for any element in $\{z^{k}_{i}(t+t_0): w_k(i) \leq w_2(t_0)\}$, can be expressed in function of terms of $\alpha_k(\cdot)$ ($k=1,2$). Accordingly, $\{z^{k}_{i}(t+t_0): w_k(i) \leq w_2(t_0)\}$  equals to the set $\{p_2\alpha_2(t),\cdots,p_2\alpha_2(t+t_0-1),p_1\alpha_1(t+t_0-l(t+t_0)),\cdots,p_1\alpha_1(t+t_0-1)\}$, where $l(t+t_0)$ is the greatest state $q$ in class $1$ such that $w_1(q) \leq w_2(t_0)$. We note that $l(t+t_0) \leq t_0$ because $w_2(l(t+t_0)) \leq w_1(l(t+t_0)) \leq w_2(t_0)$.\\
Therefore, in that regards, for a fixed $t$, we associate for each $t_0$ the corresponding sum $\sum_{j=1}^{l(t+t_0)} z_j^1(t+t_0)+ \sum_{j=1}^{t_0} z_j^2(t+t_0)=\sum_{j=1}^{l(t+t_0)} p_1\alpha_1(t+t_0-j)+ \sum_{j=1}^{t_0} p_2\alpha_2(t+t_0-j)$. To that extent, we define in the following the time $t_0$ when this aforementioned sum exceeds $1-\alpha$.  

\begin{mydef}\label{def:T_t}
Starting at time $t$, we define $T_t$ such that $t+T_t$ is the first time that verifies: 
\begin{equation}\sum_{j=1}^{l(t+T_t)} p_1\alpha_1(t+T_t-j)+ \sum_{i=1}^{T_t} p_2\alpha_2(t+T_t-j) \geq 1-\alpha
\end{equation}  
In other words, the first time when $\sum_{j=1}^{l(t+t_0)} p_1\alpha_1(t+t_0-j)+ \sum_{i=1}^{t_0} p_2\alpha_2(t+t_0-j)$ exceeds $1-\alpha$ is $t+t_0=t+T_t$.  
\end{mydef}

Then, at time $t+T_t$, there exists $l'_1(t+T_t) \leq l(t+T_t)$, $l'_2(t+T_t) \leq T_t$, such that the set $\{z_i^1(t+T_t)\}_{1 \leq i \leq l'_1(t+T_t)} \cup \{z_i^2(t+T_t)\}_{1 \leq i \leq l'_2(t+T_t)}$ is exactly the set $\{z_i^k(t+T_t): w_k(i) \leq \max(w_1(l'_1(t+T_t),w_2(l'_2(t+T_t))\}$\footnote{According to Remark \ref{rem:set_form_lowest_whittle_index}, the form of this set means that it contains the users' proportions with the lowest Whittle index values among all users' proportions of the system}, and $\gamma(t+T_t)=1$ and $0 < \beta(t+T_t) \leq 1$, or $0 < \gamma(t+T_t) \leq 1$ and $\beta(t+T_t)=1$ such that:
\begin{equation}
\sum_{j=1}^{l'_1(t+T_t)-1} p_1\alpha_1(t+T_t-j)+ \sum_{j=1}^{l'_2(t+T_t)-1} p_2\alpha_2(t+T_t-j)+\beta(t+T_t) p_1\alpha_1(t+T_t-l'_1(t+T_t))+\gamma(t+T_t) p_2\alpha_2(t+T_t-l'_2(t+T_t))=1-\alpha,
\end{equation} 
with $l'_1(t+T_t)$ and $l'_2(t+T_t)$ being the instantaneous thresholds in class $1$ and $2$ respectively at time $t+T_t$. $\alpha_1(t+T_t)$ and $\alpha_2(t+T_t)$ are the users'  proportions with the highest Whittle index values, and their sum is equal to $\alpha$. Without loss of generality, we let $l'_k(t+T_t)=l_k(t+T_t)$.

\begin{figure}[H]
\includegraphics[scale=1]{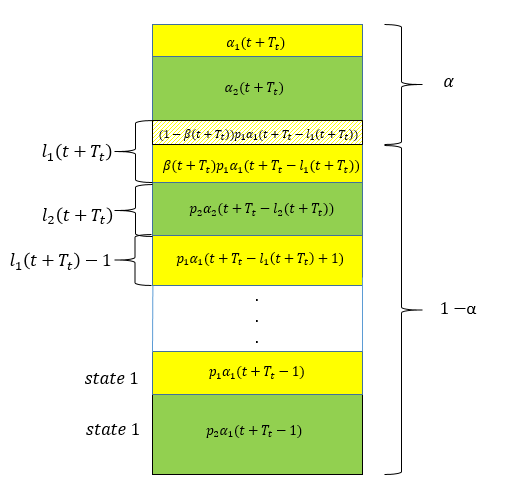}
\caption{The proportions of users at different states at time $t+T_t$ when $\gamma(t+T_t)=1$ and $0< \beta(t+T_t) \leq 1$}
\end{figure}
\begin{figure}[H]
\includegraphics[scale=1]{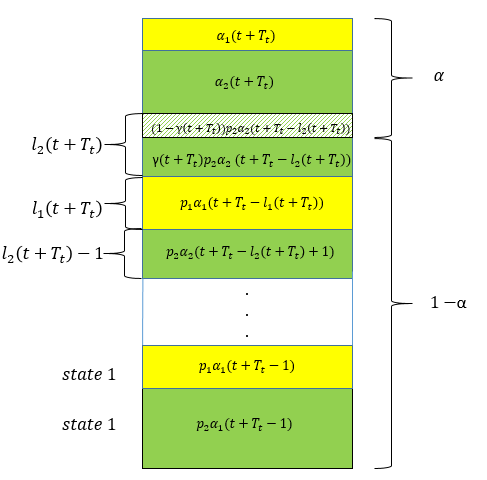}
\caption{The proportions of users at different states at time $t+T_t$ when $\beta(t+T_t)=1$ and $0 <\gamma(t+T_t) \leq 1$}
\end{figure}

As we can see, at time $t+T_t$, all the expressions of the users' proportions that belong to the $1-\alpha$ of users with the smallest Whittle index values, are in function of $\alpha_1(t)$ or $\alpha_2(t)$ at various time. In fact, at time $t+T_t$, we end up with $z_1^1(t+T_t)=p_1\alpha_1(t+T_t-1)$, $z_1^2(t+T_t)=p_2\alpha_2(t+T_t-1)$, $\cdots$, $z^1_{l_1(t+T_t)}(t+T_t)=p_1\alpha_1(t+T_t-l_1(t+T_t))$ and $z^2_{l_2(t+T_t)}(t+T_t)=p_2\alpha_2(t+T_t-l_2(t+T_t))$, and the rest of the proportions belongs to $\alpha_1(t+T_t)$ for class 1 and $\alpha_2(t+T_t)$ for class 2.       
For this reason, we work only with $\alpha_1(\cdot)$ and $\alpha_2(\cdot)$ in order to prove the convergence.
As we have mentioned earlier, the proof of the optimality is valid under an assumption on $\alpha$. This later relies on the maximum value that can take the instantaneous thresholds $l_k(t+T_t)$ at time $t+T_t$ for $k=1,2$. To that extent, we start by defining and bounding a certain constant $T_{\max}$. Then under an assumption on $\alpha$, we show that the order of Whittle index alternates between the two classes in the set $[1,T_{\max}+1]$ (this will be detailed later). Based on this, we establish that $T_{\max}$ is an upper bound of $l_k(t+T_t)$.    

First of all, we give a lemma which will be useful to prove the propositions \ref{prop:alternation_Whittle_index}, \ref{prop:unique_threshold_and_expression} and  \ref{prop:unique_expression_for_all_T}.   
\begin{Lemma}\label{lem:alpha_greater_stric_0}
There exists a time $t_f$ such that for all $t \geq t_f$, $\alpha_1(t) > 0$. 
\end{Lemma}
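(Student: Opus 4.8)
The plan is to argue by contradiction. Since $\alpha_1(\cdot)\ge 0$, the negation of the claim is that $\alpha_1(t)=0$ for infinitely many $t$; fix such an infinite set of slots $\mathcal S$. I would extract a contradiction from three ingredients: the highest occupied age in class $1$ diverges, a mass-conservation identity that pins the unscheduled class-$2$ proportion, and the quadratic growth of the Whittle index $w_k$ given in Proposition~\ref{prop:whittle_index}. Throughout I would use $0<p_1<1$ (perfect channels being a degenerate case handled separately).

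First I would show that $m^1(t)$, the highest occupied age in class $1$, satisfies $m^1(t+1)=m^1(t)+1$ at every slot. Indeed, when $\alpha_1(t)=0$ no class-$1$ user is scheduled, so by Lemma~\ref{lem:z_evolution} the entire class-$1$ profile shifts up by one state; when $\alpha_1(t)>0$ the scheduled class-$1$ users are precisely those of age $\ge l_1(t)$, hence $m^1(t)\ge l_1(t)$, and a fraction $1-p_1$ of the mass at age $m^1(t)$ moves to age $m^1(t)+1$. Consequently $m^1(t)\to\infty$, and by Proposition~\ref{prop:whittle_index}, $w_1(m^1(t))\to\infty$. Now for every $t\in\mathcal S$ all class-$1$ users are unscheduled, so each of their Whittle indices lies below the scheduling cutoff at time $t$; since all scheduled users are then from class $2$, that cutoff equals $w_2(l_2(t))$. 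Therefore $w_2(l_2(t))\ge w_1(m^1(t))$, and because $w_2$ is increasing we obtain $l_2(t)\to\infty$ along $\mathcal S$.

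Next I would exploit mass conservation in class $2$. For $t\in\mathcal S$ we have $\alpha_2(t)=\alpha$, so the unscheduled class-$2$ proportion equals exactly $\gamma_2-\alpha$; unrolling the recursion of Lemma~\ref{lem:z_evolution} gives $z_i^2(t)=p_2\alpha_2(t-i)$ for $1\le i<l_2(t)$, whence $\sum_{i=1}^{l_2(t)-1}p_2\alpha_2(t-i)=\gamma_2-\alpha$. Writing $\alpha_2(t-i)=\alpha-\alpha_1(t-i)$ and rearranging yields $\sum_{i=1}^{l_2(t)-1}\alpha_1(t-i)\ge (l_2(t)-1)\alpha-\tfrac{\gamma_2-\alpha}{p_2}$, a bound that diverges with $l_2(t)$. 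In other words, on the window $[\,t-l_2(t)+1,\ t-1\,]$ the scheduled class-$1$ proportion is within any prescribed $\delta$ of $\alpha$ for all but $O(1)$ slots, so $\alpha_2(\cdot)\le\delta$ there. Feeding this back into the class-$2$ bookkeeping over the \emph{same} window — where both the inflow $z_1^2(\cdot)=p_2\alpha_2(\cdot)$ into state $1$ and the scheduled class-$2$ mass are negligible — forces the class-$2$ profile to translate essentially rigidly upward over $l_2(t)\to\infty$ slots, so that by time $t$ almost all of the $\gamma_2$ units of class-$2$ mass sit at ages $\ge l_2(t)$; but the scheduled class-$2$ mass is only $\alpha$, so $\sum_{i\ge l_2(t)}z_i^2(t)=\alpha<\gamma_2$ once $l_2(t)$ is large, a contradiction.

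The delicate point — and the step I expect to consume most of the effort — is the last one. Lemma~\ref{lem:z_evolution} describes the motion of a state only while it stays strictly below the running threshold, and $l_1(\cdot),l_2(\cdot)$ wander in time, so the clean identities $z_i^k(t)=p_k\alpha_k(t-i)$ hold only over time windows on which the relevant thresholds stay large enough. Making the window bookkeeping above rigorous therefore requires controlling, uniformly along $\mathcal S$, how far back the recursion of Lemma~\ref{lem:z_evolution} may legitimately be unrolled. I expect to need the strict gap $w_1(n)>w_2(n)$ coming from $p_1>p_2$: it forces $l_2(t)$ to grow strictly faster than the length of any run of $\alpha_1=0$ preceding $t$ — which already bounds such runs uniformly and provides a first foothold — and it also supplies the slack needed to propagate the window estimates.
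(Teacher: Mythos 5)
Your reduction to refuting ``$\alpha_1(t)=0$ infinitely often'' is a genuinely different route from the paper, but it has two gaps that I do not see how to close. First, the identity $z_i^2(t)=p_2\alpha_2(t-i)$ for all $1\le i<l_2(t)$ is precisely the hard part: Lemma~\ref{lem:z_evolution} only propagates a state while it stays strictly below the \emph{running} threshold, and with $l_2(t)\to\infty$ along your set $\mathcal S$ the window over which you unroll is unbounded, while the thresholds in between may dip. The paper only obtains such unrolled expressions at carefully constructed instants $t+T_t$ (Definitions~\ref{def:T_t} and \ref{def:time_max}) with thresholds bounded by $T_{\max}$ under Assumption~\ref{assump:assump_on_alpha}, and crucially all of that machinery is developed \emph{after} and \emph{using} the present lemma, so invoking it here would be circular. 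Without the exact equality you only get the one-sided bound $z_i^2(t)\le p_2\alpha_2(t-i)$, which yields a \emph{lower} bound on $\sum_i\alpha_2(t-i)$, i.e., the wrong direction for your claim that $\alpha_1\approx\alpha$ on most of the window.

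Second, even granting the unrolling, the final contradiction does not follow. Your bound gives $\sum_{i=1}^{l_2(t)-1}\alpha_2(t-i)\le(\gamma_2-\alpha)/p_2$, so the cumulative class-2 mass scheduled during the window is only bounded by a \emph{constant}, not a vanishing quantity; when $p_2$ is small this constant can exceed $\gamma_2$, and the ``rigid upward translation of almost all of $\gamma_2$'' conclusion fails. Indeed, the configuration you would need to exclude --- class-2 mass $\gamma_2-\alpha$ spread over ages below a very large threshold (each age holding at most $p_2\alpha$) and mass $\alpha$ above it --- is perfectly consistent with this bookkeeping, so no contradiction arises from mass accounting alone. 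The paper's proof is structured quite differently and avoids all of this: it first shows $\alpha_1(t_f)>0$ for \emph{some} $t_f$ (an easy contradiction from $\alpha_1\equiv0$, since then no mass ever enters the low class-1 states while $\alpha_1$ stays zero, leaving class $1$ with zero total proportion), and then proves one-step persistence: if $\alpha_1(t)>0$ with smallest scheduled age $q$, the residual proportion $(1-p_1)\alpha_1(t)>0$ sits at age $q+1$ at time $t+1$, and a direct comparison of $w_1(q+1)$ against the one-step-shifted unscheduled set (using the explicit quadratic index of Proposition~\ref{prop:whittle_index} and $p_1>p_2$) shows it still dominates the entire $1-\alpha$ unscheduled mass, whence $\alpha_1(t+1)>0$. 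Some form of this persistence argument is what your sketch is missing, and it is the actual content of the lemma.
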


\begin{proof}
See appendix \ref{app:lem:alpha_greater_stric_0}.    
\end{proof}

In this following definition, we define $T_{\max}$, and we check later that it coincides with the upper bound of $l_k(t+T_t)$ for $k=1,2$. 
\begin{mydef}\label{def:time_max}
Starting at time $t$, we define $T_{\max}$ as $T_t$ defined in Definition \ref{def:T_t}, that verifies the following:
\begin{itemize}
\item $\sum_{j=1}^{l(t+T_t)} p_1\alpha_1(t+T_t-j)+ \sum_{j=1}^{T_t} p_2\alpha_2(t+T_t-j) \geq 1-\alpha$
\item $\alpha_1(t+i)=0$ for all $i \in [0,T_{\max}-1]$ 
\end{itemize}
\end{mydef}

\begin{figure}
\begin{center}
\includegraphics[scale=1]{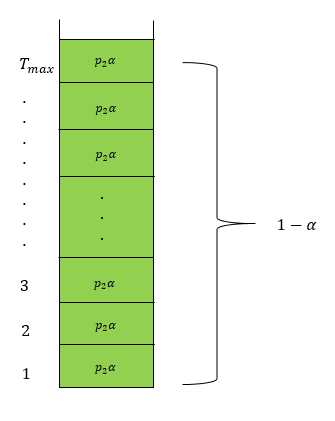}
\end{center}
\caption{Graphical representation of $T_{\max}$}
\end{figure}

In the next lemma, we determine the upper and the lower bound of $T_{\max}$.
\begin{Lemma}\label{lem:upper_bound_threshold_max}
$T_{\max}$ doesn't depend on $t$ and satisfies:
$ \frac{1-\alpha}{p_2 \alpha} \leq T_{\max} \leq \frac{1-\alpha}{p_2 \alpha}+1$.
\end{Lemma}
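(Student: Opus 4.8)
The plan is to unwind both Definition~\ref{def:T_t} and Definition~\ref{def:time_max} and observe that, on the event used to define $T_{\max}$, the whole construction collapses to a one-line arithmetic inequality. By Definition~\ref{def:time_max}, $T_{\max}$ is the value of $T_t$ obtained when $\alpha_1(t+i)=0$ for every $i\in[0,T_{\max}-1]$; since $\alpha_1(t')+\alpha_2(t')=\alpha$ at all times, this forces $\alpha_2(t+i)=\alpha$ for every $i\in[0,T_{\max}-1]$. So I would first record these two facts and then substitute them into the defining sum of Definition~\ref{def:T_t}.

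For the substitution I need to check that all the time indices appearing in that sum at step $t_0\le T_{\max}$ lie in $[0,T_{\max}-1]$. In the first sum $\sum_{j=1}^{l(t+t_0)}p_1\alpha_1(t+t_0-j)$ the argument $t_0-j$ ranges over $[t_0-l(t+t_0),\,t_0-1]$, and I would invoke the already-established inequality $l(t+t_0)\le t_0$ (the greatest class-$1$ state with $w_1(\cdot)\le w_2(t_0)$ cannot exceed $t_0$, because $w_1\ge w_2$ and the Whittle index is increasing) to conclude $t_0-l(t+t_0)\ge 0$; hence every such index is in $[0,t_0-1]\subseteq[0,T_{\max}-1]$ and $\alpha_1$ vanishes there, killing the first sum entirely. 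In the second sum $\sum_{j=1}^{t_0}p_2\alpha_2(t+t_0-j)$ the argument $t_0-j$ ranges over $[0,t_0-1]\subseteq[0,T_{\max}-1]$, so $\alpha_2\equiv\alpha$ and the sum equals $t_0\,p_2\alpha$. Thus, along the $T_{\max}$-event, the condition of Definition~\ref{def:T_t} at time $t+t_0$ reduces to $t_0\,p_2\alpha\ge 1-\alpha$.

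Finally I would exploit that $t+T_{\max}$ is the \emph{first} time this holds: at $t_0=T_{\max}$ we get $T_{\max}\,p_2\alpha\ge 1-\alpha$, i.e. $T_{\max}\ge\frac{1-\alpha}{p_2\alpha}$, while at $t_0=T_{\max}-1$ the inequality fails, i.e. $(T_{\max}-1)\,p_2\alpha<1-\alpha$, i.e. $T_{\max}<\frac{1-\alpha}{p_2\alpha}+1$. Combining the two gives the claimed sandwich $\frac{1-\alpha}{p_2\alpha}\le T_{\max}\le\frac{1-\alpha}{p_2\alpha}+1$. Since this characterizes $T_{\max}$ as the smallest integer no smaller than $\frac{1-\alpha}{p_2\alpha}$, a quantity in which $t$ does not appear, $T_{\max}$ is independent of $t$, which is the remaining assertion. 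The only delicate points are the index bookkeeping in the two sums (handled by $l(t+t_0)\le t_0$) and the trivial well-posedness checks ($0<\alpha<1$, $p_2>0$, $T_{\max}$ a positive integer); I expect the index bookkeeping to be the one thing that needs care, as everything else is elementary arithmetic.
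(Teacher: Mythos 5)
Your proposal is correct and follows essentially the same route as the paper's proof: substitute $\alpha_1\equiv 0$, $\alpha_2\equiv\alpha$ on the indices $[0,T_{\max}-1]$ (using $l(t+t_0)\le t_0$ for the class-$1$ sum), reduce the defining condition to $t_0\,p_2\alpha\ge 1-\alpha$, and use the "first time" property at $t_0=T_{\max}-1$ to get the upper bound and the $t$-independence. No gaps; your phrasing of the conclusion as "$T_{\max}$ is the smallest integer no smaller than $\frac{1-\alpha}{p_2\alpha}$" is an equivalent (slightly cleaner) way of stating the paper's uniqueness-of-the-integer argument.
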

\begin{proof}
See appendix \ref{app:lem:upper_bound_threshold_max}.
\end{proof}
We say that the order of the Whittle index strictly alternates between the two classes in $[1,n]$ or from state 1 to $n$, if we have $w_2(1)<w_1(1)<w_2(2)<w_1(2)<w_2(3)<w_1(3)<\cdots<w_2(n)<w_1(n)$.
To that extent, the proof of $\alpha_k(\cdot)$ convergence is feasible when the alternation condition is satisfied from $1$ to $l_k(t+T_t)+1$ for all $t$. We note that this condition will be relevant in the proof of the proposition \ref{prop:inequality_satisfied_by_first_element}. To that end, we start by introducing the assumption on $\alpha$. Then, we demonstrate effectively that under this assumption the condition of alternation is satisfied from $1$ to $l_k(t+T_t)+1$.
 
\begin{assumption}\label{assump:assump_on_alpha}
Denoting $\frac{1}{p_1-p_2}(\frac{p_1+p_2}{2}+\sqrt{2(p_1-p_2)+\frac{(p_1+p_2)^2}{4}})$ by $D$.
Then, the users'  proportion scheduled at each time $\alpha$ satisfies: 
\begin{equation}
\alpha > \frac{1}{1+(D-2)p_2}
\end{equation}
\end{assumption}
If $T_{\max}$ is the highest value that $l_k(t+T_t)$ can take, (this will be shown later in proposition \ref{prop:alternation_Whittle_index}), then it is sufficient to prove that the hypothesis of the Whittle index alternation is satisfied from $1$ to $T_{\max}+1$. This will be shown in the next proposition.      
\begin{proposition}\label{prop:interval_alternation_under_assump}
Under Assumption~\eqref{assump:assump_on_alpha}, the order of the Whittle index alternates between the two classes from state 1 to state to $T_{max}+1$. 
\end{proposition}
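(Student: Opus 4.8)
The plan is to reduce the alternation chain to a single family of scalar inequalities parametrized by the state, and then to certify all of them at once by comparing $T_{\max}$ with the constant $D$ of Assumption~\ref{assump:assump_on_alpha}. First I would spell out the Whittle index of Proposition~\ref{prop:whittle_index}, namely $w_k(n)=\frac{(n-1)p_k n}{2}+n$, and record that, because $p_1>p_2$, we have $w_2(n)\le w_1(n)$ for every $n\in\mathbb{N}^*$, with equality only at $n=1$ where both indices equal $1$ (this degenerate tie is immaterial since Algorithm~\ref{euclid} breaks ties arbitrarily). Consequently every ``within-state'' comparison $w_2(n)<w_1(n)$ appearing in the desired chain $w_2(1)<w_1(1)<w_2(2)<\cdots<w_2(T_{\max}+1)<w_1(T_{\max}+1)$ holds automatically, so the statement is equivalent to the ``cross'' inequalities $w_1(n)<w_2(n+1)$ for all integers $n$ with $1\le n\le T_{\max}$.

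Next I would expand $w_1(n)<w_2(n+1)$: multiplying by $2$ and cancelling the common terms, this is equivalent to $g(n):=(p_1-p_2)n^2-(p_1+p_2)n-2<0$. Since $p_1>p_2$, $g$ is an upward-opening parabola; it has a single positive root, and the quadratic formula gives that root to be exactly $D=\frac{1}{p_1-p_2}\big(\frac{p_1+p_2}{2}+\sqrt{2(p_1-p_2)+\frac{(p_1+p_2)^2}{4}}\big)$. Note $g(1)=-2(1+p_2)<0$, so $g$ is negative on the whole interval $[1,D)$, and it is therefore enough to establish the single inequality $T_{\max}<D$.

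Finally I would combine Lemma~\ref{lem:upper_bound_threshold_max}, which states $T_{\max}\le \frac{1-\alpha}{p_2\alpha}+1$, with Assumption~\ref{assump:assump_on_alpha}: the bound $\alpha>\frac{1}{1+(D-2)p_2}$ rearranges (using $\alpha,p_2>0$) to $\frac{1-\alpha}{p_2\alpha}<D-2$, whence $T_{\max}\le \frac{1-\alpha}{p_2\alpha}+1<D-1<D$, which closes the argument with some margin to spare---the surplus being exactly what later results such as Proposition~\ref{prop:inequality_satisfied_by_first_element} draw on. I do not expect a conceptual obstacle here; the only points requiring care are the routine algebra verifying that the positive root of $g$ coincides with the constant $D$ of Assumption~\ref{assump:assump_on_alpha}, and the bookkeeping detail that $T_{\max}$ is an integer whereas $D$ need not be, so that the strict inequality $T_{\max}<D$ furnished by the two previous displays is genuinely what the alternation requires.
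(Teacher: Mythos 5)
Your proposal is correct and follows essentially the same route as the paper: both reduce the claim to the sign of the same quadratic in $n$ (your $g(n)$ is $-2f(n)$ for the paper's $f(n)=w_2(n+1)-w_1(n)$), identify its positive root as the constant $D$, and then combine the bound $T_{\max}\le\frac{1-\alpha}{p_2\alpha}+1$ from Lemma~\ref{lem:upper_bound_threshold_max} with Assumption~\ref{assump:assump_on_alpha} rearranged to $\frac{1-\alpha}{p_2\alpha}<D-2$. The only cosmetic differences are that you argue the sign via $g(1)<0$ and the upward-opening parabola instead of the paper's derivative analysis of $f$, and that you explicitly flag the tie $w_1(1)=w_2(1)=1$, which the paper silently glosses over.
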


\begin{proof}
See appendix \ref{app:prop:interval_alternation_under_assump}. 
\end{proof}
Now we prove that the instantaneous thresholds of the two classes can not exceed $T_{max}$. 

\begin{proposition}\label{prop:alternation_Whittle_index}
Denoting by $l_{max}$ the highest instantaneous threshold in the sense that $\forall t\geq t_f ,\max(l_1(t+T_t),l_2(t+T_t)) \leq l_{max}$, then $T_{max}=l_{max}$. 
\end{proposition}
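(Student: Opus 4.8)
The plan is to prove the two bounds $l_{\max}\le T_{\max}$ and $l_{\max}\ge T_{\max}$; the first — that no instantaneous threshold $\max(l_1(t+T_t),l_2(t+T_t))$ can exceed $T_{\max}$ — is the substantive part, the second only records that $T_{\max}$ is the sharp constant. Fix $t\ge t_f$ and write $T:=t+T_t$, $L:=\max(l_1(T),l_2(T))$. First I would recall, from the iteration of Lemma~\ref{lem:z_evolution} carried out before Definition~\ref{def:T_t}, that every non-scheduled proportion at time $T$ has the form $z^k_i(T)=p_k\alpha_k(T-i)$ with $1\le i\le l_k(T)$, and that $l_1(T),l_2(T)\le T_t$, hence $L\le T_t$. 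Summing the non-scheduled proportions, whose total mass is $1-\alpha$, gives the identity
\begin{equation}
\sum_{i=1}^{l_1(T)-1}p_1\alpha_1(T-i)+\sum_{i=1}^{l_2(T)-1}p_2\alpha_2(T-i)+\beta(T)\,p_1\alpha_1\!\bigl(T-l_1(T)\bigr)+\gamma(T)\,p_2\alpha_2\!\bigl(T-l_2(T)\bigr)=1-\alpha .
\end{equation}

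Next I would feed in the alternation of the Whittle indices. After a preliminary estimate $T_t\le T_{\max}+1$ (obtained the same way, by plugging $t_0=T_{\max}+1$ into Definition~\ref{def:T_t} and using $p_1\alpha_1(s)+p_2\alpha_2(s)\ge p_2\alpha$ together with the lower bound on $T_{\max}$ of Lemma~\ref{lem:upper_bound_threshold_max}), we have $L\le T_t\le T_{\max}+1$, so Proposition~\ref{prop:interval_alternation_under_assump} (valid under Assumption~\ref{assump:assump_on_alpha}) applies on $[1,L]$. Since the non-scheduled set equals $\{z^k_i(T):w_k(i)\le\max(w_1(l_1(T)),w_2(l_2(T)))\}$, strict alternation forces $|l_1(T)-l_2(T)|\le1$ and, by the $\beta/\gamma$ convention, that the single fractional top state belongs to the binding class while the other class is fully non-scheduled up to its own threshold; in particular the non-scheduled set contains a full (weight $1$) class-$1$ proportion and a full class-$2$ proportion at each matched index $T-1,\dots,T-(L-1)$.

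With this structure the bound on $L$ is arithmetic. Dropping the nonnegative fractional term and pairing class-$1$ with class-$2$ at equal indices,
\begin{equation}
1-\alpha\;\ge\;\sum_{i=1}^{L-1}\bigl(p_1\alpha_1(T-i)+p_2\alpha_2(T-i)\bigr)\;=\;\sum_{i=1}^{L-1}\bigl(p_2\alpha+(p_1-p_2)\alpha_1(T-i)\bigr)\;=\;(L-1)p_2\alpha+(p_1-p_2)\sum_{i=1}^{L-1}\alpha_1(T-i),
\end{equation}
using $\alpha_1(s)+\alpha_2(s)=\alpha$ and $p_1>p_2$. Since $L\le T_t$, each index $T-i$ with $1\le i\le L-1$ satisfies $T-i\ge t+1>t_f$, so Lemma~\ref{lem:alpha_greater_stric_0} gives $\alpha_1(T-i)>0$; hence the last sum is strictly positive, $(L-1)p_2\alpha<1-\alpha$, i.e.\ $L-1<\tfrac{1-\alpha}{p_2\alpha}$. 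As $L-1$ is an integer and $T_{\max}\ge\tfrac{1-\alpha}{p_2\alpha}$ by Lemma~\ref{lem:upper_bound_threshold_max}, this gives $L\le T_{\max}$; since $t\ge t_f$ was arbitrary, $l_{\max}\le T_{\max}$. For the reverse inequality, $T_{\max}$ is by Definition~\ref{def:time_max} exactly the value $T_t$ takes along the configuration in which class $1$ is idle over the whole window $[t,t+T_{\max}-1]$, and since $\max(l_1(t+T_t),l_2(t+T_t))$ can equal $T_t$ (the class-$2$ threshold reaches $T_t$ when class $2$ is binding and the accumulated mass first meets $1-\alpha$ at its $T_t$-th state), no integer strictly below $T_{\max}$ can upper-bound every instantaneous threshold; hence $l_{\max}=T_{\max}$.

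The delicate point, and the main obstacle, is closing the \emph{off-by-one}: crude accounting only yields $L\le\tfrac{1-\alpha}{p_2\alpha}+1$, which may equal $T_{\max}+1$. Two ingredients rescue it and need care. First, alternation (Proposition~\ref{prop:interval_alternation_under_assump}, equivalently Assumption~\ref{assump:assump_on_alpha}) is exactly what forces $|l_1(T)-l_2(T)|\le1$ and places the fractional weight on the binding class — without it the two thresholds could be far apart and the index-pairing collapses — and one must first bootstrap the range of validity via the estimate $T_t\le T_{\max}+1$. Second, the strict positivity $\alpha_1(\cdot)>0$ past $t_f$ (Lemma~\ref{lem:alpha_greater_stric_0}) is what makes the accumulation \emph{strictly} faster than the $\alpha_1\equiv0$ worst case defining $T_{\max}$, upgrading the weak inequality to $L-1<\tfrac{1-\alpha}{p_2\alpha}$ — this is precisely why the statement is restricted to $t\ge t_f$. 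A secondary nuisance is handling separately the two binding configurations ($l_1(T)=l_2(T)$ versus $|l_1(T)-l_2(T)|=1$) and the fractional weights $\beta(T),\gamma(T)$ when deciding which boundary terms of the mass identity to retain.
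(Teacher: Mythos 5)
Your route is genuinely different from the paper's. The paper proves the bound by contradiction: assuming $l_2(t+T_t)>T_{\max}$, it uses only the single alternation inequality $w_1(T_{\max})<w_2(T_{\max}+1)$ from Proposition \ref{prop:interval_alternation_under_assump} together with Lemma \ref{lem:inequalities_whittle_index_thresholds} and a case split on $\beta(t+T_t)$ to force \emph{both} thresholds to be at least $T_{\max}$, and then the strict per-slot bound $p_1\alpha_1+p_2\alpha_2>p_2\alpha$ (strictness from Lemma \ref{lem:alpha_greater_stric_0}) makes the non-scheduled mass exceed $1-\alpha$. It never needs a bound on $T_t$, never needs $|l_1(T)-l_2(T)|\le 1$, and never needs alternation along the whole range $[1,L]$. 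You instead argue directly: bootstrap $T_t\le T_{\max}+1$, use alternation on $[1,L]$ to get $|l_1-l_2|\le1$ and $\beta(T)=1$ when $l_1<l_2$ (facts the paper only establishes later, in Proposition \ref{prop:unique_threshold_and_expression}, and there under the hypothesis $\max(l_1,l_2)\le T_{\max}$, so you must re-derive them at $t+T_t$ from Lemma \ref{lem:inequalities_whittle_index_thresholds} and alternation rather than cite them — not circular, but extra work), and then count: $(L-1)p_2\alpha<1-\alpha$, integrality, and $T_{\max}\ge\frac{1-\alpha}{p_2\alpha}$ give $L\le T_{\max}$. Filled in, this is a legitimate and arguably more transparent alternative to the paper's contradiction argument.

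Two concrete problems remain. First, your justification of the preliminary estimate $T_t\le T_{\max}+1$ fails as stated: in Definition \ref{def:T_t} the class-1 terms are summed only up to $l(t+t_0)$, so the pairing $p_1\alpha_1(s)+p_2\alpha_2(s)\ge p_2\alpha$ cannot be applied to indices $j>l(t+t_0)$, where only $p_2\alpha_2(s)$, possibly strictly smaller than $p_2\alpha$, is available; plugging $t_0=T_{\max}+1$ therefore does not immediately give a sum $\ge 1-\alpha$. The estimate is true, but you need the intermediate step that alternation up to $T_{\max}+1$ yields $w_1(t_0-1)<w_2(t_0)$, hence $l(t+t_0)\ge t_0-1$, after which pairing over $j=1,\dots,T_{\max}$ gives at least $T_{\max}p_2\alpha\ge1-\alpha$. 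Since everything downstream (alternation on $[1,L]$, $|l_1-l_2|\le1$, the placement of $\beta$) hangs on this bound, the omission matters. Second, your argument for $l_{\max}\ge T_{\max}$ is not valid: the configuration defining $T_{\max}$ has $\alpha_1\equiv0$ on the whole window, which by Lemma \ref{lem:alpha_greater_stric_0} never occurs for $t\ge t_f$, so you cannot assert that some $\max(l_1(t+T_t),l_2(t+T_t))$ attains $T_t=T_{\max}$. Fortunately this direction is not really needed: $l_{\max}$ is only specified as an upper bound, the paper itself proves only $\max(l_1(t+T_t),l_2(t+T_t))\le T_{\max}$ before setting $l_{\max}=T_{\max}$, and only the upper-bound direction is used later. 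Finally, handle the trivial case $L=1$ separately, since your strict-positivity step needs at least one index in the sum.
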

\begin{proof}
See appendix \ref{app:prop:alternation_Whittle_index}
\end{proof}

According to the last proposition, $T_{\max}$ is truly the upper bound of $l_k(t+T_t)$ for all $t$ and $k=1,2$. As consequence, the order of the Whittle index alternates between the two classes in the set $[1,l_k(t+T_t)+1]$. The next goal is to find a relation between $l_1(t+T_t)$ and $l_1(t+T_t)$. 
To do so, we recall that we have at time $t$:
\begin{equation}\label{eq:assocaiated_sum_combinations_z_i_k}
\sum_{i=1}^{l_1(t)-1} z^1_i(t)+ \sum_{i=1}^{l_2(t)-1} z^2_i(t)+\beta(t) z^1_{l_1(t)}(t)+\gamma(t) z^2_{l_2(t)}(t)=1-\alpha
\end{equation} 
with $l_1(t)$ and $l_2(t)$ being the thresholds in class $1$ and $2$ respectively at time $t$, and $\beta(t)=1$ and $0<\gamma(t) \leq 1$, or $\gamma(t)=1$ and $0<\beta(t) \leq 1$. 
Thereby, the first step consists of establishing the relationship between $l_1(t)$ and $l_2(t)$ when $\max(l_1(t),l_2(t)) \leq T_{\max}$ depending on two different cases that we will explain thereafter in order to give a generalized expression of the aforementioned equation \eqref{eq:assocaiated_sum_combinations_z_i_k} where the index of the class is not specified in the expressions of the thresholds $l_1(t)$ and $l_2(t)$.\\
\begin{remark}
It is worth mentioning that, as we have defined $l_{\max}$ in Proposition \ref{prop:alternation_Whittle_index}, it refers to the highest value that can be attained by the thresholds of the class 1 or 2 at time $t+T_t$ for $t>t_f$ where $t_f$ is a given in Lemma \ref{lem:alpha_greater_stric_0}. Whereas, at any time $t>t_f$, $\max(l_1(t),l_2(t))\leq l_{\max}$ might not be true since we don't have necessary a given $t'$ such that $t'+T_{t'}=t$ for any $t>t_f$.    
\end{remark}

\begin{proposition}\label{prop:unique_threshold_and_expression}
At any time $t>t_f$, if $\max(l_1(t),l_2(t)) \leq T_{\max}=l_{\max}$, then there exists $l(t) \leq l_{\max}$ and, $\beta(t)=0$ and $0 < \gamma(t) \leq 1$, or $0 < \beta(t) \leq 1$ and $\gamma(t)=1$ such that:
\begin{equation}\sum_{i=1}^{l(t)-1} z^1_i(t)+ \sum_{i=1}^{l(t)-1}  z^2_i(t)+\beta(t)  z^1_{l(t)}(t)+ \gamma(t) z^2_{l(t)}(t)=1-\alpha
\end{equation} 
\end{proposition}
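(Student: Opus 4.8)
The plan is to prove that, under $\max(l_1(t),l_2(t))\le T_{\max}=l_{\max}$, the two instantaneous thresholds satisfy either $l_1(t)=l_2(t)$ or $l_2(t)=l_1(t)+1$, and then to read off the claimed single‑threshold form of \eqref{eq:assocaiated_sum_combinations_z_i_k} in each case. The only ingredients are: the set‑theoretic characterization recalled right after \eqref{eq:assocaiated_sum_combinations_z_i_k}, namely $\{z_i^1\}_{1\le i\le l_1(t)}\cup\{z_i^2\}_{1\le i\le l_2(t)}=\{z_i^k:\ w_k(i)\le W\}$ with $W=\max(w_1(l_1(t)),w_2(l_2(t)))$, together with the fact (also from the setup preceding the proposition) that at most one of $\beta(t),\gamma(t)$ is strictly below $1$; the strict alternation $w_2(1)<w_1(1)<w_2(2)<\dots<w_2(T_{\max}+1)<w_1(T_{\max}+1)$ from Proposition~\ref{prop:interval_alternation_under_assump}; and the identity $T_{\max}=l_{\max}$ from Proposition~\ref{prop:alternation_Whittle_index}. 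All of these hold for $t>t_f$ (Lemma~\ref{lem:alpha_greater_stric_0}, Assumption~\ref{assump:assump_on_alpha}). Since $\max(l_1(t),l_2(t))\le T_{\max}$, every state index that enters the argument lies in $[1,T_{\max}+1]$, so the alternation applies to all of them; in particular all Whittle values involved are pairwise distinct and $W$ equals exactly one of $w_1(l_1(t))$ or $w_2(l_2(t))$.

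Suppose first $W=w_1(l_1(t))$. From $w_2(l_1(t))<w_1(l_1(t))<w_2(l_1(t)+1)$ we get $\{z_i^k:\ w_k(i)\le W\}=\{z_i^1\}_{i\le l_1(t)}\cup\{z_i^2\}_{i\le l_1(t)}$, so comparing with the characterization forces $l_2(t)=l_1(t)$. Since $w_1(l_1(t))$ is the largest Whittle index among the non‑scheduled proportions, the only proportion that can be partially non‑scheduled is $z^1_{l_1(t)}$ (otherwise $z^1_{l_1(t)}$, which has a larger index than $z^2_{l_2(t)}$, would be fully scheduled instead of appearing in the sum); hence $\gamma(t)=1$ and $0<\beta(t)\le1$. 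Taking $l(t):=l_1(t)=l_2(t)\le l_{\max}$, \eqref{eq:assocaiated_sum_combinations_z_i_k} is already in the announced form, with $\gamma(t)=1$ and $0<\beta(t)\le1$.

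Suppose now $W=w_2(l_2(t))$; then necessarily $w_2(l_2(t))>w_1(l_1(t))$. From $w_1(l_2(t)-1)<w_2(l_2(t))<w_1(l_2(t))$ we obtain $\{z_i^k:\ w_k(i)\le W\}=\{z_i^1\}_{i\le l_2(t)-1}\cup\{z_i^2\}_{i\le l_2(t)}$, whence $l_1(t)=l_2(t)-1$ (in particular $l_2(t)\ge2$ since $l_1(t)\ge1$). Here $w_2(l_2(t))$ is the largest non‑scheduled index, so the possibly‑fractional proportion is $z^2_{l_2(t)}$, giving $\beta(t)=1$ and $0<\gamma(t)\le1$. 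Using $\beta(t)=1$ and $l_1(t)=l_2(t)-1$,
\begin{equation*}
\sum_{i=1}^{l_1(t)-1} z^1_i(t)+\beta(t)\,z^1_{l_1(t)}(t)=\sum_{i=1}^{l_1(t)} z^1_i(t)=\sum_{i=1}^{l_2(t)-1} z^1_i(t),
\end{equation*}
so the left‑hand side of \eqref{eq:assocaiated_sum_combinations_z_i_k} equals $\sum_{i=1}^{l_2(t)-1} z^1_i(t)+\sum_{i=1}^{l_2(t)-1} z^2_i(t)+\gamma(t)\,z^2_{l_2(t)}(t)$. Setting $l(t):=l_2(t)\le l_{\max}$, keeping this $\gamma(t)$ and putting $\beta(t):=0$ yields the announced form, with $\beta(t)=0$ and $0<\gamma(t)\le1$.

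The main obstacle, such as it is, lies in the bookkeeping of the two cases: one must verify that all referenced states ($l_1(t)+1$, $l_2(t)-1$, etc.) remain inside $[1,T_{\max}+1]$, the range where Proposition~\ref{prop:interval_alternation_under_assump} guarantees strict alternation, and one must use that a proportion is non‑scheduled precisely when it sits at or below the greedy Whittle cutoff, so that the unique possibly‑fractional boundary proportion is the one of largest Whittle index. Once these observations are in place, identifying $l_1(t)$ versus $l_2(t)$ and reading off $\beta(t),\gamma(t)$ is immediate, and the rewriting in the second case is a one‑line manipulation.
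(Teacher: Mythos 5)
Your proposal is correct and follows essentially the paper's own route: it establishes $l_1(t)\in\{l_2(t),\,l_2(t)-1\}$ from the Whittle-index ordering (you in-line the content of Lemma~\ref{lem:inequalities_whittle_index_thresholds} via the set characterization and the alternation of Proposition~\ref{prop:interval_alternation_under_assump}), identifies in each case which of $\beta(t),\gamma(t)$ can be taken equal to $1$ because only the boundary proportion of largest Whittle index may be split, and then rewrites the sum with a single threshold $l(t)\le l_{\max}$ exactly as in the paper (including the implicit freedom to set the coefficient to $1$ when the boundary proportion is zero).
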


\begin{proof}
See appendix \ref{app:prop:unique_threshold_and_expression}.
\end{proof}

Starting at time $t \geq t_f$, we have that at time $t+T_t$, the thresholds $l_1(t+T_t)$ and $l_2(t+T_t)$ are less than $l_{\max}$. Hence, according to Proposition \eqref{prop:unique_threshold_and_expression}, there exists $l(t+T_t)$ such that:
\begin{equation}\sum_{j=1}^{l(t+T_t)-1} p_1\alpha_1(t+T_t-j)+ \sum_{j=1}^{l(t+T_t)-1} p_2\alpha_2(t+T_t-j)+\beta(t+T_t) p_1\alpha_1(t+T_t-l(t+T_t))+ \gamma(t+T_t)p_2\alpha_2(t+T_t-l(t+T_t))=1-\alpha
\end{equation} 
where $\beta(t+T_t)=0$ and $0 \leq \gamma(t+T_t) < 1$, or $0 \leq \beta(t+T_t) < 1$ and $\gamma(t+T_t)=1$.\\ 
Denoting $t+T_t$ by $T_0$, we obtain:
\begin{equation}\sum_{j=1}^{l(T_0)-1} p_1\alpha_1(T_0-j)+ \sum_{j=1}^{l(T_0)-1} p_2\alpha_2(T_0-j)+\beta(T_0) p_1\alpha_1(T_0-l(T_0))+ \gamma(T_0)p_2\alpha_2(T_0-l(T_0))=1-\alpha\end{equation} 
where $\beta(T_0)=0$ and $0 < \gamma(T_0) \leq 1$, or $0 < \beta(T_0) \leq 1$ and $\gamma(T_0)=1$.\\

Now, we prove by induction that this latter expression is valid for all $T \geq T_0$, and that $l(T)$, the instantaneous threshold at time $T$, is less than $l_{\max}$.\\

\begin{proposition}\label{prop:unique_expression_for_all_T}
For all $T \geq T_0$, there exists $l(T)\leq l_{\max}$, $\beta(T)$ and $\gamma(T)$, such that:
\begin{equation}\sum_{j=1}^{l(T)-1} p_1\alpha_1(T-j)+ \sum_{j=1}^{l(T)-1} p_2\alpha_2(T-j)+\beta(T) p_1\alpha_1(T-l(T))+ \gamma(T)p_2\alpha_2(T-l(T))=1-\alpha\end{equation} 
where $\beta(T)=0$ and $0 < \gamma(T) \leq 1$, or $0 < \beta(T) \leq 1$ and $\gamma(T)=1$.\\ 
\end{proposition}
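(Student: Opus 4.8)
\textit{Proof plan.} The statement asks for a decomposition of the non‑scheduled proportion $1-\alpha$ at every time $T\ge T_0$, so the natural route is induction on $T$, the base case $T=T_0$ being exactly the identity displayed immediately before the proposition. Fix $T\ge T_0$ and assume the claimed decomposition holds at time $T$ with $l(T)\le l_{\max}$; I would carry through the induction the slightly stronger structural fact that is implicit in the construction, namely that the set of non‑scheduled proportions at time $T$ is $\{z_i^k(T):w_k(i)\le\max(w_1(l_1(T)),w_2(l_2(T)))\}$ and that each such $z_i^k(T)$ equals $p_k\alpha_k(T-i)$, up to the fraction $\beta(T)$ or $\gamma(T)$ at the outermost state. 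The whole argument is then to push this picture one step forward with Lemma~\ref{lem:z_evolution} and re‑merge via Proposition~\ref{prop:unique_threshold_and_expression}.

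\textit{Forward propagation of the structure.} By Lemma~\ref{lem:z_evolution}, for $1\le i<l_k(T)$ we have $z_{i+1}^k(T+1)=z_i^k(T)=p_k\alpha_k(T-i)=p_k\alpha_k((T+1)-(i+1))$, while $z_1^k(T+1)=p_k\alpha_k(T)=p_k\alpha_k((T+1)-1)$. Hence every proportion that is not scheduled at time $T+1$ — under Whittle's index policy exactly the sub‑collection with the smallest index values — is again of the form $p_k\alpha_k((T+1)-j)$, and these occupy only states $1,\dots,l(T)+1\le l_{\max}+1=T_{\max}+1$. Since Proposition~\ref{prop:interval_alternation_under_assump} guarantees, under Assumption~\ref{assump:assump_on_alpha}, that the Whittle index strictly alternates between the two classes over $[1,T_{\max}+1]$, the relative ordering of all these proportions is fully determined; I can therefore re‑run the bookkeeping of Proposition~\ref{prop:unique_threshold_and_expression} at time $T+1$ to obtain a common integer $l(T+1)$ and fractions $\beta(T+1),\gamma(T+1)$ of the prescribed form (either $\beta=0,\ 0<\gamma\le1$, or $0<\beta\le1,\ \gamma=1$) satisfying
\[
\sum_{j=1}^{l(T+1)-1} p_1\alpha_1(T+1-j)+\sum_{j=1}^{l(T+1)-1} p_2\alpha_2(T+1-j)+\beta(T+1)p_1\alpha_1(T+1-l(T+1))+\gamma(T+1)p_2\alpha_2(T+1-l(T+1))=1-\alpha .
\]

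\textit{The threshold stays bounded: $l(T+1)\le l_{\max}$.} From the identity above, each full term contributes $p_1\alpha_1(T+1-j)+p_2\alpha_2(T+1-j)\ge p_2(\alpha_1(T+1-j)+\alpha_2(T+1-j))=p_2\alpha$ because $p_1>p_2$, so $(l(T+1)-1)p_2\alpha\le 1-\alpha$, i.e.\ $l(T+1)\le\frac{1-\alpha}{p_2\alpha}+1$. To upgrade this to $l(T+1)\le T_{\max}=l_{\max}$ I would use Lemma~\ref{lem:alpha_greater_stric_0}: for $T+1-j\ge t_f$ one has $\alpha_1(T+1-j)>0$, hence those terms are strictly larger than $p_2\alpha$, which together with the lower bound $T_{\max}\ge\frac{1-\alpha}{p_2\alpha}$ of Lemma~\ref{lem:upper_bound_threshold_max} and the defining worst‑case property of $T_{\max}$ (Definition~\ref{def:time_max}, where class $1$ contributes nothing) forces the accumulation to reach $1-\alpha$ no later than step $T_{\max}$. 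This yields $l(T+1)\le l_{\max}$ and closes the induction.

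\textit{Expected main obstacle.} The delicate point is the ordering of the last two steps: Proposition~\ref{prop:unique_threshold_and_expression}, which I need in order to merge the two class‑wise thresholds into a single $l(T+1)$, itself presupposes $\max(l_1(T+1),l_2(T+1))\le l_{\max}$, so the argument must establish the threshold bound \emph{before} (or simultaneously with) the merging rather than after it. The clean way out is to first note from the forward‑propagation step that the non‑scheduled states at $T+1$ lie in $[1,l(T)+1]\subseteq[1,T_{\max}+1]$, apply the alternation of Proposition~\ref{prop:interval_alternation_under_assump} on that range to control $l_1(T+1)$ and $l_2(T+1)$ individually (invoking Proposition~\ref{prop:alternation_Whittle_index}), and only then merge and sharpen $l(T)+1$ down to $l_{\max}$ via Lemma~\ref{lem:alpha_greater_stric_0}. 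One must also keep track of the boundary edge cases in the $\beta/\gamma$ accounting (when the outermost full term is only partially included), but these become routine once the index ordering is pinned down by the alternation property.
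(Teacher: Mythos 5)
Your overall route (induction from $T_0$, one-step propagation of the shifted structure via Lemma~\ref{lem:z_evolution}, the alternation property on $[1,T_{\max}+1]$, a bound on the instantaneous thresholds at $T+1$, and finally merging through Proposition~\ref{prop:unique_threshold_and_expression}) is the same as the paper's, and you correctly identify the circularity danger. However, the step that actually removes the circularity is missing. Your counting bound $(l(T+1)-1)p_2\alpha\le 1-\alpha$ is computed from the \emph{merged} decomposition, i.e.\ it already assumes Proposition~\ref{prop:unique_threshold_and_expression} has been applied, which is precisely what you are trying to license; and your proposed pre-merge substitute --- ``control $l_1(T+1)$ and $l_2(T+1)$ individually by invoking Proposition~\ref{prop:alternation_Whittle_index}'' --- is not available: that proposition only bounds thresholds at times of the special form $t+T_t$, and the paper explicitly warns (remark preceding Proposition~\ref{prop:unique_threshold_and_expression}) that $\max(l_1(t),l_2(t))\le l_{\max}$ need not hold at arbitrary times. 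What the forward propagation gives you for free is only $\max\bigl(l_1(T+1),l_2(T+1)\bigr)\le l(T)+1\le T_{\max}+1$, so the entire content of the inductive step is to exclude the corner value $T_{\max}+1$.

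Excluding it cannot be done by your per-term estimate applied class-wise: if, say, $l_2(T+1)=T_{\max}+1$ while $l_1(T+1)$ is small, the class-$2$ terms only satisfy $p_2\alpha_2(T+1-j)\le p_2\alpha$, which is the wrong direction, so no contradiction follows from class $2$ alone. This is why the paper first proves $l_1(T+1)\le l_2(T+1)$ and then, assuming $l_2(T+1)>T_{\max}$, uses Lemma~\ref{lem:inequalities_whittle_index_thresholds} together with the alternation and a case split on $\beta(T+1)=1$ versus $\beta(T+1)<1$ (hence $\gamma(T+1)=1$) to force $l_1(T+1)\ge T_{\max}$ as well; only then does the non-scheduled sum dominate $\sum_{j=1}^{T_{\max}}\bigl(p_1\alpha_1(T+1-j)+p_2\alpha_2(T+1-j)\bigr)>T_{\max}\,p_2\alpha\ge 1-\alpha$ (strictness from Lemma~\ref{lem:alpha_greater_stric_0}, the last bound from Lemma~\ref{lem:upper_bound_threshold_max}), giving the contradiction $1-\alpha>1-\alpha$. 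Add this ``large $l_2$ forces large $l_1$'' step and your induction closes exactly as in the paper; without it, the bound $l(T+1)\le l_{\max}$ --- and hence the appeal to Proposition~\ref{prop:unique_threshold_and_expression} --- is unsupported.
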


\begin{proof}
See appendix \ref{app:prop:unique_expression_for_all_T}.
\end{proof}

According to the latter proposition, we can now define at each time $T \geq T_0$, for each class $k$, the vector $\boldsymbol{A}_k(T)=(\alpha_k(T),\alpha_k(T-1),\cdots,\alpha_k(T-l(T)))$, such that, there exists $\beta(T)$ and $\gamma(T)$:
\begin{equation}
\sum_{j=1}^{l(T)-1} p_1\alpha_1(T-j)+ \sum_{j=1}^{l(T)-1} p_2\alpha_2(T-j)+\beta(T) p_1\alpha_1(T-l(T))+\gamma(T)p_2\alpha_2(T-l(T))=1-\alpha
\end{equation} 
where $\beta(T)=0$ and $0 < \gamma(T) \leq 1$, or $0 < \beta(T) \leq 1$ and $\gamma(T)=1$. We note that as we have explained previously, the relation between $\boldsymbol{A}_k(T)$ and $\boldsymbol{z}^k(T)$ is: $p_k\alpha_k(T-1)=z^k_1(T),p_k\alpha_k(T-2)=z^k_2(T),\cdots, p_k\alpha_k(T-l(T))=z^k_{l(T)}(T)$.
\begin{remark}
We emphasize that in the following analysis, $T$ is always considered greater than $T_0$.
\end{remark}
We prove in the sequel that $\max \boldsymbol{A}_k(T)$ is decreasing and $\min \boldsymbol{A}_k(T)$ is increasing (with the $\max$ and $\min$ referring to the element of the vector with the greatest value, and the smallest value respectively). After that, we conclude the convergence of $\max \boldsymbol{A}_k(T)$ and $\min \boldsymbol{A}_k(T)$ when $T$ tends to $+\infty$. Then, we prove that they must converge to the same real number.
In order to prove that $\max \boldsymbol{A}_k(T)$ is decreasing and $\min \boldsymbol{A}_k(T)$ is increasing, we first demonstrate this following proposition.

\begin{proposition}\label{prop:elements_inclusion}
All the elements of the vector $\boldsymbol{A}_k(T+1)$ belong to the elements of the vector $\boldsymbol{A}_k(T)$ except $\alpha_k(T+1)$.
\end{proposition}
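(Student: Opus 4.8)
The plan is to reduce the statement to the single monotonicity bound $l(T+1)\le l(T)+1$ on the unified instantaneous thresholds, and then to establish that bound by contradiction from the identity of Proposition~\ref{prop:unique_expression_for_all_T} applied at the two consecutive times $T$ and $T+1$. First, unrolling the definitions, $\boldsymbol{A}_k(T+1)=(\alpha_k(T+1),\alpha_k(T),\ldots,\alpha_k(T+1-l(T+1)))$ while $\boldsymbol{A}_k(T)=(\alpha_k(T),\alpha_k(T-1),\ldots,\alpha_k(T-l(T)))$. Every entry of $\boldsymbol{A}_k(T+1)$ except its first one, $\alpha_k(T+1)$, carries a time-index in $\{T,T-1,\ldots,T+1-l(T+1)\}$, and this block of indices is contained in $\{T,T-1,\ldots,T-l(T)\}$ precisely when $T+1-l(T+1)\ge T-l(T)$, i.e.\ when $l(T+1)\le l(T)+1$. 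Hence it suffices to prove $l(T+1)\le l(T)+1$ for every $T\ge T_0$.

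Suppose, toward a contradiction, that $l(T+1)\ge l(T)+2$. The identity of Proposition~\ref{prop:unique_expression_for_all_T} at time $T+1$ writes $1-\alpha$ as a sum of nonnegative terms whose ``full'' part runs over $j=1,\ldots,l(T+1)-1$ in both classes, so dropping the two boundary terms and all summands with $j>l(T)+1$ (legitimate since $l(T+1)-1\ge l(T)+1$ and the $\alpha_k(\cdot)$ are nonnegative) gives
\begin{equation}
\sum_{j=1}^{l(T)+1}\big(p_1\alpha_1(T+1-j)+p_2\alpha_2(T+1-j)\big)\ \le\ 1-\alpha .
\end{equation}
On the other hand, the substitution $j\mapsto j-1$ rewrites the same left-hand side as $\sum_{j=0}^{l(T)}\big(p_1\alpha_1(T-j)+p_2\alpha_2(T-j)\big)$; isolating the $j=0$ and $j=l(T)$ terms and replacing the middle block $\sum_{j=1}^{l(T)-1}\big(p_1\alpha_1(T-j)+p_2\alpha_2(T-j)\big)$ by its value from the identity of Proposition~\ref{prop:unique_expression_for_all_T} at time $T$ yields
\begin{equation}
\sum_{j=1}^{l(T)+1}\big(p_1\alpha_1(T+1-j)+p_2\alpha_2(T+1-j)\big)=1-\alpha+p_1\alpha_1(T)+p_2\alpha_2(T)+\big(1-\beta(T)\big)p_1\alpha_1(T-l(T))+\big(1-\gamma(T)\big)p_2\alpha_2(T-l(T)) .
\end{equation}
Since $\beta(T),\gamma(T)\le 1$, all the $\alpha_k(\cdot)$ are nonnegative, and $\alpha_1(T)+\alpha_2(T)=\alpha>0$ forces $p_1\alpha_1(T)+p_2\alpha_2(T)>0$, the right-hand side is strictly larger than $1-\alpha$, contradicting the previous display. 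Therefore $l(T+1)\le l(T)+1$, and by the reduction above the proposition follows.

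Once the reduction is spotted the argument is essentially bookkeeping, so the only point that needs care is pinning down the ``$+1$'': the threshold can genuinely increase from one slot to the next (ageing shifts users up by one state and, when few users have recently been scheduled, $l(\cdot)$ grows), so the target must be $l(T+1)\le l(T)+1$ and not the bound $l(T+1)\le l(T)$, which may fail; correspondingly the boundary coefficients $\beta(T),\gamma(T)$ have to be carried correctly through the re-indexing, and the two empty-sum corner cases $l(T)=1$ (and $l(T+1)=l(T)+1$) should be checked to be consistent. The ingredients used are minimal: the identity of Proposition~\ref{prop:unique_expression_for_all_T} at $T$ and $T+1$ (which already presupposes both times are $\ge T_0$, automatic here), nonnegativity of the scheduled proportions, and $\alpha_1(T)+\alpha_2(T)=\alpha>0$ (a fortiori $\alpha_1(T)>0$ by Lemma~\ref{lem:alpha_greater_stric_0}); in particular, the Whittle-index alternation results of Propositions~\ref{prop:interval_alternation_under_assump} and \ref{prop:alternation_Whittle_index} are used only to guarantee that this identity holds, and are not invoked directly in the present step.
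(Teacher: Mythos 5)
Your proof is correct, but it takes a genuinely different route from the paper's. The paper proves the inclusion by re-running the one-step dynamics argument of Proposition~\ref{prop:unique_expression_for_all_T}: it tracks how each proportion moves over one slot, invokes the Whittle-index alternation (Proposition~\ref{prop:interval_alternation_under_assump}) to order all proportions at time $T+1$, notes that the scheduled-but-failed mass $(1-p_1)\alpha_1(T)+(1-p_2)\alpha_2(T)\leq\alpha$ sits above the boundary terms in that order, and concludes that the new unscheduled mass $1-\alpha$ is drawn only from $\{\alpha_k(T-l(T)),\ldots,\alpha_k(T)\}$, which gives the inclusion (and, in passing, identifies $l(T+1)=m+1$ and the new $\beta,\gamma$). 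You instead reduce the claim to the threshold bound $l(T+1)\leq l(T)+1$ and prove that bound purely algebraically, by confronting the conservation identity of Proposition~\ref{prop:unique_expression_for_all_T} at times $T$ and $T+1$: assuming $l(T+1)\geq l(T)+2$ and discarding nonnegative terms forces $\sum_{j=1}^{l(T)+1}\bigl(p_1\alpha_1(T+1-j)+p_2\alpha_2(T+1-j)\bigr)\leq 1-\alpha$, while re-indexing and substituting the time-$T$ identity shows this same sum equals $1-\alpha+p_1\alpha_1(T)+p_2\alpha_2(T)+(1-\beta(T))p_1\alpha_1(T-l(T))+(1-\gamma(T))p_2\alpha_2(T-l(T))>1-\alpha$ since $\alpha_1(T)+\alpha_2(T)=\alpha>0$; your reduction ($T+1-l(T+1)\geq T-l(T)$ is exactly the index inclusion) and your edge cases check out. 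What each approach buys: yours is shorter, avoids any direct appeal to the ordering/alternation machinery (those ingredients are inherited only implicitly through Proposition~\ref{prop:unique_expression_for_all_T}), and is robust to the mild non-uniqueness of the representation since the contradiction works for whichever $l(T)$, $l(T+1)$ define $\boldsymbol{A}_k(T)$ and $\boldsymbol{A}_k(T+1)$; the paper's dynamic argument is longer but also exhibits explicitly how $\beta(T+1)$, $\gamma(T+1)$ and the new threshold arise from the time-$T$ configuration, which mirrors the bookkeeping reused in Proposition~\ref{prop:inequality_satisfied_by_first_element}.
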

\begin{proof}
See appendix \ref{app:prop:elements_inclusion} 
\end{proof}

With the intention of proving the monotony of $\max \boldsymbol{A}_1(T)$ and $\min \boldsymbol{A}_1(T)$, we still need to prove that the value of $\alpha_1(T+1)$ must be less than $\max \boldsymbol{A}_1(T)$ and greater than $\min \boldsymbol{A}_k(T)$. 
For that, we introduce the following proposition.\\
Before doing that, we note that, as $\alpha_1(t)+\alpha_2(t)=\alpha$ at each time slot $t$, then it is sufficient for us to prove that $\alpha_1(\cdot)$ is converging. To that extent, we study only the vector function $\boldsymbol{A}_1(T)$ in order to prove the convergence.\\

\begin{proposition}\label{prop:inequality_satisfied_by_first_element}
Under assumption \ref{assump:assump_on_alpha}, for a given vector $\boldsymbol{A}_1(T)=(\alpha_1(T), \alpha_1(T-1), \cdots,\alpha_1(T-l(T)))(T \geq T_0)$, we have four possible cases of inequalities:
 $$\alpha_1(T) \leq \alpha_1(T+1) \leq \alpha_1(T-l(T))$$
 $$\alpha_1(T-l(T)) \leq \alpha_1(T+1) \leq \alpha_1(T)$$
 $$\alpha_1(T-l(T)+1) \leq \alpha_1(T+1) \leq \alpha_1(T)$$
 $$\alpha_1(T) \leq \alpha_1(T+1) \leq \alpha_1(T-l(T)+1)$$

Moreover:
If $\alpha_1(T) \leq \alpha_1(T+1) \leq \alpha_1(T-l(T))$, then: 
 $$\alpha_1(T+1)-\alpha_1(T) \leq p_1(\alpha_1(T-l(T))-\alpha_1(T))$$

If $\alpha_1(T-l(T)) \leq \alpha_1(T+1) \leq \alpha_1(T)$, then:
$$\alpha_1(T)-\alpha_1(T+1) \leq p_1(\alpha_1(T)-\alpha_1(T-l(T)))$$

If $\alpha_1(T-l(T)+1) \leq \alpha_1(T+1) \leq \alpha_1(T)$, then:
$$\alpha_1(T)-\alpha_1(T+1) \leq p_1(\alpha_1(T)-\alpha_1(T-l(T)+1))$$

If $\alpha_1(T) \leq \alpha_1(T+1) \leq \alpha_1(T-l(T)+1)$, then:
$$\alpha_1(T+1)-\alpha_1(T) \leq p_1(\alpha_1(T-l(T)+1)-\alpha_1(T))$$

\end{proposition}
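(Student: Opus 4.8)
The plan is to read off $\alpha_1(T+1)$ from a single conservation identity, evaluated at two consecutive times, exploiting the fact that between $T$ and $T+1$ the whole age profile merely shifts by one state. Concretely, I would first write the class-$1$ balance at time $T$: since the total class-$1$ mass is the constant $\gamma_1$ while the non-scheduled class-$1$ mass at time $T$ equals, by the structure of Proposition~\ref{prop:unique_threshold_and_expression} together with the relation $z^1_i(T)=p_1\alpha_1(T-i)$, the quantity $p_1\sum_{i=1}^{l(T)-1}\alpha_1(T-i)+\beta(T)p_1\alpha_1(T-l(T))$, we get
\begin{equation*}
\alpha_1(T)=\gamma_1-p_1\sum_{i=1}^{l(T)-1}\alpha_1(T-i)-\beta(T)p_1\alpha_1(T-l(T)),
\end{equation*}
and the analogous identity at time $T+1$ with $l(T+1),\beta(T+1)$ in place of $l(T),\beta(T)$. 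The key point, which is where Lemma~\ref{lem:z_evolution} enters, is that the two index blocks $\{T-1,\dots,T-l(T)\}$ and $\{T,\dots,T+1-l(T+1)\}$ overlap in all but their endpoints, so subtracting the two identities telescopes the long sums and leaves $\alpha_1(T+1)-\alpha_1(T)$ as a combination of only the ``edge'' terms: the newly created bottom state $\alpha_1(T)$ and the term at the moving threshold, $\alpha_1(T-l(T))$ or $\alpha_1(T-l(T)+1)$, each carrying a factor $p_1$.

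Second, I would pin down the possible shapes of that combination. By Assumption~\ref{assump:assump_on_alpha} and Propositions~\ref{prop:interval_alternation_under_assump} and~\ref{prop:alternation_Whittle_index}, the Whittle indices strictly alternate on $[1,l_{\max}+1]$ and $l(\cdot)\le l_{\max}$; hence the non-scheduled set is always exactly the ``lowest-index staircase'' $z^2_1,z^1_1,z^2_2,z^1_2,\dots$ truncated at mass $1-\alpha$, and the threshold moves by at most one between consecutive times (the fact also underlying Proposition~\ref{prop:elements_inclusion}). This yields a short, finite case list: the threshold at $T+1$ equals $l(T)$ or $l(T)-1$, and the partial boundary fraction sits on class $1$ or on class $2$; combined with the sign of the relevant edge difference, these reduce to the four displayed configurations. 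In each configuration, using $0\le\beta(\cdot),\gamma(\cdot)\le 1$ and Lemma~\ref{lem:alpha_greater_stric_0} to discard degenerate splits, the identity from the first step takes the form $\alpha_1(T+1)=\alpha_1(T)+\mu p_1\bigl(\alpha_1(T-r)-\alpha_1(T)\bigr)$ with $r\in\{l(T),\,l(T)-1\}$ and $\mu\in[0,1]$.

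Finally, the conclusions are immediate: since $\mu p_1\in[0,1]$, the value $\alpha_1(T+1)$ lies between $\alpha_1(T)$ and $\alpha_1(T-r)$, which is the sandwich claimed in each of the four cases (the two sub-cases for each value of $r$ coming from whether $\alpha_1(T-r)\ge\alpha_1(T)$ or $\alpha_1(T-r)\le\alpha_1(T)$); and $\lvert\alpha_1(T+1)-\alpha_1(T)\rvert=\mu p_1\lvert\alpha_1(T-r)-\alpha_1(T)\rvert\le p_1\lvert\alpha_1(T-r)-\alpha_1(T)\rvert$ since $\mu\le 1$, which is exactly the contraction inequality stated for that case.

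I expect the real work — and the main obstacle — to lie in the bookkeeping of the second step: proving that under Assumption~\ref{assump:assump_on_alpha} the non-scheduled staircase really does shift cleanly, so that after subtraction no intermediate term $\alpha_1(T-j)$ with $0<j<l(T)$ survives and only the claimed edge term remains, and correctly tracking the moving boundary when the partial state flips from class $2$ to class $1$ or the threshold drops (there one must handle the scheduled ``tail'' above the threshold, where Lemma~\ref{lem:z_evolution} no longer gives a pure shift). Making the coefficient in front of the edge difference come out as a genuine $\mu p_1$ with $\mu\in[0,1]$ in \emph{all} four cases, rather than merely something resembling a convex combination, is the delicate point that forces the precise four-way split in the statement.
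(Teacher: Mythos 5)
Your starting identity is fine (class-1 mass conservation together with $z^1_i(T)=p_1\alpha_1(T-i)$ for $i\le l(T)$ does give $\gamma_1-\alpha_1(T)=p_1\sum_{i=1}^{l(T)-1}\alpha_1(T-i)+\beta(T)p_1\alpha_1(T-l(T))$), but the mechanism you build on it has a genuine gap. Subtracting this identity at $T$ and $T+1$ does not telescope to the form you claim: for instance when $l(T+1)=l(T)+1$ it gives
\begin{equation*}
\alpha_1(T)-\alpha_1(T+1)=p_1\alpha_1(T)+p_1\bigl(\beta(T+1)-\beta(T)\bigr)\alpha_1(T-l(T)),
\end{equation*}
in which the coefficients of the two ``edge'' terms are independent, and $\beta(T+1)$ and $l(T+1)$ are unknowns that a class-1-only balance cannot determine: they are fixed only by the joint two-class requirement that the newly scheduled mass equal $p_1\alpha_1(T)+p_2\alpha_2(T)$. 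So the asserted closed form $\alpha_1(T+1)=\alpha_1(T)+\mu p_1(\alpha_1(T-r)-\alpha_1(T))$, $\mu\in[0,1]$, is not derived by your argument. Moreover, your ``short, finite case list'' rests on the claim that the boundary moves by at most one state between $T$ and $T+1$; this is false in general. The time index of the boundary, $T+1-l(T+1)$, can jump past several staircase entries whenever $p_1\alpha_1(T)+p_2\alpha_2(T)$ exceeds the sum of several top entries (which can be arbitrarily small), and the paper devotes the sub-cases parametrized by an arbitrary $m\ge 1$ precisely to that situation; Proposition~\ref{prop:elements_inclusion} only bounds the motion in the other direction ($l(T+1)\le l(T)+1$).

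The paper's proof proceeds differently where yours breaks down: it writes $\alpha_k(T+1)=(1-p_k)\alpha_k(T)+B_k(T)$ with $B_1(T)+B_2(T)=p_1\alpha_1(T)+p_2\alpha_2(T)$, locates the cut of this newly scheduled mass in the Whittle-ordered staircase (six sub-cases when $\gamma(T)=1$, five when $\beta(T)=0$, including those where many intermediate terms $\alpha_1(T-l(T)+1),\dots,\alpha_1(T-l(T)+m)$ are absorbed at once), and then converts the two-sided defining inequality of each sub-case into bounds on $\alpha_1(T+1)-\alpha_1(T)$ using $\alpha_1(\cdot)+\alpha_2(\cdot)=\alpha$ and $p_2<p_1$. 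This coupling step is essential: e.g.\ in the first sub-case one gets $\alpha_1(T+1)-\alpha_1(T)=p_2\alpha_2(T)$, which is comparable to $p_1(\alpha_1(T-l(T))-\alpha_1(T))$ only after translating class-2 differences into class-1 differences, not by any telescoping of a class-1 identity. You correctly flag this bookkeeping as the hard part, but the route you propose to carry it out would not go through as stated.
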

\begin{proof}
See appendix \ref{app:prop:inequality_satisfied_by_first_element}.
\end{proof}

\begin{theorem}\label{theor:convergence}
$\min \boldsymbol{A}_1(T)$ and $\max \boldsymbol{A}_1(T)$ converge and we denote their limits respectively by $l_1$ and $l_2$.  
\end{theorem}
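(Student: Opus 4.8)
The plan is to show that, for $T\geq T_0$, the real sequence $T\mapsto\max\boldsymbol{A}_1(T)$ is non-increasing and the real sequence $T\mapsto\min\boldsymbol{A}_1(T)$ is non-decreasing, and that both stay inside the bounded interval $[0,\alpha]$. The monotone convergence theorem then yields the result, with $l_2=\inf_{T\geq T_0}\max\boldsymbol{A}_1(T)$ and $l_1=\sup_{T\geq T_0}\min\boldsymbol{A}_1(T)$ (and automatically $l_1\le l_2$ since $\min\boldsymbol{A}_1(T)\le\max\boldsymbol{A}_1(T)$). Boundedness is immediate: at every time slot $t$ we have $\alpha_1(t)+\alpha_2(t)=\alpha$ with $\alpha_1(t),\alpha_2(t)\geq0$, so every coordinate of $\boldsymbol{A}_1(T)$ lies in $[0,\alpha]$, hence so do $\max\boldsymbol{A}_1(T)$ and $\min\boldsymbol{A}_1(T)$.

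For the monotonicity I would argue as follows. By Proposition~\ref{prop:elements_inclusion}, every coordinate of $\boldsymbol{A}_1(T+1)$ other than the freshly added entry $\alpha_1(T+1)$ is already a coordinate of $\boldsymbol{A}_1(T)$; therefore the maximum (resp. minimum) of $\boldsymbol{A}_1(T+1)$ taken over these "old" coordinates is at most $\max\boldsymbol{A}_1(T)$ (resp. at least $\min\boldsymbol{A}_1(T)$). It thus only remains to locate the new entry $\alpha_1(T+1)$, which is exactly the content of Proposition~\ref{prop:inequality_satisfied_by_first_element}: in each of its four cases $\alpha_1(T+1)$ is squeezed between two coordinates of $\boldsymbol{A}_1(T)$ — either between $\alpha_1(T)$ and $\alpha_1(T-l(T))$, or between $\alpha_1(T)$ and $\alpha_1(T-l(T)+1)$ — all of which are entries of $\boldsymbol{A}_1(T)$. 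Hence $\min\boldsymbol{A}_1(T)\le\alpha_1(T+1)\le\max\boldsymbol{A}_1(T)$. Combining the two observations gives $\max\boldsymbol{A}_1(T+1)\le\max\boldsymbol{A}_1(T)$ and $\min\boldsymbol{A}_1(T+1)\ge\min\boldsymbol{A}_1(T)$ for all $T\geq T_0$, which is the desired monotonicity. Applying the monotone convergence theorem to each of these bounded monotone sequences produces the limits $l_2$ and $l_1$.

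I expect the main obstacle here to be bookkeeping rather than conceptual: the vector $\boldsymbol{A}_1(T)$ has a \emph{time-varying} length $l(T)+1$ (this is the genuinely non-square aspect of the recursion emphasized earlier), so one must check carefully that each quantity invoked above — in particular the entry $\alpha_1(T-l(T)+1)$ appearing in two of the four cases of Proposition~\ref{prop:inequality_satisfied_by_first_element} — is actually a coordinate of $\boldsymbol{A}_1(T)$ (it is, because $l(T)\ge1$, and when $l(T)=1$ it coincides with $\alpha_1(T)$). Once Propositions~\ref{prop:elements_inclusion} and \ref{prop:inequality_satisfied_by_first_element} are in hand, no further estimates are needed.
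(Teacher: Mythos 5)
Your argument is correct and follows essentially the same route as the paper's own proof: boundedness of $\alpha_1(\cdot)$ in $[0,\alpha]$, monotonicity of $\max\boldsymbol{A}_1(T)$ and $\min\boldsymbol{A}_1(T)$ obtained by combining Proposition \ref{prop:elements_inclusion} (old coordinates) with Proposition \ref{prop:inequality_satisfied_by_first_element} (the new entry $\alpha_1(T+1)$ lies between two coordinates of $\boldsymbol{A}_1(T)$), and then monotone convergence. Your extra bookkeeping remark that $\alpha_1(T-l(T)+1)$ is indeed a coordinate of $\boldsymbol{A}_1(T)$ is a harmless refinement the paper leaves implicit.
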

\begin{proof}
According to Proposition \ref{prop:elements_inclusion}, the elements of the vector $\boldsymbol{A}_1(T+1)$ except the first element which is $\alpha_1(T+1)$ belong to the elements of the vector $\boldsymbol{A}_1(T)$. Hence, the values of these elements (except the first element of $\boldsymbol{A}_1(T+1)$) is less than $\max \boldsymbol{A}_1(T)$ and greater than $\min \boldsymbol{A}_1(T)$. 
According to the first result of Proposition \ref{prop:inequality_satisfied_by_first_element}, we deduce that $\alpha_1(T+1)$ is between two values of two elements of the vector $\boldsymbol{A}_1(T)$. Hence, combining the results of Proposition \ref{prop:elements_inclusion} and \ref{prop:inequality_satisfied_by_first_element}, $\max \boldsymbol{A}_1(T+1) \leq \max \boldsymbol{A}_1(T)$ and $\min \boldsymbol{A}_1(T+1)\geq \min \boldsymbol{A}_1(T)$.
Then $\max \boldsymbol{A}_1(T)$ is decreasing with $T$ and $\min \boldsymbol{A}_1(T)$ is increasing with $T$. Given that for all $T$, $0 \leq \alpha_1(T) \leq \alpha$, then $\max \boldsymbol{A}_1(T)$ and $\min \boldsymbol{A}_1(T)$ are bounded by $0$ and $\alpha$. 
Therefore, we can conclude that $\min \boldsymbol{A}_1(T)$ and $\max \boldsymbol{A}_1(T)$ converge and we denote their limits by  $l_1$ and $l_2$ respectively. Moreover  $\max \boldsymbol{A}_1(T)$ is lower bounded by $l_2$ and  $\min \boldsymbol{A}_1(T)$ is upper bounded by $l_1$.
\end{proof}

However, in order to have $\alpha_1(T)$ converges to a unique point, we need to establish that $\max \boldsymbol{A}_1(T)$ and $\min \boldsymbol{A}_1(T)$ converge to the same limit. In other words, we need to prove that $l_1=l_2$. For that, we will use the second result of Proposition \ref{prop:inequality_satisfied_by_first_element}. To that extent, we proceed by contradiction, i.e. we suppose that $l_1 \neq l_2$. More specifically, given that $l_1 \leq l_2$ by definition, the two possible cases satisfied by $l_1$ and $l_2$ are: $l_1 < l_2$ or $l_1 =l_2$, then to show that  $l_1=l_2$, it is sufficient to find a contradiction considering $l_1<l_2$.\\
In fact, we prove that if $l_1 < l_2$, there exists $T_d$ such that all the elements of $\boldsymbol{A}_1(T_d)$ are less strictly than $l_2$, that contradicts with the fact that $\max \boldsymbol{A}_1(T)$ is lower bounded by $l_2$.\\
As $\max \boldsymbol{A}_1(T)$ converges to $l_2$, then for a given $\epsilon >0$, there exists a given time slot that we denote by $T_{\epsilon}\geq T_0$ such that for all $T \geq T_{\epsilon}$, $\max \boldsymbol{A}_1(T) < l_2+\epsilon$.
Our proof consists of showing that for a small enough $\epsilon$, there exists $T \geq T_{\epsilon}$, $\max \boldsymbol{A}_1(T)$ is less strictly than $l_2$.  
We need first to determine an upper bound of the number of the elements of the vector $\boldsymbol{A}_1(T)$ whatever $T$. In fact, as we have demonstrated that at each time $T$, the instantaneous threshold $l(T)$ is less than $l_{\max}$. Then the number of the elements of $\boldsymbol{A}_1(T)$ will not exceed $l_{\max}+1$. In the following proof, we denote $l_{\max}$ by $L$.\\
\begin{proposition}\label{prop:max_alpha_less_l_2}
If $l_1<l_2$, for $\epsilon \leq (l_2-l_1)\frac{(1-p_1)^L}{1-(1-p_1)^L}$, there exist $T_d \geq T_{\epsilon}$ such that all the elements of $\boldsymbol{A}_1(T_d)$ are less strictly than $l_2$.
\end{proposition}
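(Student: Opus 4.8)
The plan is to exploit the dichotomy established in Proposition~\ref{prop:inequality_satisfied_by_first_element} to bound how much the maximum of $\boldsymbol{A}_1(T)$ can fail to decrease over a window of length $L+1$, and then iterate that bound to push $\max\boldsymbol{A}_1(T)$ strictly below $l_2$. First I would fix $\epsilon>0$ small, take $T_\epsilon\geq T_0$ from Theorem~\ref{theor:convergence} so that $\max\boldsymbol{A}_1(T)<l_2+\epsilon$ and $\min\boldsymbol{A}_1(T)>l_1-\epsilon$ for all $T\geq T_\epsilon$, and observe that by Proposition~\ref{prop:elements_inclusion} every element of $\boldsymbol{A}_1(T+1)$ other than $\alpha_1(T+1)$ already lies in $\boldsymbol{A}_1(T)$. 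Hence to control $\max\boldsymbol{A}_1(T+1)$ it suffices to control the newly added term $\alpha_1(T+1)$ together with the fact that the old maximum may "age out" of the window after at most $L+1$ steps (since $\boldsymbol{A}_1(T)$ has at most $L+1$ entries, $\alpha_1(T),\dots,\alpha_1(T-l(T))$ with $l(T)\leq L$).

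The key quantitative step is the second half of Proposition~\ref{prop:inequality_satisfied_by_first_element}: in each of the four cases, $\alpha_1(T+1)$ differs from $\alpha_1(T)$ by at most $p_1$ times the spread of two elements of $\boldsymbol{A}_1(T)$, and that spread is at most $\max\boldsymbol{A}_1(T)-\min\boldsymbol{A}_1(T)\leq (l_2+\epsilon)-(l_1-\epsilon)=(l_2-l_1)+2\epsilon$ for $T\geq T_\epsilon$. I would use this to show that starting from any time where the maximal value $l_2$ (approximately attained) is still present in the window, after it leaves the window — which happens within $L+1$ steps — every entry of $\boldsymbol{A}_1$ has been "refreshed" and each refreshed entry sits strictly below $l_2$ by a definite amount determined by the geometric factor $p_1$ applied $L+1$ times, i.e.\ roughly $(1-p_1)^{L}$ times the gap $l_2-l_1$. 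Choosing $\epsilon\leq (l_2-l_1)\frac{(1-p_1)^L}{1-(1-p_1)^L}$ makes the accumulated error from the $2\epsilon$ slack strictly smaller than this definite drop, so at some $T_d\geq T_\epsilon$ we get $\max\boldsymbol{A}_1(T_d)<l_2$, the desired contradiction with the lower bound $\max\boldsymbol{A}_1(T)\geq l_2$ from Theorem~\ref{theor:convergence}.

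Concretely, I would set $T_d$ to be $T_\epsilon$ plus enough steps (on the order of $L+1$, or a small multiple thereof) for the element of $\boldsymbol{A}_1$ that is $\geq l_2$ at time $T_\epsilon$ to be evicted from the sliding window, and then track the chain of inequalities: each application of Proposition~\ref{prop:inequality_satisfied_by_first_element} contracts the distance of the new first element to the interval $[l_1,l_2]$ by the factor $p_1$ relative to the current spread, so after $L+1$ applications the new entries lie within $l_2 - c$ for an explicit $c = c(p_1,L,l_2-l_1)>0$, and the bookkeeping of the $2\epsilon$ perturbations is exactly what the stated choice of $\epsilon$ absorbs. The main obstacle I anticipate is the case analysis: Proposition~\ref{prop:inequality_satisfied_by_first_element} gives four different inequality patterns for $\alpha_1(T+1)$ (two with $l(T)$, two with $l(T)-1$ in the index), and the eviction argument must be carried out uniformly across whichever pattern occurs at each step, while also handling the possibility that the instantaneous threshold $l(T)$ — hence the window length — changes from step to step (it stays $\leq L$ by Proposition~\ref{prop:unique_expression_for_all_T}, which is what keeps the argument finite). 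Making the "refresh within $L+1$ steps" claim rigorous against a time-varying window, and verifying that the contraction factor is genuinely $p_1<1$ in every branch, is where the care is needed; the rest is a geometric-series estimate matched to the given threshold on $\epsilon$.
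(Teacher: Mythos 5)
Your overall skeleton matches the paper's in several respects: work past a time $T_{\epsilon}$ where $\max \boldsymbol{A}_1(T)<l_2+\epsilon$, use the increment inequalities of Proposition \ref{prop:inequality_satisfied_by_first_element}, observe that after at most $L+1$ steps the window $\boldsymbol{A}_1(\cdot)$ consists only of freshly added terms (since $l(T)\leq L$), and match the quantity $(1-p_1)^L(l_2-l_1)$ against the hypothesis on $\epsilon$. But there is a genuine gap at the central step: you never establish a \emph{low anchor}, and without one the claim that ``each refreshed entry sits strictly below $l_2$ by a definite amount'' is unjustified. What Proposition \ref{prop:inequality_satisfied_by_first_element} gives on an up-step is $\alpha_1(T+1)\leq (1-p_1)\alpha_1(T)+p_1\,\alpha_1(T-l(T))\leq (1-p_1)\alpha_1(T)+p_1(l_2+\epsilon)$; if $\alpha_1(T)$ is itself near $l_2+\epsilon$, the new entry can again be near $l_2+\epsilon$, so iterating this bound alone never forces anything below $l_2$. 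Indeed, the hypothetical scenario $\alpha_1(t)\equiv l_2$ satisfies every inequality you invoke (increments are zero, $\max<l_2+\epsilon$, $\min>l_1-\epsilon$), yet $\max\boldsymbol{A}_1$ never drops below $l_2$. Your phrase about ``contracting the distance to the interval $[l_1,l_2]$ by the factor $p_1$ relative to the current spread'' is not a consequence of the proposition, and the bound you take from convergence, $\min\boldsymbol{A}_1(T)>l_1-\epsilon$, points in the wrong direction: the fact that is actually needed is the \emph{upper} bound $\min\boldsymbol{A}_1(T)\leq l_1$, valid for every $T$ because $\min\boldsymbol{A}_1(T)$ is increasing with limit $l_1$.

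That is precisely how the paper's proof closes the argument. Since $\min \boldsymbol{A}_1(T_{\epsilon}+L)\leq l_1$ and the window reaches back at most $L$ slots, there is a time $T_2\geq T_{\epsilon}$ with $\alpha_1(T_2)\leq l_1$. From this anchor one defines $u_0=l_1$, $u_{n+1}=p_1(l_2+\epsilon)+(1-p_1)u_n$, proves $\alpha_1(T_2+n)\leq u_n$ by induction (up-steps via the inequality above, down-steps via the monotonicity of $u_n$), and computes $u_L=l_2+\epsilon\bigl(1-(1-p_1)^L\bigr)-(l_2-l_1)(1-p_1)^L<l_2$ exactly under the stated bound on $\epsilon$; the correct per-step statement is that on an up-step the gap $l_2+\epsilon-\alpha_1(\cdot)$ shrinks by at most the factor $1-p_1$, which is where $(1-p_1)^L$ enters. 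Taking $T_d=T_2+L$, every element of $\boldsymbol{A}_1(T_d)$ is one of $\alpha_1(T_2),\dots,\alpha_1(T_2+L)$, hence strictly below $l_2$. Your eviction and $\epsilon$-bookkeeping survive unchanged once this anchor step is inserted; without it the proof does not go through.
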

\begin{proof}
See appendix \ref{app:prop:max_alpha_less_l_2}.
\end{proof}
Providing that $l_2$ is a lower bound of $\max \boldsymbol{A}_1(T)$ which contradicts with the result of the above proposition. Hence, the supposition of $l_1 \neq l_2$ is not valid.\\
Therefore, $l_1=l_2$. Consequently, $\max \boldsymbol{A}_1(T)$ and $\min \boldsymbol{A}_1(T)$ converge to the same limit denoted $\alpha_1^*$. Given that $\min \boldsymbol{A}_1(T) \leq \alpha_1(T) \leq \max \boldsymbol{A}_1(T)$ for all $T$, then $\alpha_1(T)$ also converges to $\alpha_1^*$. Similarly, $\alpha_2(T)$ converges to $\alpha-\alpha_1^*=\alpha_2^*$. 
In the following proposition, we prove that $\boldsymbol{z}(t)$ converges.

\begin{proposition}\label{prop:z_convergence}
If $\alpha_k(t)$ converges to $\alpha_k^*$, then for each state $i$ and class $k$, $z^k_i(t)$ converges to $z^{k,*}_i$.
\end{proposition}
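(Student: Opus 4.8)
The plan is to turn the hypothesis ``$\alpha_k(t)\to\alpha_k^*$'' into componentwise convergence of $\boldsymbol z(t)$ by exploiting the explicit one-step dynamics of Lemma~\ref{lem:z_evolution} together with the structure already secured in Proposition~\ref{prop:unique_expression_for_all_T}. Recall from the latter that for every $T\ge T_0$ the unified instantaneous threshold satisfies $l(T)\le l_{\max}=:L$, every non-scheduled component is $z^k_i(T)=p_k\alpha_k(T-i)$ for $1\le i\le l(T)$, and the scheduled components of class $k$ sum to $\alpha_k(T)$. The first, immediate observation is that non-scheduled components converge: for each fixed state $i$, whenever $i\le l(T)$ we have $z^k_i(T)=p_k\alpha_k(T-i)\to p_k\alpha_k^*$. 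It then remains to (a) identify the limit of the threshold $l(T)$ and of the boundary fractions $\beta(T),\gamma(T)$, (b) propagate convergence to the scheduled states, and (c) identify the overall limit with $\boldsymbol z^*$.

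For (a): write the identity of Proposition~\ref{prop:unique_expression_for_all_T} as $\sum_{j=1}^{l(T)-1}\bigl(p_1\alpha_1(T-j)+p_2\alpha_2(T-j)\bigr)+\beta(T)p_1\alpha_1(T-l(T))+\gamma(T)p_2\alpha_2(T-l(T))=1-\alpha$. Along any subsequence on which $l(T)\equiv l$, the main sum tends to $(l-1)\rho$ with $\rho:=p_1\alpha_1^*+p_2\alpha_2^*$, while the boundary term stays in a fixed interval strictly inside $(0,\rho)$ (one of $\beta,\gamma$ is pinned to $1$ and $p_k\alpha_k^*\in(0,\rho)$). Since $l(T)\le L$ is a uniform bound, at most one integer $l^*$ is compatible with the identity in the limit, so $l(T)=l^*$ for all $T$ large; in the single borderline situation $1-\alpha=l^*\rho$, where $l(T)$ could oscillate between $l^*$ and $l^*+1$, the identity forces $\beta(T),\gamma(T)$ to their extreme values and one checks directly that the two descriptions give the same limit for each component, so this case does no harm. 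With $l(T)\equiv l^*$ eventually, the identity then pins $\beta(T)\to\beta^*$ and $\gamma(T)\to\gamma^*$.

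For (b): once $t$ is large the scheduled ages of class $k$ are exactly $\{i\ge l^*\}$ (the age-$l^*$ mass being split according to $\beta(t)$ or $\gamma(t)$), and because ages are unbounded there is no boundary effect at the top; hence the only inflow into a scheduled state $i\ge l^*+1$ at time $t+1$ comes from age $i-1$, namely the idle fraction of that mass (nonzero only when $i-1=l^*$) plus the failed fraction of its scheduled part. This expresses $z^k_i(t+1)$ as a continuous function of $z^k_{i-1}(t),\beta(t),\gamma(t)$ (indeed $z^k_i(t+1)=(1-p_k)z^k_{i-1}(t)$ for $i\ge l^*+2$). Iterating this recursion $i-l^*$ steps down to the age-$l^*$ mass, which converges by (a) together with the convergence of $\beta(t),\gamma(t)$, shows $z^k_i(t)$ converges for every $i>l^*$ as well; the factor $1-p_k<1$ simultaneously controls the tail, so no mass escapes to infinity (the components beyond the running maximum $m^k(t)$ are $0$, consistent with the geometrically decaying limit, and $\sum_i z^k_i(t)=\gamma_k$ passes to the limit). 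Therefore $\boldsymbol z(t)$ converges componentwise to some $\boldsymbol z^\infty$.

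For (c): the componentwise limit $\boldsymbol z^\infty:=\lim_{t\to\infty}\boldsymbol z(t)$ now exists, and since any finite limit of the fluid recursion \eqref{eq:fluid_approxiamtion} is necessarily a fixed point of it, Proposition~\ref{prop:fixed_point} forces $\boldsymbol z^\infty=\boldsymbol z^*$; that is, $z^k_i(t)\to z^{k,*}_i$ for every state $i$ and class $k$. The main obstacle lies in steps (a)--(b): because the state space is infinite and its effective dimension $m^k(t)$ varies with $t$, one must argue uniformly that the scheduled tail is geometrically small, and one must treat the borderline threshold configuration with care so that $\beta(t),\gamma(t)$ are shown genuinely to converge rather than merely to oscillate.
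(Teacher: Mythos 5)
Your proposal is correct and follows essentially the same route as the paper's own proof: componentwise convergence of the non-scheduled states from the convergence of $\alpha_k(\cdot)$, convergence of the boundary fractions $\beta(t),\gamma(t)$ at the limiting threshold, propagation to the higher states via the recursion $z^k_{i+1}(t+1)=(1-p_k)z^k_i(t)$, and identification of the limit with $\boldsymbol{z}^*$ through Proposition~\ref{prop:fixed_point}. The differences (the paper proves the low-state convergence by induction on eventual passivity rather than quoting $z^k_i(T)=p_k\alpha_k(T-i)$, and pins down $\beta(t),\gamma(t)$ by a subsequence case analysis) are presentational rather than substantive.
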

\begin{proof}
See appendix \ref{app:prop:z_convergence}.
\end{proof}

However, we still have to establish that the stochastic vector $\boldsymbol{Z}^N(t)$ converges to $\boldsymbol{z}^*$ in probability when $N$ scales. For that, we introduce the following proposition inspired from the discrete-time version of Kurtz Theorem in \cite{kurtz1978strong}.
Before that, knowing that the norms on the infinite dimension vector space are not equivalents, we work only with a specific norm which will be useful to show the optimality of the Whittle index's policy. Accordingly, we define $||\cdot||$ as follows:
\begin{equation}
||\boldsymbol{v}||=\sum_{i=1}^{+\infty}|v_i^1|i+\sum_{i=1}^{+\infty}|v_i^2|i
\end{equation}
where $v_i^k$ is the $i$-th component in the class $k$ of the vector $\boldsymbol{v}$. The reason behind chosen a such norm will be revealed in the proof of Proposition \ref{prop:optimality_whittle_index}. 
\begin{proposition} \label{prop:kurth_theorem_age}
For any $\mu>0$ and finite time horizon $T$, there exists positive
constant $C$ such that
\[P_{\boldsymbol{x}}(\underset{0 \leq t < T}{\text{sup}} ||\boldsymbol{Z}^N(t)-\boldsymbol{z}(t)|| \geq \mu) \leq \frac{C}{N}\]
where $P_{\boldsymbol{x}}$ denotes the probability conditioned on the initial state $\boldsymbol{Z}^N(0) = \boldsymbol{z}(0)= \boldsymbol{x}$. Furthermore, $C$ is independent of $N$.
\end{proposition}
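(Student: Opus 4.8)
The plan is to run the classical stochastic‑approximation / Kurtz argument: decompose the one‑step increment of $\boldsymbol{Z}^N$ into its conditional drift plus a martingale difference, notice that by the very definition \eqref{eq:fluid_approxiamtion} of $\boldsymbol{z}(\cdot)$ the drift evaluated along the fluid trajectory equals $\boldsymbol{z}(t+1)-\boldsymbol{z}(t)$, then close the estimate with a discrete Gr\"onwall inequality and a maximal inequality for the martingale part. Concretely, let $\mathcal{F}_t$ be the natural filtration and set $\boldsymbol{G}(\boldsymbol{z}):=\mathbb{E}[\boldsymbol{Z}^N(t+1)-\boldsymbol{Z}^N(t)\mid\boldsymbol{Z}^N(t)=\boldsymbol{z}]$. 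Since the Whittle index policy is a deterministic (threshold) function of the current state, the transition kernel is time‑homogeneous, so $\boldsymbol{G}$ does not depend on $t$ and $\boldsymbol{z}(t+1)-\boldsymbol{z}(t)=\boldsymbol{G}(\boldsymbol{z}(t))$. Defining the $\mathcal{F}_t$‑martingale $\boldsymbol{M}^N(t)=\sum_{s=0}^{t-1}\bigl(\boldsymbol{Z}^N(s+1)-\boldsymbol{Z}^N(s)-\boldsymbol{G}(\boldsymbol{Z}^N(s))\bigr)$ and using $\boldsymbol{Z}^N(0)=\boldsymbol{z}(0)=\boldsymbol{x}$ gives the identity
$$\boldsymbol{Z}^N(t)-\boldsymbol{z}(t)=\sum_{s=0}^{t-1}\bigl(\boldsymbol{G}(\boldsymbol{Z}^N(s))-\boldsymbol{G}(\boldsymbol{z}(s))\bigr)+\boldsymbol{M}^N(t).$$

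The next step is to show that $\boldsymbol{G}$ is Lipschitz for the norm $\|\cdot\|$. The only state‑dependent ingredient of the drift is the scheduled proportions $\alpha_1(\cdot),\alpha_2(\cdot)$, which are obtained by ranking users according to the fixed, state‑independent Whittle indices $w_1,w_2$ and cutting at total mass $\alpha$; the mass of an age–class distribution lying above a given Whittle‑index threshold is a $1$‑Lipschitz functional of $\boldsymbol{z}$ in the underlying $\ell_1$‑type norm, so each coordinate of $\boldsymbol{G}(\boldsymbol{z})$ is Lipschitz in $\boldsymbol{z}$, and the age weights $i$ appearing in $\|\cdot\|$ merely rescale the Lipschitz constant over the (finitely many, over a bounded horizon) populated states. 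Granting $\|\boldsymbol{G}(\boldsymbol{z})-\boldsymbol{G}(\boldsymbol{z}')\|\le L\|\boldsymbol{z}-\boldsymbol{z}'\|$, the identity above yields $\|\boldsymbol{Z}^N(t)-\boldsymbol{z}(t)\|\le L\sum_{s=0}^{t-1}\|\boldsymbol{Z}^N(s)-\boldsymbol{z}(s)\|+\sup_{s\le t}\|\boldsymbol{M}^N(s)\|$, and the discrete Gr\"onwall lemma gives $\sup_{0\le t<T}\|\boldsymbol{Z}^N(t)-\boldsymbol{z}(t)\|\le e^{LT}\sup_{0\le t<T}\|\boldsymbol{M}^N(t)\|$, reducing the claim to a tail bound on the martingale.

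For the martingale term I would exploit the model structure: in every slot each scheduled user independently either resets to age $1$ (probability $p_k$) or ages by one, while all other users shift deterministically, so for each coordinate the increment $M_i^{k}(s+1)-M_i^{k}(s)$ is a centred sum of at most $\alpha N$ independent terms of size $O(1/N)$, whence $\mathrm{Var}\bigl(M_i^{k}(s+1)-M_i^{k}(s)\bigr)\le\alpha/(4N)$ uniformly in $i$, $s$. Over the horizon $[0,T]$, starting from the fixed $\boldsymbol{x}$, only the ages $i$ up to (the largest age in the support of $\boldsymbol{x}$) $+\,T=:J$ can be populated, so $\|\boldsymbol{M}^N(t)\|$ is a finite weighted sum over these $J$ coordinates. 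Applying Doob's maximal inequality coordinatewise to each martingale $M_i^{k}(\cdot)$ and then a weighted Cauchy--Schwarz over the $J$ relevant ages gives $\mathbb{E}\bigl[(\sup_{0\le t<T}\|\boldsymbol{M}^N(t)\|)^2\bigr]\le C'(\boldsymbol{x},T)/N$. Combining this with the Gr\"onwall bound and Markov's inequality,
$$P_{\boldsymbol{x}}\Bigl(\sup_{0\le t<T}\|\boldsymbol{Z}^N(t)-\boldsymbol{z}(t)\|\ge\mu\Bigr)\le\frac{e^{2LT}}{\mu^{2}}\,\mathbb{E}\Bigl[\bigl(\sup_{0\le t<T}\|\boldsymbol{M}^N(t)\|\bigr)^{2}\Bigr]\le\frac{C}{N},$$
with $C=C(\mu,T,\boldsymbol{x})$ independent of $N$, which is the assertion.

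I expect the main obstacle to be the Lipschitz/regularity step for $\boldsymbol{G}$: the Whittle index policy selects users by a hard threshold on the index, so the per‑class scheduled proportions are only piecewise smooth in $\boldsymbol{z}$, and one must argue carefully — using that $w_1(\cdot),w_2(\cdot)$ are strictly increasing and that at worst a two‑class tie occurs at the cutoff index — that the induced drift map is nonetheless globally Lipschitz on the simplex of age distributions in the chosen weighted norm. A secondary technical point is keeping $C$ free of $N$ despite the age weights $i$ in $\|\cdot\|$; this is exactly what forces the explicit restriction to the finitely many ages reachable from $\boldsymbol{x}$ within the horizon $T$, which is harmless here since the proposition is stated conditionally on the fixed initial state $\boldsymbol{x}$.
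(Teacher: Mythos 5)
Your proposal is correct in outline, but it takes a genuinely different route from the paper's proof. The paper argues in two lemmas: first a one-step bound $P(\|\boldsymbol{Z}^N(t+1)-\boldsymbol{Q}(\boldsymbol{z})\boldsymbol{z}\|\geq\mu \mid \boldsymbol{Z}^N(t)=\boldsymbol{z})\leq C(\boldsymbol{z})/N$ obtained coordinatewise from Chebyshev's inequality and the binomial structure of the transitions, then an induction on $t$ that propagates closeness to the fluid trajectory by using that $\boldsymbol{z}\mapsto\boldsymbol{Q}(\boldsymbol{z})$ is \emph{locally constant} in a neighborhood of $\boldsymbol{z}(t)$ (so no global regularity of the drift is ever invoked), and finally a union bound over the $T$ slots. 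You instead use the classical Kurtz-type decomposition into drift plus martingale, a global Lipschitz bound on the drift over the ages reachable from $\boldsymbol{x}$ within the horizon, discrete Gr\"onwall, and Doob's maximal inequality plus Markov. The step you flag as delicate does go through: because the Whittle indices are fixed functions of the bin, the scheduled mass in bin $(k,i)$ can be written as $\min\{z_i^k,(\alpha-S_{k,i}(\boldsymbol{z}))^+\}$ with $S_{k,i}$ the (linear) mass in bins of strictly higher index, so the drift is continuous and piecewise linear, hence globally Lipschitz on the truncated domain, with constant depending only on $\boldsymbol{x}$ and $T$ through the maximal reachable age (the same truncation the paper exploits via $m_k(\boldsymbol{z}(t))=m_k(\boldsymbol{x})+t$); the only caveat is that at exact index ties $w_1(i)=w_2(j)$ the tie-breaking rule must be fixed consistently in the stochastic system and in the fluid recursion so that $\boldsymbol{G}$ is single-valued. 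As for what each approach buys: yours delivers the supremum bound in a single maximal inequality and makes the dependence of $C$ on the horizon explicit (of order $e^{2LT}/\mu^2$), at the price of actually carrying out the Lipschitz verification you only sketch; the paper's induction sidesteps any Lipschitz argument by using the piecewise-constant structure of $\boldsymbol{Q}(\cdot)$ near the fluid path, at the price of a slot-by-slot union bound and a less explicit constant.
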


\begin{proof}
See appendix \ref{app:prop:kurth_theorem_age}.
\end{proof}

According to the Proposition above, the system state $\boldsymbol{Z}^N(t)$ behaves very close to the fluid approximation model $\boldsymbol{z}(t)$ when the number of users $N$ is large and starting from any initial state.
To that extent, in order to establish the optimality of Whittle's index policy, we give first this following lemma which is a consequence of the Proposition \ref{prop:kurth_theorem_age}.

\begin{Lemma}\label{lem:result_kurth_theorem}
For any $\mu > 0$, there exists a time $T_0$ such that for each $T > T_0$, there
exists a positive constant $s$ with,
\[P_{\boldsymbol{x}}(\underset{T_0 \leq t < T}{\text{sup}} ||\boldsymbol{Z}^N(t)-\boldsymbol{z}^*|| \geq \mu) \leq \frac{s}{N}\]
\end{Lemma}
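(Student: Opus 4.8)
The plan is to glue together two facts already in hand: the fluid-to-stochastic closeness of Proposition~\ref{prop:kurth_theorem_age}, and the convergence of the fluid trajectory $\boldsymbol{z}(t)$ to the fixed point $\boldsymbol{z}^*$ coming out of Propositions~\ref{prop:z_convergence} and~\ref{prop:fixed_point}, joined by a triangle inequality and monotonicity of probability in the supremum range. First I would strengthen Proposition~\ref{prop:z_convergence} from componentwise convergence to convergence in the age-weighted norm $\|\cdot\|$. The leverage is Proposition~\ref{prop:unique_expression_for_all_T}: for all $t\geq T_0$ the instantaneous threshold $l(t)$ stays below $l_{\max}=T_{\max}$, so by Lemma~\ref{lem:z_evolution} and the transition structure any mass sitting above age $T_{\max}$ is scheduled and multiplied by $1-p_k$ each step, giving a uniform-in-$t$ geometric bound of the form $z_i^k(t)\leq \gamma_k(1-p_k)^{\,i-T_{\max}}$ for $i>T_{\max}$ (possibly after a short burn-in to flush the mass the initial condition left at high ages), and the same geometric bound holds for $z_i^{k,*}$ by Proposition~\ref{prop:stat_distribution}. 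Hence $\sum_{i>M_0}|z_i^k(t)-z_i^{k,*}|\,i$ is below any prescribed $\varepsilon$ once $M_0$ is large, uniformly in $t$, while the finitely many terms with $i\leq M_0$ converge by Proposition~\ref{prop:z_convergence}; interchanging the limit in $t$ with the infinite sum is therefore legitimate and $\|\boldsymbol{z}(t)-\boldsymbol{z}^*\|\to 0$.

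Given $\mu>0$, I would then pick $T_0$ (enlarging the $T_0$ of the previous step if needed) so that $\|\boldsymbol{z}(t)-\boldsymbol{z}^*\|<\mu/2$ for every $t\geq T_0$, and fix an arbitrary $T>T_0$. On the event $\{\sup_{T_0\leq t<T}\|\boldsymbol{Z}^N(t)-\boldsymbol{z}^*\|\geq\mu\}$ there is a time $t\in[T_0,T)$ with $\|\boldsymbol{Z}^N(t)-\boldsymbol{z}^*\|\geq\mu$, and for that $t$ the triangle inequality yields $\|\boldsymbol{Z}^N(t)-\boldsymbol{z}(t)\|\geq\mu-\|\boldsymbol{z}(t)-\boldsymbol{z}^*\|>\mu/2$; thus this event is contained in $\{\sup_{0\leq t<T}\|\boldsymbol{Z}^N(t)-\boldsymbol{z}(t)\|\geq\mu/2\}$. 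Applying Proposition~\ref{prop:kurth_theorem_age} with the constant $\mu/2$ and the finite horizon $T$ (and the same conditioning $\boldsymbol{Z}^N(0)=\boldsymbol{z}(0)=\boldsymbol{x}$) gives a constant $C=C(\mu,T)$, independent of $N$, with $P_{\boldsymbol{x}}(\sup_{0\leq t<T}\|\boldsymbol{Z}^N(t)-\boldsymbol{z}(t)\|\geq\mu/2)\leq C/N$. Setting $s:=C$ closes the argument.

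The triangle-inequality step and the monotonicity of probability over the supremum range are routine; the one real obstacle is the first step, namely upgrading the componentwise convergence of Proposition~\ref{prop:z_convergence} to convergence in the age-weighted norm $\|\cdot\|$ on an infinite-dimensional, time-varying state space. What makes this work is precisely the uniform bound $l(t)\leq T_{\max}$, which confines all mass above age $T_{\max}$ to a geometrically decaying tail whose rate $1-p_k$ does not depend on $t$; this is the uniform tail control needed to swap the limit in $t$ with the infinite sum defining $\|\cdot\|$. I would also note that $s$ is allowed to depend on $\mu$ and $T$ — the lemma only requires independence from $N$ — so no uniformity in $T$ is needed, and the constant produced by Proposition~\ref{prop:kurth_theorem_age} suffices as is.
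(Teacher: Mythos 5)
Your proposal is correct and follows essentially the same route as the paper: the paper also first upgrades the componentwise convergence of Proposition \ref{prop:z_convergence} to convergence in the age-weighted norm via a uniform-in-$t$ geometric tail bound enabled by the bounded thresholds (anchored at the highest occupied state $m_k(t_0)$ of $\boldsymbol{z}(t_0)$ rather than at $T_{\max}$, which sidesteps the burn-in caveat you mention), and then concludes by the triangle inequality plus Proposition \ref{prop:kurth_theorem_age}, with a tolerance $\mu-\nu$ playing the role of your $\mu/2$. The differences are cosmetic, so your argument matches the paper's proof.
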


\begin{proof}
See appendix \ref{app:lem:result_kurth_theorem}
\end{proof}

We remind that starting from an initial state $\boldsymbol{x}$, our objective is to compare the total expected average age per user under Whittle index policy which can be expressed as $\frac{1}{T} \mathbb{E}^{wi}\left[ \sum_{t=0}^{T-1}  \sum_{k=1}^{K} \sum_{i=1}^{+\infty} Z_i^{k,N}(t)i \mid \boldsymbol{Z}^N(0)=\boldsymbol{x}\right]$ where $\boldsymbol{Z}^N(t)$ evolves under Whittle index policy, with the optimal age of the relaxed problem per user whose expression in function of $\boldsymbol{z}^*$ is, $C^{RP}=\frac{C^{RP,N}}{N} =\sum_1^K \sum_{i=1}^{+\infty} z_i^{k,*}i$, when the number of users $N$ as well as the time duration $T$ grow.

According to Lemma \ref{lem:result_kurth_theorem}, we are ready now to establish the asymptotic optimality of the Whittle index policy.

\begin{proposition}\label{prop:optimality_whittle_index}
Starting from a given initial state $\boldsymbol{Z}^N(0)=\boldsymbol{z}(0)=\boldsymbol{x}$, then:
\begin{equation}
\underset{ T \rightarrow +\infty }{\text{lim}} \lim_{N \rightarrow \infty}  \frac{1}{T} \mathbb{E}^{wi}\left[ \sum_{t=0}^{T-1}  \sum_{k=1}^{K} \sum_{i=1}^{+\infty} Z_i^{k,N}(t)i \mid \boldsymbol{Z}^N(0)=\boldsymbol{x}\right]=\sum_{k=1}^K  \sum_{i=1}^{+\infty} z_i^{k,*}i 
\end{equation} 
\end{proposition}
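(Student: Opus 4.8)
The plan is to deduce the statement from Lemma~\ref{lem:result_kurth_theorem} by noting that the per-user average age is a $1$-Lipschitz functional of $\boldsymbol Z^N(t)$ for the norm $\|\cdot\|$, which was chosen for exactly this purpose. Write $g^N(t):=\sum_{k=1}^K\sum_{i=1}^{+\infty}Z_i^{k,N}(t)i$ for the per-user average age at time $t$. Since this equals $\tfrac1N$ times the sum of the individual ages and each age grows by at most one per slot, we have the deterministic bound $g^N(t)\le C_0+t$, where $C_0:=\max_k m^k(0)$ is the largest age in the fixed initial state $\boldsymbol x$; in particular $\mathbb E^{wi}[g^N(t)\mid\boldsymbol x]<\infty$ and, by Tonelli, $\mathbb E^{wi}$ and the sums over $t,k,i$ may be interchanged freely. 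Moreover $\sum_{k}\sum_i Z_i^{k,N}(t)=\sum_k\sum_i z_i^{k,*}=1$ and $C^{RP}=\sum_k\sum_i z_i^{k,*}i$, so $g^N(t)-C^{RP}=\sum_k\sum_i\big(Z_i^{k,N}(t)-z_i^{k,*}\big)i$ and hence $\big|g^N(t)-C^{RP}\big|\le\|\boldsymbol Z^N(t)-\boldsymbol z^*\|$. This inequality is the only place the specific choice of norm enters.

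Fix $\mu>0$, let $T_0=T_0(\mu)$ and, for $T>T_0$, $s=s(\mu,T)$ be as in Lemma~\ref{lem:result_kurth_theorem}, and write $a_{N,T}:=\frac1T\mathbb E^{wi}\big[\sum_{t=0}^{T-1}g^N(t)\mid\boldsymbol x\big]$. Split the Ces\`aro sum into the initial block $0\le t<T_0$ and the tail $T_0\le t<T$. For the initial block each term obeys $\big|\mathbb E^{wi}[g^N(t)]-C^{RP}\big|\le C_0+T_0+C^{RP}$ uniformly in $N$, so this part of $a_{N,T}-C^{RP}$ is $O(T_0^2/T)$ and vanishes as $T\to\infty$. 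For the tail, the Lipschitz inequality gives $\big|\mathbb E^{wi}[g^N(t)]-C^{RP}\big|\le\mathbb E^{wi}\|\boldsymbol Z^N(t)-\boldsymbol z^*\|$; splitting the last expectation over the event $G=\{\sup_{T_0\le t'<T}\|\boldsymbol Z^N(t')-\boldsymbol z^*\|<\mu\}$, on which every term in the range is below $\mu$, and over $G^c$, where $\|\boldsymbol Z^N(t)-\boldsymbol z^*\|\le g^N(t)+C^{RP}\le C_0+T+C^{RP}$ while $P(G^c)\le s/N$, yields a tail bound of $\mu+(C_0+T+C^{RP})s/N$. Altogether, for all $N$ and all $T>T_0(\mu)$,
\[
\big|a_{N,T}-C^{RP}\big|\ \le\ \frac{T_0(\mu)\big(C_0+T_0(\mu)+C^{RP}\big)}{T}\ +\ \mu\ +\ \big(C_0+T+C^{RP}\big)\frac{s}{N}.
\]

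Now send $N\to\infty$ with $T$ fixed. The last term disappears because $s,T,C_0,C^{RP}$ do not depend on $N$ --- this is precisely why the limit in $N$ must be taken before the limit in $T$, in agreement with the order in the statement. For fixed $T$ the inner limit itself exists: for each $t<T$, Proposition~\ref{prop:kurth_theorem_age} gives $\|\boldsymbol Z^N(t)-\boldsymbol z(t)\|\to0$ in probability, hence $g^N(t)\to g(t):=\sum_k\sum_i z_i^k(t)i$ in probability, and since $0\le g^N(t)\le C_0+t$ bounded convergence yields $\mathbb E^{wi}[g^N(t)]\to g(t)$ and $a_{N,T}\to b_T:=\frac1T\sum_{t=0}^{T-1}g(t)$. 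Passing to the limit in the displayed inequality gives $|b_T-C^{RP}|\le T_0(\mu)(C_0+T_0(\mu)+C^{RP})/T+\mu$, so $\limsup_{T\to\infty}|b_T-C^{RP}|\le\mu$; since $\mu>0$ is arbitrary, $b_T\to C^{RP}$, which is the assertion. The proof is essentially bookkeeping once Lemma~\ref{lem:result_kurth_theorem} is available; the only delicate point is that on the low-probability event $G^c$ the average age can be controlled only by a quantity growing linearly in $T$, so its contribution is tamed only after letting $N\to\infty$ first --- which is exactly what the iterated limit in the statement permits.
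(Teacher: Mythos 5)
Your proposal is correct and follows essentially the same route as the paper's proof: split the Ces\`aro sum at $T_0$, bound the initial block by a term of order $T_0(C_0+T_0+C^{RP})/T$, and control the tail by conditioning on the event of Lemma~\ref{lem:result_kurth_theorem}, using the norm-based bound $|g^N(t)-C^{RP}|\leq\|\boldsymbol Z^N(t)-\boldsymbol z^*\|$ on the good event and the crude linear-in-$T$ bound times $s/N$ on its complement, then letting $N\to\infty$ before $T\to\infty$. Your additional step establishing that the inner limit in $N$ actually exists (via Proposition~\ref{prop:kurth_theorem_age} and bounded convergence) is a small tidying-up of a point the paper treats implicitly, not a different argument.
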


\begin{proof}
See appendix \ref{app:prop:optimality_whittle_index}.
\end{proof}

\section{Numerical Results}\label{sec:Num_Reslts}
\subsection{Verification of assumption \ref{assump:assump_on_alpha}}\label{subsec:verif_assump}
In this section, we compute the value of the lower bound on $\alpha$ given in Assumption \ref{assump:assump_on_alpha}. We denote this lowerbound by $B_{\alpha}$. For a wide range of parameters $p_1$ and $p_2$, we provide an exhaustive table that represents the lower bound on $\alpha$ in function of $p_1$ and $p_2$. As can be seen, the lowerbound decreases when $p_1$ and $p_2$ are close one to the other. Moreover, it grows even smaller when $p_1$ and $p_2$ have relatively high values.   
\begin{table}
\begin{minipage}{0.3\linewidth}
\centering
\normalsize
\begin{tabular}{|c|c|c|}
\hline
$p_1$&  $p_2$ & $B_{\alpha}$ \\
\hline
0.1 & 0.2 & 0.7034 \\
\hline
0.2 & 0.4 & 0.6250 \\
\hline
0.3 & 0.5 & 0.4711 \\
\hline
0.4 & 0.6 & 0.3556 \\
\hline
0.4 & 0.8 & 0.5328 \\
\hline
0.5 & 0.8 & 0.3612 \\
\hline
0.5 & 1 & 0.5 \\
\hline
0.6 & 0.9 & 0.2893 \\
\hline
0.7 & 0.9 & 0.1675 \\
\hline
0.8 & 0.9 & 0.1351 \\
\hline
\end{tabular}
\caption{\label{fig:table_bound_alpha} Evaluation of $B_{\alpha}$ for a wide range of channel statics}
\end{minipage}
\hfill
\begin{minipage}{0.6\linewidth}
\centering
\includegraphics[scale=0.6]{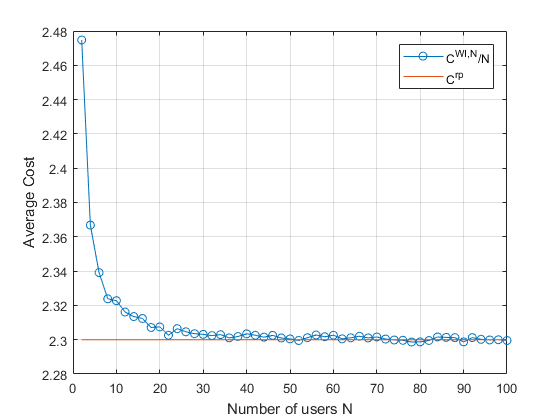}
\captionof{figure}{\label{fig:comparison_WI_RP} Average age per-user under the Whittle's index policy}
\end{minipage}
\end{table}
According to table \ref{fig:table_bound_alpha}, we can notice that in most cases of $(p_1,p_2)$, the lower bound of $\alpha$ doesn't exceed $0.5$. This implies that the interval of $\alpha$ where the assumption \ref{assump:assump_on_alpha} is satisfied, is enough wide for different values of $p_1$ and $p_2$.
\subsection{Implementation of the Whittle's index policy}
In this section, we evaluate the Whittle's index policy's performance by comparing the per-user average age of the Whittle's index policy to the optimal per-user average age of the relaxed problem $C^{rp}$. To that extent, we let the number of users in class $1$ and class $2$ to be equal to $\frac{N}{2}$. The probability of successful transmission of class $1$ and class $2$ are set to $0.8$ and $0.5$, respectively. At each time slot $t$, at most, $M=\frac{N}{2}$ of users can be scheduled per each time slot, i.e., $\alpha=\frac{M}{N}=\frac{1}{2}$.  As seen in Figure \ref{fig:comparison_WI_RP}, the gap between the two policies tightens as the number of users $N$ grows. Indeed, these numerical results corroborate our theoretical analysis and show that the Whittle's index policy is effectively globally asymptotically optimal.  
\section{Conclusion} \label{sec:Conclusion}
In this paper, we have examined the average age minimization problem where only a fraction of the network users can transmit simultaneously over unreliable channels. We presented and derived a novel method based Cauchy criterion to prove the Whittle's index policy's optimality in the many-users regime. Compared to the state of the art methods, our approach does not require imposing strict mathematical assumptions, which can be challenging to verify. We also provided numerical results that corroborate our theoretical findings and highlight the Whittle's index policy's performance. Moving forward, the next research direction is to extend our proof to various other scheduling problems under different system models and objective functions.   
\bibliographystyle{IEEEtran} 
\bibliography{bil_conv}

\begin{appendices}

\section{Proof of Lemma \ref{lem:z_evolution}}\label{app:lem_z_evolution}
We can formulate the fluid limit equation \eqref{eq:fluid_approxiamtion} as follows:
\[\boldsymbol{z}(t+1)=\mathbb{E}\left[\boldsymbol{Z}^N(t+1)\Big | \boldsymbol{Z}^N(t)=\boldsymbol{z}(t)\right]\]
At time $t+1$, applying Whittle index policy, in average exactly a proportion of $p_k \alpha_k(t)$ of users will be at state one since $\alpha_k(t)$ refers to the proportion of users in class $k$ that are scheduled. Accordingly, $z^k_1(t+1)=p_k \alpha_k(t)$. While for $1 \leq i < l_k(t)$, the users' proportion $z^k_i(t)$ is not scheduled. Therefore at time $t+1$, since prescribing idle action to a given user implies that its state will be increased by $1$, the proportion $z^k_i(t)$ at state $i$ in class $k$ will be at state $i+1$. Thus, $\mathbb{E}\left[Z^{N,k}_{i+1}(t+1)\Big | \boldsymbol{Z}^N(t)=\boldsymbol{z}(t)\right]=z^k_{i+1}(t+1)=z^k_i(t)$..

\section{Proof of Lemma \ref{lem:alpha_greater_stric_0}}\label{app:lem:alpha_greater_stric_0}
First of all, we provide an useful lemma.
\begin{Lemma}\label{lem:difference_between_whittle_index_at_i_i+1}
We have for all integer $i$ and for $k=1,2$: $$w_k(i+1)-w_k(i)=ip_k+1$$ 
\end{Lemma}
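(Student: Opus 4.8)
\textbf{Proof plan for Lemma \ref{lem:difference_between_whittle_index_at_i_i+1}.}

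The plan is to simply invoke the closed-form expression for the Whittle's index given in Proposition \ref{prop:whittle_index} and compute the first difference directly. Recall that for class $k$ and state $i$ we have
\begin{equation}
W^k(i) = \frac{(i-1)p_k i}{2} + i.
\end{equation}
Since $w_k(i)$ is just shorthand for $W^k(i)$, the quantity to evaluate is $w_k(i+1)-w_k(i)$.

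First I would write out $w_k(i+1) = \frac{i p_k (i+1)}{2} + (i+1)$ and subtract $w_k(i) = \frac{(i-1)p_k i}{2} + i$. The polynomial part $(i+1) - i = 1$ contributes the additive constant. For the quadratic-in-$i$ part, factoring out $\frac{p_k i}{2}$ from $\frac{i p_k (i+1)}{2} - \frac{(i-1) p_k i}{2}$ gives $\frac{p_k i}{2}\big((i+1) - (i-1)\big) = \frac{p_k i}{2}\cdot 2 = p_k i$. Adding the two contributions yields $w_k(i+1) - w_k(i) = i p_k + 1$, which is exactly the claim, and the same computation works verbatim for both $k=1$ and $k=2$.

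There is essentially no obstacle here: the statement is a one-line algebraic consequence of the already-derived index formula. The only thing to be careful about is the indexing convention (states start at $1$, and the formula uses $(i-1)$ in the product), but substituting small values such as $i=1$ (giving $w_k(2)-w_k(1) = p_k\cdot 1 + 1 = p_k + 1$, consistent with $w_k(1) = 1$ and $w_k(2) = p_k + 2$) confirms the bookkeeping is correct. This lemma will then be used repeatedly to control how fast the Whittle index grows between consecutive states, which underlies the alternation arguments in Propositions \ref{prop:interval_alternation_under_assump} and \ref{prop:alternation_Whittle_index}.
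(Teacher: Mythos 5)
Your computation is correct and is exactly the paper's proof: the authors likewise obtain the identity by direct substitution of the index expression from Proposition \ref{prop:whittle_index} and simplification. Nothing further is needed.
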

\begin{proof}
renewcommand{\qedsymbol}{$\blacksquare$}
The result can be obtained directly by replacing $w_k(i)$ by its expression. 
\end{proof}
In order to prove the present lemma, we proceed in two steps:
\begin{itemize}
\item We prove first by contradiction that there exists a given time $t_f$ such that $\alpha_1(t_f)>0$.
\item We prove that if $\alpha_1(t_f) >0$, then $\alpha_1(t)> 0$ for all $t\geq t_f$.
\end{itemize}
\begin{enumerate}
\item For the first point, we suppose that for all $t$, we have that $\alpha_1(t)=0$. Consequently, we get that $z_1^1(t+T_t)=0, \cdots, z^1_{l_1(t+T_t)}(t+T_t)=0$, and $\alpha_1(t+T_t)=0$. This means that, the proportion of all users in class $1$ is equal to $0$. However, the users' proportion of class $1$ is $\gamma_1 \neq 0$. That is, there exists a given time $t_f$ such $\alpha_1(t_f)>0$.\\

\item Before addressing the second point, we recall that $\alpha_1(t)$ refers to the scheduled users' proportion in the class $1$. Thereby, $\alpha_1(t)$ contains all users with the highest Whittle index values among all users in class $1$.
To that extent, at time $t_f$, the Whittle index of $\alpha_1(t_f)$ is greater than the Whittle index of the users' proportion $1-\alpha$ that we denote by $C$. We let $\boldsymbol{S}_{t_f}(C)$ be the set of pair (state,class) at time $t_f$ in the users' proportion $C$. Denoting by $q$ the smallest state of $\alpha_1(t_f)$, $n$ and $m$ a given state and class respectively such that $z_n^m(t)$ belongs to $C$ at time $t_f$, then $w_m(n) \leq w_1(q)$. Under the Whittle index policy, at time $t_f+1$, the states of a users' proportion that equals to $(1-p_1) \alpha_1(t_f)$ among the users' proportion $\alpha_1(t_f)$, will be increased by one in comparison with the time slot $t_f$, as well as the users' proportion $C$. Accordingly, the smallest state of the proportion $(1-p_1) \alpha_1(t_f)$, is $q+1$. $\boldsymbol{S}_{t_f+1}(C)$ is shifted of one with respect to $\boldsymbol{S}_{t_f}(C)$, i.e., $(n,m) \in \boldsymbol{S}_{t_f}(C) \Leftrightarrow (n+1,m) \in \boldsymbol{S}_{t_f+1}(C)$. We compare $w_1(q+1)$ with the Whittle index of $n$ in class $m$ such that $(n,m) \in \boldsymbol{S}_{t_f+1}(C)$.  
In that direction, we let $(n,m) \in \boldsymbol{S}_{t_f+1}(C)$, and we distinguish between two cases:
\begin{itemize}
\item $m=1$:
Leveraging the fact that $(n-1,m) \in \boldsymbol{S}_{t_f}(C)$, then $w_1(q) \geq w_1(n-1)$. That implies that $n-1 \leq q$ since $w_k(.)$ is increasing. Hence $n \leq q+1$. As consequence, $w_1(n) \leq w_1(q+1)$
\item $m=2$:
Again we distinguish between two case:\\
\begin{itemize}
\item If $n-1 \leq q$, then $w_2(n) < w_1(n) \leq w_1(q+1)$.\\
Therefore, we obtain our desired result for the first case.
\item If $n-1 > q$:\\
We have that:
$$w_1(q+1)-w_2(n)=(w_1(q+1)-w_1(q))- (w_2(n)-w_2(n-1))+w_1(q)-w_2(n-1)$$
Applying Lemma \ref{lem:difference_between_whittle_index_at_i_i+1}, we obtain: $(w_1(q+1)-w_1(q))- (w_2(n)-w_2(n-1)) = qp_1-(n-1)p_2$.
Given that $w_2(n-1) \leq w_1(q)$, therefore replacing by their expressions we get:
$$(n-2)(n-1)p_2/2+n-1 \leq (q-1)qp_1/2+q$$
As $n-1 > q$, then: 
$$(n-2)(n-1)p_2/2 \leq (q-1)qp_1/2$$
Hence:
$$(n-1)p_2 \leq qp_1$$
Therefore, $(w_1(q+1)-w_1(q))- (w_2(n)-w_2(n-1)) \geq 0$.
Hence, knowing that $w_1(q)-w_2(n-1)\geq 0$ we end up with our desired result for this case, i.e. $w_1(q+1)-w_2(n) \geq 0$.
\end{itemize}
\end{itemize}
Thus, we have proved that at time $t_f+1$, all the users' proportions in $C$ whose sum is equal to $1-\alpha$ have a Whittle index less than that of $(1-p_1)\alpha_1(t_f)$ defined in the beginning of this proof. That means that there exists at least a users' proportion that equals to $1-\alpha$ with Whittle index values less than those of the states of the users' proportion $(1-p_1)\alpha_1(t_f)$. Then surely, the users' proportion $(1-p_1)\alpha_1(t_f)$ that is different from $0$ belongs to the users' proportion $\alpha$ with the highest Whittle index values.
This implies that surely at time $t_f+1$, there will be at least one queue in class 1 belonging to $\alpha$ with the highest Whittle index values. Therefore, we have that $\alpha_1(t_f+1) >0$. This result can be generalized for all $t \geq t_f$. In other words, we have for all $t \geq t_f$, $\alpha_1(t) > 0$.
\end{enumerate}

\section{Proof of Lemma \ref{lem:upper_bound_threshold_max}}\label{app:lem:upper_bound_threshold_max}
As $\alpha_1(j)+\alpha_2(j)=\alpha$ for all integers $j$, then, if $\alpha_1(t+i)=0$, $\alpha_2(t+i)=\alpha$.
For $j \in [1,T_{\max}]$, we have that $T_{\max}-j \in [0,T_{\max}-1]$. This means that $\alpha_1(t+T_{\max}-j)$ is equal to $0$, which implies that $\alpha_2(t+T_{\max}-j)=\alpha$. Moreover, knowing that $l(t+T_{\max}) \leq T_{\max}$, then for all $j \in [1,l(t+T_{\max})]$, $T_{\max}-j \in [T_{\max}-l(t+T_{\max}),T_{\max}-1] \subset [0,T_{\max}-1]$. Hence, we get that $\alpha_1(t+T_{\max}-j)=0$, for all $j \in [1,l(t+T_{\max})]$.\\
Therefore, according to the definition \ref{def:T_t}, $T_{\max}$ satisfies:
\begin{equation}T_{\max} p_2 \alpha \geq 1-\alpha
\end{equation}
\begin{equation}T_{max} \geq  \frac{1-\alpha}{p_2 \alpha}
\end{equation}  
Providing that $T_{\max}$ by definition is the first time when $\sum_{j=1}^{l(t+T_{\max})} p_1\alpha_1(t+T_{\max}-j)+ \sum_{j=1}^{T_{\max}} p_2\alpha_2(t+T_{\max}-j)$ exceeds $1-\alpha$, then at time $t+T_{\max}-1$,  $\sum_{j=1}^{l(t+T_{\max}-1)} p_1\alpha_1(t+T_{\max}-1-j)+ \sum_{j=1}^{T_{\max}-1} p_2\alpha_2(t+T_{\max}-1-j) < 1-\alpha$. This latter sum is equal to $(T_{\max}-1)p_2 \alpha$ which is less than $1-\alpha$.  
Therefore, we have as result that $T_{\max} < \frac{1-\alpha}{p_2 \alpha}+1$. As there is one integer value between $\frac{1-\alpha}{p_2 \alpha}$ and $\frac{1-\alpha}{p_2 \alpha}+1$, then $T_{\max}$ doesn't depend on $t$, and satisfies:
$\frac{1-\alpha}{p_2 \alpha} \leq T_{\max} < \frac{1-\alpha}{p_2 \alpha}+1$..

\section{Proof of Proposition \ref{prop:interval_alternation_under_assump}}\label{app:prop:interval_alternation_under_assump}
We have that $w_1(n)=\frac{(n-1)p_1n}{2}+n$, and $w_2(n)=\frac{(n-1)p_2n}{2}+n $.
We start first by finding the set of states for which the Whittle index alternate between the two classes.
As we can see from the expression of the Whittle index, for a given state $n$, $w_2(n) < w_1(n)$ as $p_2 < p_1$. In order to have the condition of alternation strictly satisfied for any given state $n$, we must have $w_1(n) < w_2(n+1)$.
Hence, denoting by $f(n)$ the difference $w_2(n+1)-w_1(n)$, we study the sign of $f(n)$ to see for which $n$ $f$ is strictly positive.
\begin{Lemma}\label{lem:alternation_interval}
For all $n \in [1,D[$, $f(n) > 0$
\end{Lemma}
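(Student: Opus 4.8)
The plan is to reduce the lemma to an elementary statement about a single concave quadratic. First I would write $f(n) = w_2(n+1) - w_1(n)$ explicitly using the Whittle index expression of Proposition~\ref{prop:whittle_index}, i.e. $w_1(n) = \frac{(n-1)p_1 n}{2} + n$ and $w_2(n+1) = \frac{n p_2 (n+1)}{2} + (n+1)$. Expanding the difference and collecting powers of $n$ gives
\begin{equation}
f(n) = -\frac{p_1-p_2}{2}\,n^2 + \frac{p_1+p_2}{2}\,n + 1 .
\end{equation}
Since $p_1 > p_2$ the leading coefficient is negative, so $f$ is strictly concave; in particular it is positive exactly between its two real roots (and one checks immediately that $f(1) = 1 + p_2 > 0$, so the relevant interval is nonempty).

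Next I would solve $f(n) = 0$. Multiplying by $-2/(p_1-p_2)$ gives $n^2 - \frac{p_1+p_2}{p_1-p_2}\,n - \frac{2}{p_1-p_2} = 0$, whose roots are
\begin{equation}
n^{\pm} = \frac{1}{p_1-p_2}\left(\frac{p_1+p_2}{2} \pm \sqrt{\frac{(p_1+p_2)^2}{4} + 2(p_1-p_2)}\right).
\end{equation}
Comparing with the definition of $D$ stated just before the lemma shows $n^{+} = D$ (after the identification $2\sqrt{2(p_1-p_2)+(p_1+p_2)^2/4} = \sqrt{8(p_1-p_2)+(p_1+p_2)^2}$), while Vieta's formula gives that the product of the two roots equals $-2/(p_1-p_2) < 0$, hence $n^{-} < 0 < n^{+}$.

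Finally, concavity of $f$ together with $n^{-} < 0$ and $n^{+} = D$ yields $f(n) > 0$ precisely for $n \in (n^{-}, D)$. Every $n \in [1, D[$ satisfies $n^{-} < n < n^{+}$, so $f(n) > 0$ there, which is the assertion. I do not expect any genuine obstacle: the computation of $f$ and its roots is routine, and once the roots are known, positivity on $[1, D[$ is automatic for a concave quadratic, so no separate estimation of $f$ over the interval is needed.
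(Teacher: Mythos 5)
Your proposal is correct and follows essentially the same route as the paper: both identify $f(n)=\frac{n^{2}}{2}(p_2-p_1)+\frac{n}{2}(p_1+p_2)+1$ as a concave quadratic whose largest root is exactly $D$, and conclude positivity on $[1,D[$. The only cosmetic difference is that you locate the second root via the quadratic formula and Vieta's sign argument, whereas the paper argues through $f'$, the value $f(0)=1$, and the monotone decrease after the vertex; the conclusions coincide.
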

\begin{proof}
\renewcommand{\qedsymbol}{$\blacksquare$}
We have that:
\begin{equation}\label{eq:difference_function}
f(n)=\frac{n^2}{2}(p_2-p_1)+\frac{n}{2}(p_1+p_2)+1
\end{equation}
Hence: 
\begin{equation}
f'(n)=n(p_2-p_1)+\frac{p_1+p_2}{2}
\end{equation}
The derivative is equal to zero for $n=\frac{p_1+p_2}{2(p_1-p_2)}$, which is greater strictly than $0$. This means that $f$ is strictly increasing in $[0,\frac{p_1+p_2}{2(p_1-p_2)}]$ since $f'(n) > 0$ in $[0,\frac{p_1+p_2}{2(p_1-p_2)}[$. Providing that $f(0)=1$, then surely $f$ is strictly positive in $[0,\frac{p_1+p_2}{2(p_1-p_2)}]$. This means that, the unique positive solution for $f(n)=0$ must be in the interval $[\frac{p_1+p_2}{2(p_1-p_2)},+\infty[$, as $\underset{n \rightarrow +\infty}{\text{lim}} f(n)=-\infty$.
Indeed, the unique solution $n_0$ of $f(n)=0$ in $[\frac{p_1+p_2}{2(p_1-p_2)},+\infty[$ is the biggest root of the polynomial \eqref{eq:difference_function} which is exactly the value $D$ introduced in Assumption \ref{assump:assump_on_alpha}.
As the function $f$ is decreasing in $[ \frac{p_1+p_2}{2(p_1-p_2)}, +\infty[$, then $f$ is strictly positive in $[0,D[$. Therefore, $f(n)>0$ for $n \in [1,D[$, which concludes the proof.
\end{proof}
According to Lemma \ref{lem:alternation_interval}, the order of the Whittle index strictly alternates between the two states when $n \in [1,D[$.
Therefore, we need to prove that $T_{\max}+1$ is upper bounded by $D$ in order to prove that the alternation condition is satisfied from state $1$ to $T_{\max}+1$.\\
Indeed, as we have found an upper bound of $T_{\max}$ which is equal to $\frac{1-\alpha}{p_2 \alpha}+1$ (according to Lemma \ref{lem:upper_bound_threshold_max}), we just need to prove that $\frac{1-\alpha}{p_2 \alpha}+2$ is strictly less than $D$.\\
Under assumption \eqref{assump:assump_on_alpha}, we have that:
\begin{align}
\alpha &> \frac{1}{1+(D-2)p_2}\\
\alpha(1+p_2(D-2)) &> 1 \\
\alpha p_2(D-2) &> 1-\alpha\\ 
D-2 &> \frac{1-\alpha}{p_2 \alpha}\\ 
D &> \frac{1-\alpha}{p_2 \alpha}+2 
\end{align}
Hence, from state $1$ to $T_{\max}+1$, the order of the Whittle index strictly alternates between the two classes.
Accordingly, the proof is concluded.

\section{Proof of Proposition \ref{prop:alternation_Whittle_index}}\label{app:prop:alternation_Whittle_index}
We present first a lemma which will be helpful in proving this proposition as well as the next ones.
\begin{Lemma}\label{lem:inequalities_whittle_index_thresholds}
For any state $q$, at any time $t$, we have that:
$$w_1(q) \leq w_2(l_2(t)) \Rightarrow w_1(q) \leq w_1(l_1(t))$$
and
$$w_2(q) \leq w_1(l_1(t)) \Rightarrow w_2(q) \leq w_2(l_2(t))$$ 
\end{Lemma}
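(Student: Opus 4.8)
The plan is to derive both implications directly from the structural description of the idle (non-scheduled) mass under the Whittle index policy, combined with the strict monotonicity of the index $w_k(\cdot)$ in the state. Recall that at any time $t$ the Whittle index policy leaves idle exactly the proportion $1-\alpha$ of users carrying the smallest indices, and that the corresponding set of (state, class) pairs was described earlier as
\[
\{z_i^1\}_{1 \leq i \leq l_1(t)} \cup \{z_i^2\}_{1 \leq i \leq l_2(t)} \;=\; \{\, z_i^k : w_k(i) \leq \max(w_1(l_1(t)),\, w_2(l_2(t)))\,\}.
\]
Write $w_b(t) := \max(w_1(l_1(t)),\, w_2(l_2(t)))$ for the index value sitting at the scheduling boundary.

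For the first implication, suppose $w_1(q) \leq w_2(l_2(t))$. Since $w_2(l_2(t)) \leq w_b(t)$ by definition of the maximum, we obtain $w_1(q) \leq w_b(t)$, so the pair $(q,1)$ lies in the right-hand set above; being a class-$1$ pair it must belong to the class-$1$ part of the left-hand set, i.e. $q \leq l_1(t)$. As $w_1$ is increasing in the state (immediate from the expression in Proposition \ref{prop:whittle_index}), this gives $w_1(q) \leq w_1(l_1(t))$, as claimed. The second implication is the mirror image, with the roles of the two classes exchanged: from $w_2(q) \leq w_1(l_1(t)) \leq w_b(t)$ the pair $(q,2)$ is in the idle set, hence $q \leq l_2(t)$, hence $w_2(q) \leq w_2(l_2(t))$.

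The single point that requires care — and the reason the statement is not completely trivial — is that one cannot jump from $w_1(q) \leq w_b(t)$ straight to $w_1(q) \leq w_1(l_1(t))$, since when the boundary index is attained in class $2$, i.e. $w_b(t) = w_2(l_2(t)) > w_1(l_1(t))$, this would only yield $w_1(q) \leq w_b(t)$. What rescues the argument is that the displayed characterization is an exact set \emph{equality}, not merely the one-sided inclusion ``pairs with index $\leq w_b(t)$ are idle'': it says the idle class-$1$ states are precisely $\{1,\dots,l_1(t)\}$, equivalently $l_1(t) = \max\{i : w_1(i) \leq w_b(t)\}$, so $w_1(q) \leq w_b(t)$ does force $q \leq l_1(t)$. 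I would also note — or simply inherit from the fluid model — that states $q$ lying above the current highest occupied state are irrelevant here, the instantaneous threshold never exceeding that state, so the claim is vacuous for such $q$. Beyond this observation the proof is routine, and I do not expect a genuine obstacle.
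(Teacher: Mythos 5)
Your argument is correct and follows essentially the same route as the paper's proof: both use the exact set equality characterizing the non-scheduled proportions, deduce from $w_1(q)\leq w_2(l_2(t))\leq \max(w_1(l_1(t)),w_2(l_2(t)))$ that $(q,1)$ belongs to the class-$1$ part of that set so that $q\leq l_1(t)$, and conclude by monotonicity of $w_1(\cdot)$, with the second implication handled symmetrically. Your added remark on why the set equality (rather than a one-sided inclusion) is the crucial ingredient is a fair clarification of the same argument, not a different proof.
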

\begin{proof}
\renewcommand{\qedsymbol}{$\blacksquare$}
See appendix \ref{app:lem:inequalities_whittle_index_thresholds}
\end{proof}
We consider $t \geq t_f$. After time $T_t$, we have that: 
\begin{equation}\sum_{j=1}^{l(t+T_t)} p_1\alpha_1(t+T_t-j)+ \sum_{j=1}^{T_t} p_2\alpha_2(t+T_t-j) \geq 1-\alpha
\end{equation}
Then, as it has been showcased, at time $t+T_t$, there exists $l_1(t+T_t) \leq l(t+T_t)$, $l_2(t+T_t) \leq T_t$, $\gamma(t+T_t)=1$ and $0 < \beta(t+T_t) \leq 1$; or $0 < \gamma(t+T_t) \leq 1$ and $\beta(t+T_t)=1$ such that: 
\begin{equation}
\sum_{j=1}^{l_1(t+T_t)-1} p_1\alpha_1(t+T_t-j)+ \sum_{j=1}^{l_2(t+T_t)-1} p_2\alpha_2(t+T_t-j)+\beta(t+T_t) p_1\alpha_1(t+T_t-l_1(t+T_t))+\gamma(t+T_t) p_2\alpha_2(t+T_t-l_2(t+T_t))=1-\alpha
\end{equation}
with $l_1(t+T_t)$ and $l_2(t+T_t)$ being the instantaneous thresholds in class 1 and 2 respectively at time $t+T_t$.\\
Now, we prove by contradiction that $\max(l_1(t+T_t),l_2(t+T_t))\leq T_{max}$.\\
We prove first that $l_2(t+T_t)$ is greater than $l_1(t+T_t)$.\\
As we have that $w_2(l_1(t+T_t)) < w_1(l_1(t+T_t))$, then according to lemma \ref{lem:inequalities_whittle_index_thresholds}, $w_2(l_1(t+T_t)) \leq w_2(l_2(t+T_t))$. This implies that $l_2(t+T_t)$ is greater than $l_1(t+T_t)$.\\ 
Reasoning by contradiction, we suppose that $l_2(t+T_t) > T_{\max}$ ($l_2(t+T_t)=\max(l_1(t+T_t),l_2(t+T_t))> T_{\max}$). Based on this, we have that $w_1(T_{\max}) < w_2(l_2(t+T_t))$ because $w_1(T_{\max})< w_2(T_{\max}+1) \leq w_2(l_2(t+T_t))$ since the order of the Whittle index alternates between the two classes as it has been proved in Proposition \ref{prop:alternation_Whittle_index}.
To that extent, we distinguish between two cases:\\
1) \textbf{First case}: If $\beta(t+T_t)=1$:\\
We have that $w_1(T_{\max}) < w_2(l_2(t+T_t))$.
Then, according to Lemma \ref{lem:inequalities_whittle_index_thresholds}, we have that $w_1(T_{\max})\leq w_1(l_1(t+T_t))$. Hence, we can conclude that $T_{\max} \leq l_1(t+T_t)$ as $w_1$ is an increasing function with the age of information.\\
Moreover, since we have that $p_1\alpha_1(t+T_t-j)+ p_2\alpha_2(t+T_t-j) > p_2 \alpha$ (the strict inequality is due to the fact that $\alpha_1(t)>0$ as $t \geq t_f$ according to Lemma \ref{lem:alpha_greater_stric_0}), then according to Lemma \ref{lem:upper_bound_threshold_max}, we obtain: 
\begin{equation}\sum_{j=1}^{l_1(t+T_t)-1} p_1\alpha_1(t+T_t-j)+ \sum_{j=1}^{l_2(t+T_t)-1} p_2\alpha_2(t+T_t-j)+\beta(t+T_t) p_1\alpha_1(t+T_t-l_1(t+T_t))+ \gamma(t+T_t)p_2\alpha_2(t+T_t-l_2(t+T_t))=1-\alpha\end{equation}
\begin{equation}=\sum_{j=1}^{l_1(t+T_t)} p_1\alpha_1(t+T_t-j)+ \sum_{j=1}^{l_2(t+T_t)-1} p_2\alpha_2(t+T_t-j)+ \gamma(t+T_t)p_2\alpha_2(t+T_t-l_2(t+T_t))
\end{equation}
\begin{equation} \geq \sum_{j=1}^{T_{max}} p_1\alpha_1(t+T_t-j)+ \sum_{j=1}^{T_{max}} p_2\alpha_2(t+T_t-j) > T_{max} p_2 \alpha \geq 1-\alpha
\end{equation}
The last inequality comes from the fact that $ T_{\max} \geq \frac{1-\alpha}{p_2 \alpha}$. This implies that:
\begin{equation}1-\alpha > 1-\alpha
\end{equation}
This gives us an illogical statement. Consequently, in this case, the assumption $l_2(t+T_t) > T_{\max}$ is not true.\\
2) \textbf{Second case}: If $\beta(t+T_t)<1$:\\
As we have that $\beta(t+T_t)<1$, then $\gamma(t+T_t)$ should be equal to $1$. Therefore, all users at state $l_2(t+T_t)$ in class 2 are in the users' proportion $1-\alpha$ with the smallest Whittle index values. However, there exists users in state $l_1(t+T_t)$ in class 1 in the users' proportion $\alpha$ that has the highest Whittle index values. That is, we have $w_1(l_1(t+T_t)) \geq w_2(l_2(t+T_t))$. As it has been established before tackling the first case, $w_1(T_{\max}) < w_2(l_2(t+T_t))$, then $w_1(T_{\max}) < w_1(l_1(t+T_t))$. This means that $l_1(t+T_t) > T_{\max}$.
Therefore, we have that: 
\begin{equation}\sum_{j=1}^{l_1(t+T_t)-1} p_1\alpha_1(t+T_t-j)+ \sum_{j=1}^{l_2(t+T_t)-1} p_2\alpha_2(t+T_t-j)+\beta(t+T_t) p_1\alpha_1(t+T_t-l_1(t+T_t))+ \gamma(t+T_t)p_2\alpha_2(t+T_t-l_2(t+T_t))=1-\alpha
\end{equation}
\begin{equation} \geq \sum_{j=1}^{T_{\max}} p_1\alpha_1(t+T_t-j)+ \sum_{j=1}^{T_{\max}} p_2\alpha_2(t+T_t-j) > T_{\max} p_2 \alpha \geq 1-\alpha
\end{equation}
This implies that:
\begin{equation}1-\alpha > 1-\alpha
\end{equation}
Consequently, in this case, the assumption $l_2(t+T_t) > T_{\max}$ is not true.\\
Hence, in both cases, $l_2(t+T_t)$ must be less than $T_{\max}$, i.e. $\max(l_1(t+T_t),l_2(t+T_t))\leq T_{\max}$ for all $t$.\\ 
Thus, we end up with $T_{\max}=l_{\max}$, which concludes our proof.

\section{Proof of Lemma \ref{lem:inequalities_whittle_index_thresholds}} \label{app:lem:inequalities_whittle_index_thresholds}
We prove only the first statement as the proof steps for both cases are exactly the same. 
By definition of $l_1(t)$ and $l_2(t)$, we have that $\{z_i^1(t)\}_{1 \leq i \leq l_1(t)} \cup \{z_i^2(t)\}_{1 \leq i \leq l_2(t)}$ is exactly the set $\{z_i^k(t): w_k(i) \leq \max(w_1(l_1(t),w_2(l_1(t))\}$. Hence, if a given $q$ verifies $w_1(q) \leq w_2(l_2(t))$, then $w_1(q) \leq \max(w_1(l_1(t),w_2(l_2(t))$, that implies that $z_q^1(t) \in \{z_i^k(t): w_k(i) \leq \max(w_1(l_1(t),w_2(l_2(t))\}=\{z_i^1(t)\}_{1 \leq i \leq l_1(t)} \cup \{z_i^2(t)\}_{1 \leq i \leq l_2(t)}$. Knowing that the highest users' proportion's state of the aforementioned set in class 1 is $l_1(t)$, then $q \leq l_1(t)$. Therefore as $w_1(.)$ is increasing, $w_1(q) \leq w_1(l_1(t))$.

\section{Proof of Proposition \ref{prop:unique_threshold_and_expression}}\label{app:prop:unique_threshold_and_expression}
We have that:
\begin{equation}
\sum_{i=1}^{l_1(t)-1} z^1_i(t)+ \sum_{i=1}^{l_2(t)-1} z^2_i(t)+\beta(t) z^1_{l_1(t)}(t)+\gamma(t) z^2_{l_2(t)}(t)=1-\alpha
\end{equation} 
with $l_1(t)$ and $l_2(t)$ being the thresholds in class $1$ and $2$ respectively at time $t$, and $\beta(t)=1$ and $0<\gamma(t) \leq 1$, or $\gamma(t)=1$ and $0<\beta(t) \leq 1$.\\
Our aim in this proof is to show that there is a link between $l_1(t)$ and $l_2(t)$ when they are less than $T_{\max}$. By doing so, we find a general form of the aforementioned equation.  
To that end, we prove first that $l_1(t)$ is less than $l_2(t)$.\\
Indeed, as we have $w_2(l_1(t)) < w_1( l_1(t))$,
then according to lemma \ref{lem:inequalities_whittle_index_thresholds}, $w_2(l_1(t)) \leq w_2( l_2(t))$. Consequently, we can conclude that $l_1(t) \leq l_2(t)$.\\
Secondly, we prove that $l_2(t) \leq l_1(t)+1$.
As the order of the Whittle indices alternates between the two classes from state $1$ to state $T_{\max}+1$, $w_1(l_2(t)-1) < w_2(l_2(t))$. 
Hence, according to lemma \ref{lem:inequalities_whittle_index_thresholds}, we have that $w_1(l_2(t)-1) \leq w_1(l_1(t))$. Consequently, $l_2(t)-1 \leq l_1(t)$.\\
Given that $l_1(t) \leq l_2(t) \leq l_1(t)+1$, then $l_1(t)$ can be either $l_2(t)$ or $l_2(t)-1$.\\
The second step consists of deriving the value of $\beta(t)$ or $\gamma(t)$ depending on the value of $l_1(t)$ and $l_2(t)$.
\begin{itemize}   
\item If $l_1(t))=l_2(t)$:\\
We prove that $\gamma(t)=1$ if $z_{l_2(t)}^2>0$.
Indeed, if $\gamma(t) \neq 1$ and $z_{l_2(t)}^2>0$, thus there is at least a non empty set of users in class $2$ at state $l_2(t)$ that belongs to the users' proportion $\alpha$ with the highest Whittle index values.
However there exists always a non empty set of queues in class $1$ at state $l_1(t)$ that belong to $1-\alpha$ users' proportion with the least Whittle index values, since $\beta(t)>0$.
Then, we have that $w_2(l_2(t)) \geq w_1(l_1(t))$. However, we know that $w_2(l_2(t))=w_2(l_1(t)) < w_1(l_1(t))$. This later inequality contradicts with what precedes. Thus, the statement that $\gamma(t) \neq 1$ is not true, i.e. $\gamma(t)=1$.\\
In this case we denote $l(t)=l_1(t)=l_2(t)$.\\
We end up:
\begin{equation}
\sum_{j=1}^{l(t)-1}  z^1_{i}(t)+ \sum_{i=1}^{l(t)-1}  z^2_{i}(t)+ \beta(t)z^1_{l(t)}(t)+ z^2_{l(t)}(t)=1-\alpha
\end{equation}
If $z_{l_2(t)}^2=0$, the last equation still valid since $z_{l_2(t)}^2=0$ whatever the value of $\gamma(t)$, namely when $\gamma(t)=1$.  
\item If $l_1(t)+1=l_2(t)$:\\
We prove that $\beta(t)=1$ if $z_{l_1(t)}^1>0$. 
Indeed, if $\beta(t) \neq 1$ and $z_{l_1(t)}^1>0$, there is at least a set of users in class 1 in state $l_1(t)$ that belongs to the users' proportion $\alpha$ with the highest Whittle index values.
However there is always a set of queues in class 2 at state $l_2(t)$ that belong to $1-\alpha$ users' proportion with the least Whittle index values, since $\gamma(t)>0$.
Then, we have that $w_1(l_1(t)) \geq w_2(l_2(t))$. However, we know that $w_2(l_2(t))=w_2(l_1(t)+1) > w_1(l_1(t))$ since the order of Whittle index alternates between the two classes from state $1$ to $T_{\max}+1$ according to Proposition \ref{prop:alternation_Whittle_index}. Thus, $w_2(l_1(t)+1) > w_1(l_1(t)) \geq w_2(l_1(t)+1)$, which gives us an obvious contradiction. Therefore, we can assert that $\beta(t)=1$.\\
In this case, we consider that $l(t)=l_1(t)+1=l_2(t)$ and we get:
\begin{equation}\sum_{i=1}^{l(t)-1}  z^1_{i}(t)+ \sum_{i=1}^{l(t)-1}  z^2_{i}(t)+ \gamma(t) z^2_{l(t)}(t)=1-\alpha\end{equation} 
\end{itemize}
Similarly to the first case, if $z_{l_1(t)}^1=0$, the last equation still valid since $z_{l_1(t)}^1=0$ whatever the value of $\beta(t)$, namely when $\beta(t)=1$.
Subsequently, combining the two cases, there exists $l(t)$ such that:
\begin{equation}\sum_{i=1}^{l(t)-1}  z^1_{i}(t)+ \sum_{i=1}^{l(t)-1}  z^2_{i}(t)+\beta(t) z^1_{l(t)}(t)+ \gamma(t) z^2_{l(t)}(t)=1-\alpha
\end{equation} 
where $\beta(t)=0$ and $0 < \gamma(t) \leq 1$, or $0 < \beta(t) \leq 1$ and $\gamma(t)=1$.

\section{Proof of Proposition \ref{prop:unique_expression_for_all_T}}\label{app:prop:unique_expression_for_all_T}
We prove the Proposition by induction:
\begin{itemize}
\item For $T=T_0$, we have already proved our claim.
\item We suppose that the statement is valid for a given $T$, i.e. there exists $l(T)$, $\beta(T)$ and $\gamma(T)$ such that:
\begin{equation}
\sum_{j=1}^{l(T)-1} p_1\alpha_1(T-j)+ \sum_{j=1}^{l(T)-1} p_2\alpha_2(T-j)+\beta(T) p_1\alpha_1(T-l(T))+ \gamma(T)p_2\alpha_2(T-l(T))=1-\alpha
\end{equation} 
where $\beta(T)=0$ and $0 < \gamma(T) \leq 1$, or $0 < \beta(T) \leq 1$ and $\gamma(T)=1$.
Then, at the next time slot, among the users' proportion scheduled, $\alpha$, exactly $p_1\alpha_1(T)$ and $p_2\alpha_2(T)$ will go to state one for each class, while for the rest, their states will be incremented by one. 
Likewise, for the other users for which the action taken is passive, their states will be incremented by one. As consequence, the decreasing order according to the Whittle index value for these proportions of users at the next slot is $\beta(T)p_1 \alpha_1(T-l(T)),\gamma(T)p_2 \alpha_2(T-l(T)), p_1\alpha_1(T-l(T)+1), p_2\alpha_2(T-l(T)+1), p_1\alpha_1(T-l(T)+2), p_2\alpha_2(T-l(T)+2), p_1\alpha_1(T-l(T)+3),p_2\alpha_2(T-l(T)+3),\cdots, p_1\alpha_1(T),p_2\alpha_2(T)$ (As we have mentioned before, the order of the Whittle indices alternates between the two classes because $l(T)+1 \leq l_{\max}+1$). 
Moreover, the states of the users' proportion $(1-p_1)\alpha_1(t)$ and $(1-p_2)\alpha_2(t)$; which are scheduled but they don't transit to the state $1$ with respect to their classes; will be increased by one. Leveraging the above results, we provide the decreasing order of all users' proportions according to the Whittle index value depending on two cases of $\beta(t)$.\\
If $\beta(T)=0$, then the smallest state's value among the users' proportions $(1-p_1)\alpha_1(t)$ and $(1-p_2)\alpha_2(t)$ at time $T+1$ is $l(T)+1$. Hence, their Whittle index values will be higher than $w_2(l(T)+1)$, and consequently, they will be higher than those of users' proportion of $\gamma(T)p_2 \alpha_2(T-l(T))$ at state $l(T)+1$ in class 2.\\ 
If $\beta(T) \neq 1$, the smallest state value among the users' proportions $(1-p_1)\alpha_1(t)$ and $(1-p_2)\alpha_2(t)$ at time $T+1$ is respectively $l(T)+1$ and $l(T)+2$. Then, their Whittle index values will be higher than $w_1(l(T)+1)$ ($w_1(l(T)+1) < w_2(l(T)+2)$ as the alternation condition is satisfied from $1$ until $l_{\max}+1$). Consequently, their Whittle index values will be higher than the Whittle index of users' proportion $\beta(T)p_1 \alpha_1(T-l(T))$ at state $l(T)+1$ in class 1.\\
Thus, the decreasing order of all users' proportions according to the Whittle index value whatever the value of $\beta(T)$ at $T+1$ is: $(1-p_1)\alpha_1(t),(1-p_2)\alpha_2(t),\beta(T)p_1 \alpha_1(T-l(T)),\gamma(T)p_2 \alpha_2(T-l(T)), p_1\alpha_1(T-l(T)+1), p_2\alpha_2(T-l(T)+1), p_1\alpha_1(T-l(T)+2), p_2\alpha_2(T-l(T)+2), p_1\alpha_1(T-l(T)+3),p_2\alpha_2(T-l(T)+3),\cdots,p_1\alpha_1(T),p_2\alpha_2(T)$.\\
As we have that $(1-p_1)\alpha_1(t)+(1-p_2)\alpha_2(t) \leq \alpha$, then surely the thresholds at time $T+1$ in class 1 and in class 2 are less than the state of the users' proportion $\beta(T)p_1 \alpha_1(T-l(T))$ and $\gamma(T)p_2 \alpha_2(T-l(T))$ respectively. Therefore, there exists $l_1(T+1)$, $l_2(T+1)$, $\beta(T+1)$ and $\gamma(T+1)$ such that $0 < \beta(T+1) \leq 1$ and $\gamma(T+1)=1$, or $\beta(T+1)=1$ and $0 < \gamma(T+1) \leq 1$: 
\begin{equation}\sum_{j=1}^{l_1(T+1)-1} p_1\alpha_1(T+1-j)+ \sum_{j=1}^{l_2(T+1)-1} p_2\alpha_2(T+1-j)+\beta(T+1) p_1\alpha_1(T+1-l_1(T+1))+ \gamma(T)p_2\alpha_2(T+1-l_2(T+1))=1-\alpha\end{equation} 
Now we prove by contradiction that $\max(l_1(T+1),l_2(T+1))\leq T_{max}$.\\
We prove first that $l_2(T+1)$ is greater than $l_1(T+1)$.\\
As $w_2(l_1(T+1)) < w_1(l_1(T+1))$,
that means according to lemma \ref{lem:inequalities_whittle_index_thresholds}, $l_2(T+1)$ is greater than $l_1(T+1)$ ($w_2(l_1(T+1)) < w_2(l_2(T+1))$).\\
Reasoning by contradiction, if $l_2(T+1) > T_{max}$, then we distinguish between two cases:
\begin{itemize}
\item First case: If $\beta(T+1)=1$:\\
we have that $w_1(T_{max})< w_2(l_2(T+1))$ ($w_1(T_{max})< w_2(T_{max}+1)$ as the alternation condition is satisfied in $[1,T_{\max}+1]$),
i.e., according to lemma \ref{lem:inequalities_whittle_index_thresholds}, we have that $l_{max} \leq l_1(T+1)$. Hence, according to lemmas \ref{lem:alpha_greater_stric_0} and \ref{lem:upper_bound_threshold_max}, we have that: 
\begin{equation}\sum_{j=1}^{l_1(T+1)-1} p_1\alpha_1(T+1-j)+ \sum_{j=1}^{l_2(T+1)-1} p_2\alpha_2(T+1-j)+\beta(T+1) p_1\alpha_1(T+1-l_1(T+1))+ \gamma(T+1)p_2\alpha_2(T+1-l_2(T+1))\end{equation}
\begin{equation} 
=1-\alpha \geq \sum_{j=1}^{T_{max}} p_1\alpha_1(T+1-j)+ \sum_{j=1}^{T_{max}} p_2\alpha_2(T+1-j) > T_{max} p_2 \alpha \geq 1-\alpha
\end{equation}
Therefore we end up with:
\begin{equation}
1-\alpha > 1-\alpha
\end{equation}
Hence, the assumption that $l_2(T+1) > T_{max}$ leads us to an illogical statement. Consequently, the hypothesis of $l_2(T+1) > l_{max}$ is not valid for the first case.\\
\item Second case: If $\beta(T+1)<1$:\\
Then we have that $\gamma(T+1)=1$. Therefore, all users at state $l_2(T+1)$ in class 2 are in the proportion $1-\alpha$ with the smallest Whittle index values. However, there are users in state $l_1(T+1)$ in class 1 of the $\alpha$ proportion with the highest Whittle index values. In other words, $w_1(l_1(T+1)) \geq w_2(l_2(T+1)) > w_1(T_{max})$. This means that $l_1(T+1) > T_{max}$.
Therefore, according to lemmas \ref{lem:alpha_greater_stric_0} and \ref{lem:upper_bound_threshold_max}: 
\begin{equation}\sum_{j=1}^{l_1(T+1)-1} p_1\alpha_1(T+1-j)+ \sum_{j=1}^{l_2(T+1)-1} p_2\alpha_2(T+1-j)+\beta(T+1) p_1\alpha_1(T+1-l_1(T+1))+ \gamma(T+1)p_2\alpha_2(T+1-l_2(T+1))
\end{equation}
\begin{equation}
=1-\alpha \geq \sum_{j=1}^{T_{max}} p_1\alpha_1(T+1-j)+ \sum_{j=1}^{T_{max}} p_2\alpha_2(T+1-j) > T_{max} p_2 \alpha \geq 1-\alpha
\end{equation}
\begin{equation}
1-\alpha > 1-\alpha
\end{equation}
Therefore, the hypothesis of $l_2(T+1) > T_{max}$ is not valid for the second case.\\
\end{itemize}
Consequently, we have that $l_2(T+1) \leq T_{max}$, i.e. $\max(l_1(T+1),l_2(T+1))\leq T_{max}$. 
Then, according to Proposition \ref{prop:unique_threshold_and_expression}, there exists $l(T+1)$, and $\gamma(T+1)$ and $\beta(T+1)$ such that:
\begin{equation}\sum_{j=1}^{l(T+1)-1} p_1\alpha_1(T+1-j)+ \sum_{j=1}^{l(T+1)-1} p_2\alpha_2(T+1-j)+\beta(T+1) p_1\alpha_1(T+1-l(T+1))+ \gamma(T+1)p_2\alpha_2(T+1-l(T+1))=1-\alpha\end{equation} 
where $\beta(T+1)=0$ and $0 < \gamma(T+1) \leq 1$, or $0 < \beta(T+1) \leq 1$ and $\gamma(T+1)=1$.
\end{itemize}
To conclude, we have proved  by induction, that for all $T \geq T_0$, there exists $l(T)$, $\beta(T)$ and $\gamma(T)$, such that:
\begin{equation}\sum_{j=1}^{l(T)-1} p_1\alpha_1(T-j)+ \sum_{j=1}^{l(T)-1} p_2\alpha_2(T-j)+\beta(T) p_1\alpha_1(T-l(T))+ \gamma(T)p_2\alpha_2(T-l(T))=1-\alpha
\end{equation} 
where $\beta(T)=0$ and $0 < \gamma(T) \leq 1$, or $0 < \beta(T) \leq 1$ and $\gamma(T)=1$, which concludes our proof.\\  

\section{Proof of Proposition \ref{prop:elements_inclusion}}\label{app:prop:elements_inclusion}
We proceed by the same method used to prove the Proposition \ref{prop:unique_expression_for_all_T}.\\
We consider at time $T$:
\begin{equation}\sum_{j=1}^{l(T)-1} p_1\alpha_1(T-j)+ \sum_{j=1}^{l(T)-1} p_2\alpha_2(T-j)+\beta(T) p_1\alpha_1(T-l(T))+\gamma(T)p_2\alpha_2(T-l(T))=1-\alpha,
\end{equation} 
where $\beta(T)=0$ and $0 \leq \gamma(T) < 1$, or $0 \leq \beta(T) < 1$ and $\gamma(T)=1$.
Among the users' proportion scheduled $\alpha$, exactly $p_1\alpha_1(T)$ and $p_2\alpha_2(T)$ will go to state one for each classes, and  $(1-p_1)\alpha_1(T)$ and $(1-p_2)\alpha_2(T)$ will go to the next state.\\ 
For the other users for which the action taken is passive, their states will be increased by one, then the decreasing order according to the Whittle index value at the next time slot is $\beta(T) p_1\alpha_1(T-l(T)), \gamma(T) p_2\alpha_2(T-l(T)), p_1\alpha_1(T-l(T)+1), p_2\alpha_2(T-l(T)+1), p_1\alpha_1(T-l(T)+2),p_2\alpha_2(T-l(T)+2)\cdots p_1 \alpha_1(T),p_2 \alpha_2(T)$ (As we said before that the order based on the value of the Whittle indices, alternate between the two classes from state $1$ to $l(T) \leq l_{\max}+1$). Moreover, the users' proportion scheduled $(1-p_1)\alpha_1(T)$ and $(1-p_2)\alpha_2(T)$ will be at states that have Whittle index values higher than those of $\beta(T) p_1\alpha_1(T-l(T))$ and  $\gamma(T) p_2\alpha_2(T-l(T))$ (as we have explained in the proof of Proposition \ref{prop:unique_expression_for_all_T}).\\
Hence, the global decreasing order according to the Whittle index value is $(1-p_1)\alpha_1(T),(1-p_1)\alpha_2(T),\beta(T) p_1\alpha_1(T-l(T)), \gamma(T) p_2\alpha_2(T-l(T)), p_1\alpha_1(T-l(T)+1), p_2\alpha_2(T-l(T)+1), p_1\alpha_1(T-l(T)+2),p_2\alpha_2(T-l(T)+2)\cdots p_1 \alpha_1(T),p_2 \alpha_2(T)$.\\
Providing that $(1-p_1)\alpha_1(t)+(1-p_2)\alpha_2(t) \leq \alpha$, then at time $T+1$:
\begin{equation}\sum_{j=1}^{l(T)-1} p_1\alpha_1(T-j)+ \sum_{j=1}^{l(T)-1} p_2\alpha_2(T-j)+\beta(T) p_1\alpha_1(T-l(T))+\gamma(T)p_2\alpha_2(T-l(T))+p_1\alpha_1(T)+p_2\alpha_2(T) \geq 1-\alpha
\end{equation}
Then, there exists $\beta=0$ and $0 < \gamma \leq 1$, or $0 < \beta \leq 1$ and $\gamma=1$, and sub-set $\{\alpha_1(T),\alpha_2(T),\alpha_1(T-1),\alpha_2(T-1)\cdots \alpha_1(T-m),\alpha_2(T-m)\} \subset \{\alpha_1(T-l(T)),\alpha_2(T-l(T)),\alpha_1(T-l(T)+1),\alpha_2(T-l(T)+1), \alpha_1(T-l(T)+2),\alpha_2(T-l(T)+2)\cdots \alpha_1(T), \alpha_2(T)\}$, such that: 
\begin{equation}\sum_{j=1}^{(m+1)-1} p_1\alpha_1(T+1-j)+ \sum_{j=1}^{(m+1)-1} p_2\alpha_2(T+1-j)+\beta p_1\alpha_1(T+1-(m+1))+\gamma p_2\alpha_2(T+1-(m+1))=1-\alpha
\end{equation}
Indeed, $m+1$ is effectively $l(T+1)$, $\beta=\beta(T+1)$, $\gamma=\gamma(T+1)$, and the elements of the set $\{\alpha_1(T), \alpha_1(T-1), \cdots \alpha_1(T-m)\} \cup \{\alpha_1(T+1)\}$ and the set $\{\alpha_2(T), \alpha_2(T-1), \cdots, \alpha_2(T-m) \} \cup \{\alpha_2(T+1)\}$ are exactly the elements of the vectors $A_1(T+1)$ and $A_2(T+1)$ respectively. Given that $\{\alpha_1(T), \alpha_1(T-1), \cdots \alpha_1(T-m)\}$ and $\{\alpha_2(T), \alpha_2(T-1), \cdots, \alpha_2(T-m)\}$ are included in the set of elements of the vector $A_1(T)$ and $A_2(T)$ respectively, then for $k=1,2$, all the elements of the vector $A_k(T+1)$ except $\alpha_k(T+1)$ belong to the elements of vector $A_k(T)$.

\section{Proof of Proposition \ref{prop:inequality_satisfied_by_first_element}}\label{app:prop:inequality_satisfied_by_first_element}
According to Proposition \ref{prop:unique_expression_for_all_T}, the elements of the vectors $A_1(T)$ and $A_2(T)$ satisfy: 
\begin{equation}\sum_{j=1}^{l(T)-1} p_1\alpha_1(T-j)+ \sum_{j=1}^{l(T)-1} p_2\alpha_2(T-j)+\beta(T) p_1\alpha_1(T-l(T))+ \gamma(T) p_2\alpha_2(T-l(T))=1-\alpha,
\end{equation}
where $0 < \beta(T) \leq 1$ and $\gamma(T)=1$, or $\beta(T)=0$ and $0 < \gamma(T) \leq 1$. 
We distinguish between two cases depending on the values of $\beta$ and $\gamma$ (we drop the index $T$ on $\beta(T)$ and $\gamma(T)$ to ease the notation):\\
\begin{itemize}

\item First case: $0 < \beta \leq 1$, and $\gamma=1$:\\
Hence:
\begin{equation}
\sum_{j=1}^{l(T)-1} p_1\alpha_1(T-j)+ \sum_{j=1}^{l(T)-1} p_2\alpha_2(T-j)+\beta(T) p_1\alpha_1(T-l(T))+p_2\alpha_2(T-l(T))=1-\alpha
\end{equation} 
Our aim is to derive the expression of $\alpha_k(T+1)$ for class 1 and class 2.
Among the users' proportion scheduled $\alpha$, exactly $p_1\alpha_1(T)$ and $p_2\alpha_2(T)$ will go to state one for each class, and the rest will go to the next state. Hence: 
\begin{equation}
\alpha_1(T+1)=(1-p_1)\alpha_1(T)+B_1(T)
\end{equation}
\begin{equation}
\alpha_2(T+1)=(1-p_2)\alpha_2(T)+B_2(T)
\end{equation}
such that $B_1(T)+B_2(T)=p_1\alpha_1(T)+p_2\alpha_2(T)$.\\
At time $T+1$, the decreasing order according to the Whittle index value is $(1-p_1)\alpha_1(T),(1-p_2)\alpha_2(T),\beta p_1\alpha_1(T-l(T)), p_2\alpha_2(T-l(T)), p_1\alpha_1(T-l(T)+1), p_2\alpha_2(T-l(T)+1), p_1\alpha_1(T-l(T)+2),p_2\alpha_2(T-l(T)+2),\cdots,p_1\alpha_1(T),p_2\alpha_2(T)$.\\
In order to get $B_1(T)$ and $B_2(T)$, we sum the users' proportions at different states starting from the users' proportion $\beta p_1\alpha_1(T-l(T))$ following the decreasing order of the Whittle index until we get the sum that equals to $p_1\alpha_1(T)+p_2\alpha_2(T)$. We distinguish between six sub-cases and for each sub-case, we prove that $\alpha_k(T+1)$ is surely between two elements of the vector $A_k(T)$. In fact, if we prove it just for one class, the result will be true for the other one, since $\alpha_1(T)+\alpha_2(T)=\alpha$ for all $T$. In the following, we derive the expression of $\alpha_k(T+1)$ for $k=1,2$, in function of the elements of the vector $A_1(T)$ and $A_2(T)$ and we show that $\alpha_1(T+1)$ is surely between two elements of the vector $A_1(T)$.\\ 
1) If $p_1\alpha_1(T)+p_2\alpha_2(T) \leq p_1\beta \alpha_1(T-l(T))$:\\
In this case $p_1\alpha_1(T)+p_2\alpha_2(T)$ is less than $ p_1\beta \alpha_1(T-l(T))$. Therefore, we will take a proportion of users from $ p_1\beta \alpha_1(T-l(T))$ that equals to $p_1\alpha_1(T)+p_2\alpha_2(T)$ denoted by $C$. This users' proportion exactly equals to $B_1(T)+B_2(T)$ that we add to $(1-p_1)\alpha_1(T)$ and $(1-p_2)\alpha_2(T)$. Thus, $B_1(T)+B_2(T)=C$.
However, since all the users of the proportion $C$ belong to $p_1\beta \alpha_1(T-l(T))$, then $C$ contains only the users of the class 1. Consequently, $B_1(T)=C$ and $B_2(T)=0$.  
Hence:
\begin{equation}
\alpha_2(T+1)=(1-p_2) \alpha_2(T)
\end{equation}
As $\alpha_1(T+1)+\alpha(T+1)=\alpha$, then: 
\begin{equation}
\alpha_1(T+1)=\alpha-\alpha_2(T+1)
\end{equation}
Now we find the upper bound of $\alpha_2(T)-\alpha_2(T+1)$: 
\begin{align}
\alpha_2(T)-\alpha_2(T+1)=&p_2 \alpha(T)\\
\leq& \beta \alpha_1(T-l(T))p_1-\alpha_1(T)p_1\\
\leq &p_1(\alpha_1(T-l(T))-\alpha_1(T))\\
=&p_1(\alpha_2(T)-\alpha_2(T-l(T)))
\end{align}
The first inequality comes from the fact that $p_1\alpha_1(T)+p_2\alpha_2(T) \leq p_1\beta \alpha_1(T-l(T))$ and the second one comes from the fact that $\beta \leq 1$.\\
Given that $\alpha_2(i)-\alpha_2(j)=\alpha_1(j)-\alpha_1(i)$ for all integers $i$ and $j$, thus:\\\begin{equation}
\alpha_1(T+1)-\alpha_1(T) \leq p_1(\alpha_1(T-l(T))-\alpha_1(T))
\end{equation}
Moreover, we have that $\alpha_1(T+1)-\alpha_1(T) \geq 0$ because $\alpha_2(T+1)-\alpha_2(T) \leq 0$. Therefore, $\alpha_1(T) \leq \alpha_1(T+1)$. On the other hands, as $p_1(\alpha_1(T-l(T))-\alpha_1(T)) \geq \alpha_1(T+1)-\alpha_1(T)\geq 0$ then $\alpha_1(T-l(T))-\alpha_1(T) \geq \alpha_1(T+1)-\alpha_1(T)$. This means that $\alpha_1(T-l(T)) \geq \alpha_1(T+1)$. Consequently, we end up with:
\begin{equation}
\alpha_1(T) \leq \alpha_1(T+1) \leq \alpha_1(T-l(T))
\end{equation}\\
 
2) If $\beta \alpha_1(T-l(T))p_1 \leq p_1\alpha_1(T)+p_2\alpha_2(T) \leq \beta \alpha_1(T-l(T))p_1+\alpha_2(T-l(T))p_2$:\\ 
Hence: 
\begin{align}
\alpha_1(T+1)=&(1-p_1) \alpha_1(T)+\beta p_1 \alpha_1(T-l(T))\\
\alpha_2(T+1)=&\alpha-\alpha_1(T+1)
\end{align}
Then:
\begin{align}
\alpha_1(T+1)-\alpha_1(T)=&\beta p_1 \alpha_1(T-l(T)-p_1 \alpha_1(T)\\
\leq & p_1 (\alpha_1(T-l(T)-\alpha_1(T))
\end{align}
On the other hand, we have according to the right inequality of sub-case's assumption:
\begin{align}
\alpha_1(T+1)-\alpha_1(T)=&\beta p_1 \alpha_1(T-l(T)-p_1 \alpha_1(T)\\
\geq & p_2 \alpha_2(T)-p_2 \alpha_2(T-l(T))\\
=& p_2 (\alpha_1(T-l(T))- \alpha_1(T))
\end{align}
Hence :
\begin{align}
p_2(\alpha_1(T-l(T))-\alpha_1(T))\leq \alpha_1(T+1)-\alpha_1(T) \leq p_1(\alpha_1(T-l(T))-\alpha_1(T))
\end{align}
Knowing that $p_2<p_1$, the later inequalities imply that  $\alpha_1(T-l(T))-\alpha_1(T) \geq 0$.\\
As a result we have that: 
\begin{equation}
\alpha_1(T) \leq \alpha_1(T+1) \leq \alpha_1(T-l(T))
\end{equation}

And \begin{equation}\alpha_1(T+1)-\alpha_1(T) \leq p_1(\alpha_1(T-l(T))-\alpha_1(T))
\end{equation}

3) If $\beta \alpha_1(T-l(T))p_1+\alpha_2(T-l(T))p_2 \leq p_1\alpha_1(T)+p_2\alpha_2(T) \leq \beta \alpha_1(T-l(T))p_1+\alpha_2(T-l(T))p_2+p_1\alpha_1(T-l(T)+1)$:\\
Hence: 
\begin{align}
\alpha_2(T+1)=&(1-p_2) \alpha_2(T)+p_2\alpha_2(T-l(T))\\
\alpha_1(T+1)=&\alpha-\alpha_2(T+1)
\end{align}
Therefore:
\begin{equation}\alpha_2(T+1)-\alpha_2(T)= p_2(\alpha_2(T-l(T))-\alpha_2(T))
\end{equation}
And:
\begin{equation}\alpha_1(T)-\alpha_1(T+1)= p_2(\alpha_1(T)-\alpha_1(T-l(T)))
\end{equation}
This means that if $\alpha_1(T) \leq \alpha_1(T+1)$:\\
\begin{equation}\alpha_1(T) \leq \alpha_1(T+1) \leq \alpha_1(T-l(T))\end{equation}
And 
\begin{equation}\alpha_1(T+1)-\alpha_1(T) \leq p_1(\alpha_1(T-l(T))-\alpha_1(T))\end{equation}
If $\alpha_1(T+1) \leq \alpha_1(T)$:\\
\begin{equation}\alpha_1(T-l(T)) \leq \alpha_1(T+1) \leq \alpha_1(T)\end{equation}
And 
\begin{equation}\alpha_1(T)-\alpha_1(T+1) \leq p_1(\alpha_1(T)-\alpha_1(T-l(T)))\end{equation} 

4) If $\beta \alpha_1(T-l(T))p_1+\alpha_2(T-l(T))p_2+p_1\alpha_1(T-l(T)+1) \leq p_1\alpha_1(T)+p_2\alpha_2(T) \leq \beta \alpha_1(T-l(T))p_1+\alpha_2(T-l(T))p_2+p_1\alpha_1(T-l(T)+1)+p_2 \alpha_2(T-l(T)+1):$\\
Hence:
\begin{align}
\alpha_1(T+1)=&(1-p_1) \alpha_1(T)+p_1\beta \alpha_1(T-l(T))+p_1\alpha_1(T-l(T)+1)\\
\alpha_2(T+1)=&\alpha-\alpha_1(T+1)
\end{align}
Therefore: 
\begin{align}
\alpha_1(T+1)-\alpha_1(T)=&-p_1 \alpha_1(T)+p_1\beta \alpha_1(T-l(T))+p_1\alpha_1(T-l(T)+1)\\
\end{align}
According to the left inequality of the assumption of this case, we have that:
\begin{align}
\alpha_1(T+1)-\alpha_1(T) \leq &p_2 \alpha_2(T)-p_2\alpha_2(T-l(T))\\
=&p_2(\alpha_1(T-l(T))- \alpha_1(T))
\end{align}
On the other hand, we have that:
\begin{align}
\alpha_1(T+1)-\alpha_1(T)=&-p_1 \alpha_1(T)+p_1\beta \alpha_1(T-l(T))+p_1\alpha_1(T-l(T)+1)\\
\geq& p_1 (\alpha_1(T-l(T)+1)+\alpha_1(T)\\
\end{align}
Hence:
\begin{equation}
p_1(\alpha_1(T-l(T)+1)-\alpha_1(T))\leq \alpha_1(T+1)-\alpha_1(T) \leq p_2(\alpha_1(T-l(T))-\alpha_1(T))\end{equation}
Thus:\\
If $\alpha_1(T) \leq \alpha_1(T+1)$:\\
\begin{equation}\alpha_1(T) \leq \alpha_1(T+1) \leq \alpha_1(T-l(T))\end{equation}
And 
\begin{align}
\alpha_1(T+1)-\alpha_1(T) \leq & p_2(\alpha_1(T-l(T))-\alpha_1(T))\\
\leq & p_1(\alpha_1(T-l(T))-\alpha_1(T))
\end{align}
If $\alpha_1(T+1) \leq \alpha_1(T)$:\\
\begin{equation}\alpha_1(T-l(T)+1) \leq \alpha_1(T+1) \leq \alpha_1(T)\end{equation}
And 
\begin{align}\alpha_1(T)-\alpha_1(T+1) \leq& p_1(\alpha_1(T)-\alpha_1(T-l(T)+1))
\end{align} 

5) If there exists $m \geq 1$ such that:\\
$\beta \alpha_1(T-l(T))p_1+\alpha_2(T-l(T))p_2+p_1\alpha_1(T-l(T)+1)+\cdots+p_1\alpha_1(T-l(T)+m)+p_2\alpha_2(T-l(T)+m) \leq p_1\alpha_1(T)+p_2\alpha_2(T) \leq \beta \alpha_1(T-l(T))p_1+\alpha_2(T-l(T))p_2+p_1\alpha_1(T-l(T)+1)+\cdots+p_1\alpha_1(T-l(T)+m)+p_2\alpha_2(T-l(T)+m)+p_1 \alpha_1(T-l(T)+m+1):$\\
This means that:
\begin{align}
\alpha_2(T+1)=&(1-p_2) \alpha_2(T)+p_2 \alpha_2(T-l(T))+\cdots+ p_2\alpha_2(T-l(T)+m)\\
\alpha_1(T+1)=&\alpha-\alpha_2(T+1)
\end{align}
We have that:
\begin{align}
\alpha_2(T+1)-\alpha_2(T)&=-p_2\alpha_2(T)+p_2 \alpha_2(T-l(T))+p_2 \alpha_1(T-l(T)+1)+\cdots+ p_2\alpha_2(T-l(T)+m)\\
&\geq p_2(\alpha_2(T-l(T)+1)-\alpha_2(T))
\end{align}
On the other hand:
\begin{align}
\alpha_2(T+1)-\alpha_2(T)&=-p_2\alpha_2(T)+p_2\alpha_2(T-l(T))+p_2 \alpha_2(T-l(T)+1)+\cdots+ p_2\alpha_2(T-l(T)+m)\\
&\leq p_1 \alpha_1(T)-\beta p_1 \alpha_1(T-l(T))-\sum_{i=1}^{m} p_1 \alpha_1(T-l(T)+i)\\
&\leq  p_1(\alpha_1(T)-\alpha_1(T-l(T)+1))\\
&= p_1(\alpha_2(T-l(T)+1)-\alpha_2(T)
\end{align}
Thus:
\begin{equation}
p_2(\alpha_1(T)-\alpha_1(T-l(T)+1)\leq \alpha_1(T)-\alpha_1(T+1) \leq p_1(\alpha_1(T)-\alpha_1(T-l(T)+1))
\end{equation}
Therefore:
\begin{equation}\alpha_1(T-l(T)+1) \leq \alpha_1(T+1) \leq \alpha_1(T)
\end{equation}
And: 
\begin{equation}\alpha_1(T)-\alpha_1(T+1) \leq p_1(\alpha_1(T)-\alpha_1(T-l(T)+1))
\end{equation}

6) If there exists $m \geq 1$ such that:\\
$\beta \alpha_1(T-l(T))p_1+\alpha_2(T-l(T))p_2+p_1\alpha_1(T-l(T)+1)+
p_2\alpha_2(T-l(T)+1)+\cdots+p_1\alpha_1(T-l(T)+m)+p_2\alpha_2(T-l(T)+m)+p_1\alpha_1(T-l(T)+m+1) \leq p_1\alpha_1(T)+p_2\alpha_2(T) \leq \beta \alpha_1(T-l(T))p_1+\alpha_2(T-l(T))p_2+p_1\alpha_1(T-l(T)+1)+\cdots+p_1\alpha_1(T-l(T)+m)+p_2\alpha_2(T-l(T)+m)+p_1\alpha_1(T-l(T)+m+1)+p_2 \alpha_2(T-l(T)+m+1):$\\
Hence:
\begin{align}
\alpha_1(T+1)=&(1-p_1) \alpha_1(T)+p_1\beta \alpha_1(T-l(T))+ \cdots+ p_1\alpha_1(T-l(T)+m+1)\\
\alpha_2(T+1)=&\alpha-\alpha_1(T+1)
\end{align}
We have that:
\begin{align}
\alpha_1(T+1)-\alpha_1(T)&=-p_1\alpha_1(T)+p_1\beta \alpha_1(T-l(T))+p_1 \alpha_1(T-l(T)+1)+\cdots+ p_1\alpha_1(T-l(T)+m+1)\\
&\geq p_1(\alpha_1(T-l(T)+1)-\alpha_1(T))
\end{align}
On the other hand:
\begin{align}
\alpha_1(T+1)-\alpha_1(T)&=-p_1\alpha_1(T)+p_1\beta \alpha_1(T-l(T))+p_1 \alpha_1(T-l(T)+1)+\cdots+ p_1\alpha_1(T-l(T)+m+1)\\
&\leq p_2 \alpha_2(T)-\sum_{i=0}^{m} p_2 \alpha_2(T-l(T)+i)\\
&\leq  p_2(\alpha_2(T)-\alpha_2(T-l(T)+1))\\
&= p_2(\alpha_1(T-l(T)+1)-\alpha_1(T))
\end{align}
Thus:
\begin{equation}p_1(\alpha_1(T-l(T)+1)-\alpha_1(T))\leq \alpha_1(T+1)-\alpha_1(T) \leq p_2(\alpha_1(T-l(T)+1)-\alpha_1(T))\end{equation}
Therefore:
\begin{equation}\alpha_1(T-l(T)+1) \leq \alpha_1(T+1) \leq \alpha_1(T)\end{equation}
And: 
\begin{equation}\alpha_1(T)-\alpha_1(T+1) \leq p_1(\alpha_1(T)-\alpha_1(T-l(T)+1))\end{equation}

\item Second case: $\beta=0$ and $0<\gamma \leq 1$:\\
Hence, we have that:\\ 
\begin{equation}
\sum_{j=1}^{l(T)-1} p_1\alpha_1(T-j)+ \sum_{j=1}^{l(T)-1} p_2\alpha_2(T-j)+\gamma p_2\alpha_2(T-l(T))=1-\alpha
\end{equation}
Then, at time $T+1$, the decreasing order according to the Whittle index value is $(1-p_1)\alpha_1(T),(1-p_2)\alpha_2(T),\gamma p_2\alpha_2(T-l(T)), p_1\alpha_1(T-l(T)+1), p_2\alpha_2(T-l(T)+1), p_1\alpha_1(T-l(T)+2),p_2\alpha_2(T-l(T)+2),\cdots,p_1\alpha_1(T),p_2\alpha_2(T)$.
In order to obtain $B_1(T)$ and $B_2(T)$, we sum the users' proportions at different states starting from the users' proportion $\gamma p_2\alpha_2(T-l(T))$ following the decreasing order of the Whittle index until we get the sum that equals to $p_1\alpha_1(T)+p_2\alpha_2(T)$.
For this case, we distinguish between five sub-cases, and for each sub-case, we prove that $\alpha_1(T+1)$ is surely between two elements of the vector $A_1(T)$.\\

1) If $ p_1\alpha_1(T)+p_2\alpha_2(T)\leq \gamma \alpha_2(T-l(T))p_2$:\\
Hence: \begin{equation}
\alpha_1(T+1)=(1-p_1) \alpha_1(T)
\end{equation}
\begin{equation}
\alpha_2(T+1)=\alpha-\alpha_1(T+1)
\end{equation}
We have that:
\begin{align}
\alpha_1(T)-\alpha_1(T+1)=&p_1 \alpha_1(T)\\
\leq& \gamma\alpha_2(T-l(T))p_2-\alpha_2(T)p_2\\
\leq& p_2(\alpha_2(T-l(T))-\alpha_2(T))\\
=&p_2(\alpha_1(T)-\alpha_1(T-l(T)))
\end{align}
Thus:\begin{equation}\alpha_1(T)-\alpha_1(T+1) \leq p_1(\alpha_1(T)-\alpha_1(T-l(T))\end{equation} 
And: \begin{equation}\alpha_1(T-l(T)) \leq \alpha_1(T+1) \leq \alpha_1(T)\end{equation}\\

2) If $\gamma\alpha_2(T-l(T))p_2 \leq p_1\alpha_1(T)+p_2\alpha_2(T) \leq \gamma\alpha_2(T-l(T))p_2+\alpha_1(T-l(T)+1)p_1$\\ 
Consequently:
\begin{align}
\alpha_2(T+1)=&(1-p_2) \alpha_2(T)+\gamma p_2 \alpha_2(T-l(T))\\
\alpha_1(T+1)=&\alpha-\alpha_2(T+1)
\end{align}
Hence:
\begin{align}
\alpha_2(T+1)-\alpha_2(T)=&-p_2 \alpha_2(T)+\gamma p_2 \alpha_2(T-l(T))\\
\leq&p_2(\alpha_2(T-l(T))-\alpha_2(T))\\
\end{align}
On the other hand, according to the right inequality of the assumption of this case, we have that:
\begin{align}
\alpha_2(T+1)-\alpha_2(T)=&-p_2 \alpha_2(T)+\gamma p_2 \alpha_2(T-l(T))\\
\geq&p_1(\alpha_1(T)-\alpha_1(T-l(T)+1))\\
=&p_1(\alpha_2(T-l(T)+1)-\alpha_2(T))
\end{align}
That means:
\begin{equation}p_1(\alpha_2(T-l(T)+1)-\alpha_2(T))\leq \alpha_2(T+1)-\alpha_2(T) \leq p_2(\alpha_2(T-l(T))-\alpha_2(T))\end{equation}
i.e.\begin{equation}p_1(\alpha_1(T)-\alpha_1(T-l(T)+1))\leq \alpha_1(T)-\alpha_1(T+1) \leq p_2(\alpha_1(T)-\alpha_1(T-l(T)))\end{equation}
Therefore:\\
If $\alpha_1(T) \leq \alpha_1(T+1)$:\\
\begin{equation}\alpha_1(T) \leq \alpha_1(T+1) \leq \alpha_1(T-l(T)+1)\end{equation}
And: 
\begin{equation}\alpha_1(T+1)-\alpha_1(T) \leq p_1(\alpha_1(T-l(T)+1)-\alpha_1(T))\end{equation}
If $\alpha_1(T+1) \leq \alpha_1(T)$:\\
\begin{equation}\alpha_1(T-l(T)) \leq \alpha_1(T+1) \leq \alpha_1(T)\end{equation}
And: 
\begin{equation}\alpha_1(T)-\alpha_1(T+1) \leq p_1(\alpha_1(T)-\alpha_1(T-l(T)))\end{equation}

3) If $\gamma\alpha_2(T-l(T))p_2+\alpha_1(T-l(T)+1)p_1 \leq p_1\alpha_1(T)+p_2\alpha_2(T) \leq \gamma\alpha_2(T-l(T))p_2+\alpha_1(T-l(T)+1)p_1+p_2\alpha_2(T-l(T)+1)$.\\
Hence:
\begin{align}
\alpha_1(T+1)=&(1-p_1) \alpha_1(T)+p_1\alpha_1(T-l(T)+1)\\
\alpha_2(T+1)=&\alpha-\alpha_1(T+1)
\end{align}
We have that:
\begin{equation}\alpha_1(T+1)-\alpha_1(T)= p_1(\alpha_1(T-l(T)+1)-\alpha_1(T)\end{equation}
If $\alpha_1(T) \leq \alpha_1(T+1)$:\\
\begin{equation}\alpha_1(T) \leq \alpha_1(T+1) \leq \alpha_1(T-l(T)+1)\end{equation}
And:
\begin{equation}\alpha_1(T+1)-\alpha_1(T) \leq p_1(\alpha_1(T-l(T)+1)-\alpha_1(T))\end{equation}
If $\alpha_1(T+1) \leq \alpha_1(T)$:\\
\begin{equation}\alpha_1(T-l(T)+1) \leq \alpha_1(T+1) \leq \alpha_1(T)\end{equation}
And:
\begin{equation}\alpha_1(T)-\alpha_1(T+1) \leq p_1(\alpha_1(T)-\alpha_1(T-l(T)+1))
\end{equation} 

4) If there exists $m \geq 1$ such that:\\
$\gamma\alpha_2(T-l(T))p_2+\cdots+\alpha_1(T-l(T)+m)p_1+p_2\alpha_2(T-l(T)+m) \leq p_1\alpha_1(T)+p_2\alpha_2(T) \leq \gamma\alpha_2(T-l(T))p_2+\cdots+\alpha_1(T-l(T)+m)p_1+p_2\alpha_2(T-l(T)+m)+p_1 \alpha_1(T-l(T)+m+1)$:\\
Hence:
\begin{align}
\alpha_2(T+1)=&(1-p_2) \alpha_2(T)+p_2\gamma\alpha_2(T-l(T))+\cdots+p_2\alpha_2(T-l(T)+m)\\
\alpha_1(T+1)=&\alpha-\alpha_2(T+1)
\end{align}
\begin{align}
\alpha_2(T+1)-\alpha_2(T)&=-p_2\alpha_2(T)+p_2\gamma \alpha_2(T-l(T))+p_2 \alpha_2(T-l(T)+1)+\cdots+ p_2\alpha_2(T-l(T)+m)\\
&\geq p_2(\alpha_2(T-l(T)+1)-\alpha_2(T))
\end{align}
On the other hand:
\begin{align}
\alpha_2(T+1)-\alpha_2(T)&=-p_2\alpha_2(T)+p_2\gamma \alpha_2(T-l(T))+p_2 \alpha_2(T-l(T)+1)+\cdots+ p_2\alpha_2(T-l(T)+m)\\
&\leq p_1 \alpha_1(T)-\sum_{i=1}^{m} p_1 \alpha_1(T-l(T)+i)\\
&\leq  p_1(\alpha_1(T)-\alpha_1(T-l(T)+1))\\
&= p_1(\alpha_2(T-l(T)+1)-\alpha_2(T))
\end{align}
Thus:
\begin{equation}p_2(\alpha_1(T)-\alpha_1(T-l(T)+1)\leq \alpha_1(T)-\alpha_1(T+1) \leq p_1(\alpha_1(T)-\alpha_1(T-l(T)+1))\end{equation}
Therefore:
\begin{equation}\alpha_1(T-l(T)+1) \leq \alpha_1(T+1) \leq \alpha_1(T)\end{equation}
And:
\begin{equation}\alpha_1(T)-\alpha_1(T+1) \leq p_1(\alpha_1(T)-\alpha_1(T-l(T)+1))\end{equation}

5) If there exists $m \geq 1$ such that:\\
$\gamma\alpha_2(T-l(T))p_2+\cdots+\alpha_1(T-l(T)+m)p_1+p_2\alpha_2(T-l(T)+m)+p_1\alpha_1(T-l(T)+m+1) \leq p_1\alpha_1(T)+p_2\alpha_2(T) \leq \gamma\alpha_2(T-l(T))p_2+\cdots+\alpha_1(T-l(T)+m)p_1+p_2\alpha_2(T-l(T)+m)+p_1 \alpha_1(T-l(T)+m+1)+p_2\alpha_2(T-l(T)+m+1)$:\\
That implies that:
\begin{align}
\alpha_1(T+1)=&(1-p_1) \alpha_1(T)+\cdots+p_1\alpha_1(T-l(T)+m)+p_1\alpha_1(T-l(T)+m+1)\\
\alpha_2(T+1)=&\alpha-\alpha_1(T+1)
\end{align}
\begin{align}
\alpha_1(T+1)-\alpha_1(T)&=-p_1\alpha_1(T)+p_1 \alpha_1(T-l(T)+1)+\cdots+ p_1\alpha_1(T-l(T)+m+1)\\
&\geq p_1(\alpha_1(T-l(T)+1)-\alpha_1(T))
\end{align}
On the other hand:
\begin{align}
\alpha_1(T+1)-\alpha_1(T)&=-p_1\alpha_1(T)+p_1 \alpha_1(T-l(T)+1)+\cdots+ p_1\alpha_1(T-l(T)+m+1)\\
&\leq p_2 \alpha_2(T)-\gamma p_2 \alpha_2(T-l(T))-\sum_{i=1}^{m} p_2 \alpha_2(T-l(T)+i)\\
&\leq  p_2(\alpha_2(T)-\alpha_2(T-l(T)+1))\\
&= p_2(\alpha_1(T-l(T)+1)-\alpha_1(T))
\end{align}
Thus:
\begin{equation}p_1(\alpha_1(T-l(T)+1)-\alpha_1(T)\leq \alpha_1(T+1)-\alpha_1(T) \leq p_2(\alpha_1(T-l(T)+1)-\alpha_1(T))\end{equation}
Therefore:
\begin{equation}
\alpha_1(T-l(T)+1) \leq \alpha_1(T+1) \leq \alpha_1(T)
\end{equation}
And: 
\begin{equation}
\alpha_1(T)-\alpha_1(T+1) \leq p_1(\alpha_1(T)-\alpha_1(T-l(T)+1))
\end{equation}
\end{itemize}

In conclusion, all these six sub-cases when $\gamma=1$ and $0<\beta \leq 1$, plus the five sub-cases when $\beta=0$ and $0<\gamma\leq 1$, can be summarized in four cases:\\

1) $\alpha_1(T) \leq \alpha_1(T+1) \leq \alpha_1(T-l(T))$, and $\alpha_1(T+1)-\alpha_1(T) \leq p_1(\alpha_1(T-l(T))-\alpha_1(T))$.\\

2) $\alpha_1(T-l(T)) \leq \alpha_1(T+1) \leq \alpha_1(T)$, and $\alpha_1(T)-\alpha_1(T+1) \leq p_1(\alpha_1(T)-\alpha_1(T-l(T)))$.\\

3) $\alpha_1(T-l(T)+1) \leq \alpha_1(T+1) \leq \alpha_1(T)$, and $\alpha_1(T)-\alpha_1(T+1) \leq p_1(\alpha_1(T)-\alpha_1(T-l(T)+1))$.\\

4) $\alpha_1(T) \leq \alpha_1(T+1) \leq \alpha_1(T-l(T)+1)$, and $\alpha_1(T+1)-\alpha_1(T) \leq p_1(\alpha_1(T-l(T)+1)-\alpha_1(T))$.\\

Thus, the proof is concluded.

\section{Proof of Proposition \ref{prop:max_alpha_less_l_2}}\label{app:prop:max_alpha_less_l_2}
In all the proof, we consider that $\epsilon \leq (l_2-l_1)\frac{(1-p_1)^L}{1-(1-p_1)^L}$.\\
Before tackling the proof, we give a brief insight about the procedure adopted to establish the desired result: 
We start by finding a given time denoted $T_2 \geq T_{\epsilon}$ where $\alpha_1(T_2)$ is less than $l_1$. Then, we show that $\alpha_1(T_2),\cdots,\alpha_1(T_2+L)$ are strictly less than $l_2$. To that end, we start first by defining a relevant sequence $u_n$ in function of $\epsilon$, $l_1$, $l_2$ and $p_1$ when $n \in [0,L]$. After that, we prove that $u_n$ is increasing with $n$ and strictly less than $l_2$. Next, we establish that $u_n$ is an upper bound of $\alpha_1(\cdot)$ in $[T_2,T_2+L]$. More precisely, we show that $\alpha_1(T_2+n)\leq u_n$ for $n \in [0,L]$. For that purpose, we proceed with two following steps: The first one consists of deriving an inequality verified by two consecutive terms of the sequence $\alpha_1(\cdot)$, namely $\alpha_1(T)$ and $\alpha_1(T+1)$ using the Proposition \ref{prop:inequality_satisfied_by_first_element} given that $T \geq T_{\epsilon}$. As for the second step, we use essentially the aforementioned result to demonstrate by induction that $u_n$ is indeed an upper bound of $\alpha_1(T_2+n)$. Finally, based on these results, we show that there exists $T_d$ such that $\max A_1(T_d)<l_2$.\\

To find a time $T_2 \geq T_{\epsilon}$ such that $\alpha_1(T_2)$ is less than $l_1$, we use the fact that $\min A_1(t) \leq l_1$ for all $t$. At time $T_{\epsilon}+L$, we have the vector $A_1(T_{\epsilon}+L)=(\alpha_1(T_{\epsilon}+L),\alpha_1(T_{\epsilon}+L-1),\cdots,\alpha_1(T_{\epsilon}+L-l(T_{\epsilon}+L)))$. Providing that $\min A_1(T_{\epsilon}+L) \leq l_1$, then there exists an element from the vector $A_1(T_{\epsilon}+L)$ less than $l_1$ denoted by $\alpha_1(T_2)$. According to \ref{prop:unique_expression_for_all_T}, we have for all $T\geq T_0$, $l(T) \leq l_{\max}=L$, then $l(T_{\epsilon}+L)\leq L$. That is, $T_2$ is greater than $T_{\epsilon}$ since $T_2 \geq T_{\epsilon}+L-l(T_{\epsilon}+L) \geq T_{\epsilon}$. Therefore, we find an element of the sequence $\alpha_1(\cdot)$ at time $T_2\geq T_{\epsilon}$ such that  $\alpha_1(T_2)\leq l_1$.   
To that extent, we are interested in proving that $\alpha_1(T_2),\cdots,\alpha_1(T_2+L)$ are strictly less than $l_2$.\\
To do so, we define a sequence $u_n$ which will constitute an upper bound of the function $\alpha_1(T)$.   
\begin{mydef}\label{def:suite}
We define a sequence $u_n$ by induction:
\begin{equation}
\left\{
    \begin{array}{ll}
        u_0=l_1& if n=0\\
        u_{n+1}=p_1(l_2+\epsilon)+(1-p_1)u_n & if n > 0
    \end{array}
\right.
\end{equation}
\end{mydef}
Next, we prove that the $L$ first terms of this sequence are strictly less than $l_2$. We detail this in the following.
\begin{Lemma}
For $n\in [0,L]$, $u_n<l_2$
\end{Lemma}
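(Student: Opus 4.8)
The plan is to treat $(u_n)_{n\ge 0}$ as the orbit of an affine contraction and compare it with its fixed point. First I would note that the map $x\mapsto p_1(l_2+\epsilon)+(1-p_1)x$ has the unique fixed point $x^\star=l_2+\epsilon$ (solve $p_1 x^\star=p_1(l_2+\epsilon)$), so that $u_{n+1}-x^\star=(1-p_1)(u_n-x^\star)$ and, using $u_0=l_1$,
\begin{equation}
u_n=l_2+\epsilon+(l_1-l_2-\epsilon)(1-p_1)^n,\qquad n\ge 0 .
\end{equation}
The same identity, or equivalently the inequality $l_2-u_n\ge (l_2-l_1)(1-p_1)^n-\epsilon\bigl(1-(1-p_1)^n\bigr)$, can be obtained by a one-line induction on $n$ if one prefers to stay purely in the recursive setup of Definition \ref{def:suite}: the base case is an equality at $n=0$, and the step uses $l_2-u_{n+1}=(1-p_1)(l_2-u_n)-p_1\epsilon$.

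Next I would rewrite the target inequality using this closed form. For $n=0$ we have $u_0=l_1<l_2$, which holds because the whole argument takes place in the case $l_1<l_2$. For $1\le n\le L$ one has $1-(1-p_1)^n>0$, and $u_n<l_2$ is then equivalent to
\begin{equation}
\epsilon<(l_2-l_1)\,\frac{(1-p_1)^n}{1-(1-p_1)^n}.
\end{equation}
So it remains to show that the right-hand side, as a function of $n\in\{1,\dots,L\}$, is minimized at $n=L$ and that this minimal value is covered by the standing hypothesis on $\epsilon$.

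The monotonicity is elementary: since $0<1-p_1<1$, the sequence $(1-p_1)^n$ is strictly decreasing, and $t\mapsto t/(1-t)$ is increasing on $(0,1)$, hence $n\mapsto (1-p_1)^n/\bigl(1-(1-p_1)^n\bigr)$ is decreasing on $\{1,\dots,L\}$ with minimum $\frac{(1-p_1)^L}{1-(1-p_1)^L}$ at $n=L$. The hypothesis $\epsilon\le (l_2-l_1)\frac{(1-p_1)^L}{1-(1-p_1)^L}$ therefore delivers the required bound for every $n\in\{1,\dots,L\}$ (strictly for $n<L$), and combined with the trivial case $n=0$ this gives $u_n<l_2$ for all $n\in[0,L]$. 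There is no real obstacle in this lemma; the only point requiring attention is that the exponent appearing in the assumption on $\epsilon$ is exactly $L=l_{\max}$, the uniform bound on the number of entries of the vectors $\boldsymbol{A}_1(T)$ established earlier, which is precisely why the worst case $n=L$ is the one the hypothesis is tailored to.
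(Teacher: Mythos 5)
Your proof is correct and follows essentially the same route as the paper: both derive the closed form $u_n=l_2+\epsilon+(l_1-l_2-\epsilon)(1-p_1)^n$ and reduce the claim to the worst case $n=L$, where the standing bound on $\epsilon$ applies (the paper via monotonicity of $u_n$, you via monotonicity of the equivalent threshold on $\epsilon$). The only caveat, which the paper shares (it silently upgrades $\epsilon\le$ to $\epsilon<$ inside the lemma), is that when $\epsilon$ equals the bound exactly the case $n=L$ yields only $u_L\le l_2$; this is immaterial since the surrounding proposition only needs some sufficiently small $\epsilon$.
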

\begin{proof}
renewcommand{\qedsymbol}{$\blacksquare$}
In fact, the sequence $u_n$ satisfies for all $n$:
\begin{equation}u_n=\lambda(1-p_1)^n+(l_2+\epsilon)
\end{equation}
where $\lambda=-(\epsilon+l_2-l_1)$.\\
$u_n$ is clearly increasing with $n$, then for all $n \in [0,L]$:
\begin{equation}
u_n \leq u_L=\lambda(1-p_1)^L+(l_2+\epsilon)=\epsilon(1-(1-p_1)^L)+l_2-(l_2-l_1)(1-p_1)^L
\end{equation}\\
We have that:
\begin{equation}\epsilon < (l_2-l_1)(\frac{(1-p_1)^L}{1-(1-p)^L})\end{equation}
Given that $1-(1-p_1)^L \geq 0$, then:
\begin{equation}(1-(1-p_1)^L)\epsilon < (l_2-l_1)(1-p_1)^L\end{equation}
\begin{equation}(1-(1-p_1)^L)\epsilon+l_2-(l_2-l_1)(1-p_1)^L < l_2\end{equation}
Therefore, $u_L < l_2$.\\
\end{proof}

Based on the lemma above, we prove that for any element of the set $\{\alpha_1(T_2),\cdots,\alpha_1(T_2+L)\}$ must be less than $u_L$.\\ 
For that, we introduce a useful Lemma:
\begin{Lemma}\label{lem:relation_between_alpha}
If for $T \in [T_2,T_2+L-1]$, we have that:
\begin{equation}\alpha_1(T) \leq \alpha_1(T+1)\end{equation}
Then, we have that:
\begin{equation} \alpha_1(T+1) \leq p_1(l_2+\epsilon)+(1-p_1)\alpha_1(T)
\end{equation}  
\end{Lemma}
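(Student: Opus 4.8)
The plan is to read the claim off directly from Proposition \ref{prop:inequality_satisfied_by_first_element} together with the defining property of $T_\epsilon$. Under the hypothesis $\alpha_1(T) \leq \alpha_1(T+1)$, only the two cases of Proposition \ref{prop:inequality_satisfied_by_first_element} whose left inequality reads $\alpha_1(T) \leq \alpha_1(T+1)$ can occur, namely the case $\alpha_1(T) \leq \alpha_1(T+1) \leq \alpha_1(T-l(T))$ with $\alpha_1(T+1)-\alpha_1(T) \leq p_1(\alpha_1(T-l(T))-\alpha_1(T))$, or the case $\alpha_1(T) \leq \alpha_1(T+1) \leq \alpha_1(T-l(T)+1)$ with $\alpha_1(T+1)-\alpha_1(T) \leq p_1(\alpha_1(T-l(T)+1)-\alpha_1(T))$. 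Rearranging the accompanying inequality in either case gives $\alpha_1(T+1) \leq p_1\,\alpha_1(T-j_0) + (1-p_1)\,\alpha_1(T)$, where $j_0 = l(T)$ in the first case and $j_0 = l(T)-1$ in the second.

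Next I would observe that in both cases the term $\alpha_1(T-j_0)$ is one of the components of the vector $\boldsymbol{A}_1(T) = (\alpha_1(T),\alpha_1(T-1),\dots,\alpha_1(T-l(T)))$ — here one uses $l(T) \geq 1$, so that the index $T-l(T)+1$ lies in the range $[T-l(T),\,T]$ — hence $\alpha_1(T-j_0) \leq \max \boldsymbol{A}_1(T)$. Since $T \in [T_2, T_2+L-1]$ and $T_2 \geq T_\epsilon$ (established earlier in the proof of Proposition \ref{prop:max_alpha_less_l_2}), the defining property of $T_\epsilon$ gives $\max \boldsymbol{A}_1(T) < l_2 + \epsilon$. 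Using $p_1 \geq 0$ to enlarge the first summand and $1-p_1 \geq 0$ to retain the second, we obtain $\alpha_1(T+1) \leq p_1(l_2+\epsilon) + (1-p_1)\,\alpha_1(T)$, which is precisely the claim.

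I do not expect a genuine analytic obstacle here: the lemma is essentially a repackaging of one half of Proposition \ref{prop:inequality_satisfied_by_first_element}. The only points requiring attention are the bookkeeping ones — invoking exactly the two of the four cases of Proposition \ref{prop:inequality_satisfied_by_first_element} that are compatible with $\alpha_1(T) \leq \alpha_1(T+1)$ and discarding the other two, verifying that $T$ indeed lies in the time range $[T_\epsilon,\infty)$ on which $\max \boldsymbol{A}_1(\cdot) < l_2+\epsilon$ (which follows from $T \geq T_2 \geq T_\epsilon$), and checking that $\alpha_1(T-l(T)+1)$ really is a component of $\boldsymbol{A}_1(T)$.
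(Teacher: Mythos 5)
Your proposal is correct and follows essentially the same route as the paper: it isolates the two cases of Proposition \ref{prop:inequality_satisfied_by_first_element} compatible with $\alpha_1(T)\leq\alpha_1(T+1)$, notes that $\alpha_1(T-l(T))$ and $\alpha_1(T-l(T)+1)$ are components of $\boldsymbol{A}_1(T)$ bounded by $l_2+\epsilon$ because $T\geq T_2\geq T_{\epsilon}$, and rearranges to obtain $\alpha_1(T+1)\leq p_1(l_2+\epsilon)+(1-p_1)\alpha_1(T)$. No gaps.
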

\begin{proof}
\renewcommand{\qedsymbol}{$\blacksquare$}
Before starting the proof, we recall that, according to the first result of Proposition \ref{prop:inequality_satisfied_by_first_element}, the four possible inequalities satisfied by $\alpha_1(T)$, $\alpha_1(T+1)$, $\alpha_1(T-l(T))$, $\alpha_1(T-l(T)+1)$ are: 
\begin{equation}\alpha_1(T) \leq \alpha_1(T+1) \leq \alpha_1(T-l(T))
\end{equation}  
\begin{equation}\alpha_1(T-l(T)) \leq \alpha_1(T+1) \leq \alpha_1(T)
\end{equation}
\begin{equation}\alpha_1(T-l(T)+1) \leq \alpha_1(T+1) \leq \alpha_1(T)
\end{equation}
\begin{equation}\alpha_1(T) \leq \alpha_1(T+1) \leq \alpha_1(T-l(T)+1)
\end{equation} 
Therefore, the two cases for which $\alpha_1(T) \leq \alpha_1(T+1)$ are: 
\begin{itemize}
\item $\alpha_1(T) \leq \alpha_1(T+1) \leq \alpha_1(T-l(T))$.\\
\item $\alpha_1(T) \leq \alpha_1(T+1) \leq \alpha_1(T-l(T)+1)$.\\
\end{itemize}
Hence, according to the results of Proposition \ref{prop:inequality_satisfied_by_first_element}, the inequalities satisfied by $\alpha_1(T+1)-\alpha_1(T)$ are:\\
If $\alpha_1(T) \leq \alpha_1(T+1) \leq \alpha_1(T-l(T))$, then:
\begin{equation}\alpha_1(T+1)-\alpha_1(T) \leq p_1(\alpha_1(T-l(T))-\alpha_1(T))\end{equation} 
If $\alpha_1(T) \leq \alpha_1(T+1) \leq \alpha_1(T-l(T)+1)$, then:
\begin{equation}\alpha_1(T+1)-\alpha_1(T) \leq p_1(\alpha_1(T-l(T)+1)-\alpha_1(T))\end{equation}
Since, by assumption of the Lemma, $T \geq T_2 \geq T_{\epsilon}$, then $\max A_1(T) \leq l_2+\epsilon$. As a consequence, $\alpha_1(T-l(T)+1)$ and $\alpha_1(T-l(T))$ which are elements of the vector $A_1(T)$, are less than $l_2+\epsilon$.\\
Hence, for T $\in [T_2,T_2+L-1]$:
\begin{equation}
\alpha_1(T+1)-\alpha_1(T) \leq p_1(l_2+\epsilon-\alpha_1(T))
\end{equation}
Therefore:
\begin{equation} \alpha_1(T+1) \leq p_1(l_2+\epsilon)+(1-p_1)\alpha_1(T)
\end{equation}  
\end{proof}

Now we should prove that for all possible sequences of $\alpha_1$ in $[T_2,T_2+L]$, their values can not exceed $\lambda(1-p_1)^L+(l_2+\epsilon_2)=u_L$.
\begin{Lemma}\label{lem:upper_bound_alpha}
For all sequences of $\alpha_1$ when $T \in [T_2,T_2+L]$, $\alpha_1(T) \leq u_{T-T_2}$
\end{Lemma}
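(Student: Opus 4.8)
The plan is to prove the inequality by induction on $n := T - T_2$, ranging over $\{0,1,\ldots,L\}$, using Lemma~\ref{lem:relation_between_alpha} in the inductive step and the monotonicity of $(u_n)$ already noted in the proof of the previous lemma (recall the closed form $u_n = \lambda(1-p_1)^n + (l_2+\epsilon)$ with $\lambda = -(\epsilon+l_2-l_1) < 0$, so $u_n$ is non-decreasing in $n$). The base case $n=0$ holds by construction: $T_2$ was selected precisely so that $\alpha_1(T_2) \le l_1 = u_0$.

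For the inductive step, fix $0 \le n \le L-1$ and assume $\alpha_1(T_2+n) \le u_n$. I would split into two cases according to the sign of $\alpha_1(T_2+n+1) - \alpha_1(T_2+n)$. If $\alpha_1(T_2+n+1) \le \alpha_1(T_2+n)$, then the induction hypothesis together with the monotonicity of $(u_n)$ gives $\alpha_1(T_2+n+1) \le \alpha_1(T_2+n) \le u_n \le u_{n+1}$. If instead $\alpha_1(T_2+n) \le \alpha_1(T_2+n+1)$, then, since $T_2+n \in [T_2, T_2+L-1]$, Lemma~\ref{lem:relation_between_alpha} is applicable at $T = T_2+n$ and yields
\[
\alpha_1(T_2+n+1) \le p_1(l_2+\epsilon) + (1-p_1)\alpha_1(T_2+n) \le p_1(l_2+\epsilon) + (1-p_1)u_n = u_{n+1},
\]
where the middle inequality uses $1-p_1 \ge 0$ with the induction hypothesis, and the last equality is the recursion of Definition~\ref{def:suite}. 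This closes the induction, establishing $\alpha_1(T) \le u_{T-T_2}$ for every $T \in [T_2, T_2+L]$.

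The argument is essentially mechanical; the only point requiring care is the bookkeeping that keeps $T_2+n$ inside the window $[T_2, T_2+L-1]$ so that Lemma~\ref{lem:relation_between_alpha} is legitimately invoked (hence the step is carried out only up to $n=L-1$), and the observation that the ``increasing'' branch of the case split is exactly the hypothesis $\alpha_1(T) \le \alpha_1(T+1)$ that lemma needs. Once Lemma~\ref{lem:upper_bound_alpha} is in place, the conclusion of Proposition~\ref{prop:max_alpha_less_l_2} follows quickly: combining it with the bound $u_n < l_2$ for $n \in [0,L]$ gives $\alpha_1(T) < l_2$ for all $T \in [T_2, T_2+L]$; taking $T_d := T_2 + L$ and using $l(T_d) \le L$ from Proposition~\ref{prop:unique_expression_for_all_T}, every entry of $\boldsymbol{A}_1(T_d)$ is of the form $\alpha_1(T_2+m)$ with $m \in [0,L]$ and is therefore strictly less than $l_2$, which is the desired statement.
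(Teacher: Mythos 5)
Your proof is correct and is essentially identical to the paper's: the same induction on $T-T_2$ with base case $\alpha_1(T_2)\le l_1=u_0$, a case split on whether $\alpha_1(T+1)\le\alpha_1(T)$ (handled by monotonicity of $u_n$) or $\alpha_1(T)\le\alpha_1(T+1)$ (handled by Lemma~\ref{lem:relation_between_alpha} together with the recursion defining $u_{n+1}$). Your extra remark about keeping $T_2+n$ in $[T_2,T_2+L-1]$ so that Lemma~\ref{lem:relation_between_alpha} applies is a minor bookkeeping point that the paper leaves implicit.
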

\begin{proof}
\renewcommand{\qedsymbol}{$\blacksquare$}
We prove this result by induction.\\
For $T=T_2$, we have that:
\begin{equation}\alpha_1(T_2)\leq l_1=u_0\end{equation}
We suppose that at time $T$, $\alpha_1(T) \leq u_{T-T_2}$, then at time $T+1$:\\
If $\alpha_1(T+1) \leq \alpha_1(T)$:\\
Then as $u_{T-T_2}$ is increasing in $T$:
\begin{equation}
\alpha_1(T+1) \leq u_{T-T_2} \leq u_{T-T_2+1}
\end{equation}
If $\alpha_1(T+1) \geq \alpha_1(T)$:\\
Then, according to Lemma \ref{lem:relation_between_alpha}:
\begin{align}
\alpha_1(T+1)& \leq p_1(l_2+\epsilon)+(1-p_1)\alpha_1(T)\\
&\leq p_1(l_2+\epsilon)+(1-p_1)u_{T-T_2}\\
&=u_{T-T_2+1}
\end{align}

Therefore, $\alpha_1(T+1) \leq u_{T-T_2+1}$.\\
Hence, we have proved by induction that for all $T \in [T_2,T_2+L]$, $\alpha_1(T) \leq u_{T-T_2}$
\end{proof}

As $u_{T-T_2}$ is less than $u_L$ for $T \in [T_2,T_2+L]$, then according to Lemma \ref{lem:upper_bound_alpha}, the elements $\alpha_1(T_2+1),\cdots,\alpha_1(T_2+L)$ are less than $u_L < l_2$.\\
Thus, we have found $T_2 \geq T_{\epsilon}$ such that $\alpha_1(T_2),\alpha_1(T_2+1),\cdots,\alpha_1(T_2+l_{\max})$ are strictly less than $l_2$. We denote $T_2+l_{\max}$ by $T_d$ and we verify that $\max A_1(T_d) < l_2$. Indeed, we now that $T_d-l(T_d) \geq T_d-l_{\max}=T_2$, then the elements of the vector $A_1(T_d)$ are included in the set of elements $\{\alpha_1(T_2),\alpha_1(T_2+1),\cdots,\alpha_1(T_2+l_{\max})\}$. That is $\max A_1(T_d) < l_2$.\\ 
Hence, we have found $T_d \geq T_{\epsilon}$, such that $\max A_1(T_d) < l_2$.

\section{Proof of Proposition \ref{prop:z_convergence}}\label{app:prop:z_convergence}
In this proof, we show that for each state $i$ in class $k$, $z_i^k(t)$ converges. To that end, we start first by specifying the eventual limit of $z_i^k(t)$ for each $i$. To do so, we decompose $1-\alpha$ as follows:
\begin{equation}
l(p_1\alpha_1^*+p_2\alpha_2^*)+\gamma p_2 \alpha_2^*+\beta p_1 \alpha_1^*=1-\alpha
\end{equation}
where $l$ is the biggest integer such that: $l(p_1\alpha_1^*+p_2\alpha_2^*) < 1-\alpha$, and $0<\gamma\leq 1$ and $\beta=0$; or $\gamma=1$  and $0< \beta \leq 1$.
Then, we proceed with these following steps:
\begin{itemize}
    \item We prove by induction that for all states $1 \leq i \leq l+1$, $z_i^k(t)$ converges to $p_k\alpha_k^*$.
    \item Based on the theoretical findings of the first step, we prove that $z_{l+2}^1(t)$ converges to $(\beta +(1-p_1)(1-\beta))p_1 \alpha_1^*$ and $z_{l+2}^2(t)$ converges to $(\gamma+(1-p_2)(1-\gamma))p_2 \alpha_2^*$.
    \item Finally, we show that for all states $ i > l+2$, $z_i^1(t)$ converges to $(1-p_1)^{i-l-2}(\beta +(1-p_1)(1-\beta))p_1 \alpha_1^*$ and $z_i^2(t)$ converges to $(1-p_2)^{i-l-2}(\gamma+(1-p_2)(1-\gamma))p_2 \alpha_2^*$ 
\end{itemize}
\begin{enumerate}
    \item For all states $1 \leq i \leq l+1$, $z_i^k(t)\rightarrow p_k\alpha_k^*$:\\
We prove this result by induction
\begin{itemize}
\item For $i=1$, we have that $z^k_1(t)=p_k \alpha_k(t-1)$. Therefore, $z^k_1(t)$ converge to $p_k\alpha_k^*$ as $\alpha_k(t)$ converges to $\alpha_k^*$.\\
\item We consider that for a certain $j \leq l$, for each $1\leq i \leq j$, $z_i^k(t)$ converges to $p_k \alpha_k^*$ and we show that $z_{j+1}^k(t)$ converges also to $p_k \alpha_k^*$.\\
Given that $j \leq l$:
$$j(p_1 \alpha_1^*+p_2 \alpha_2^*) <1-\alpha$$
We consider $0 < \epsilon \leq 1-\alpha-j(p_1 \alpha_1^*+p_2 \alpha_2^*)$. Providing that $z_i^k(t)$ converges to $p_k \alpha_k^*$ for all $1\leq i \leq j$, that means there exists $t_j$ such that for $t \geq t_j$, for $1 \leq i \leq j$:
$$|z_i^k(t)-p_k\alpha_k^*| < \frac{\epsilon}{2j}$$
Hence: $$\sum_{i=1}^j |z_i^1(t)-p_1\alpha_1^*|+\sum_{i=1}^j |z_i^2(t)-p_2\alpha_2^*| < \epsilon$$
That is,$$\sum_{i=1}^j z_i^1(t)+\sum_{i=1}^j z_i^2(t) < \epsilon+j(p_1 \alpha_1^*+p_2 \alpha_2^*)$$
As consequence, for all $t \geq t_j$, we have that:
$$\sum_{i=1}^j z_i^1(t)+\sum_{i=1}^j z_i^2(t) < 1-\alpha$$
Thus, for all $t\geq t_j$, the action prescribed to the users' proportion $z^k_j(t)$ is the passive action \footnote{Knowing that the order of the proportions of the users according to the Whittle's index value alternates between the two classes in the set $[1,l_{\max}+1]$ as was established in \ref{prop:interval_alternation_under_assump}, then for all integer $b \in [1,l_{\max}]$, the set $\{z_i^k: k=1,2; 1\leq i\leq b\}$ is the set of users with the lowest Whittle's index value. Therefore, $\sum_{i=1}^b z_i^1(t)+\sum_{i=1}^b z_i^2(t)<1-\alpha$ implies that the actions prescribed to the users belonging to the set $\{z_i^k: k=1,2; 1\leq i\leq b\}$ is the passive action. By definition of $l$, $l<\frac{1-\alpha}{p_2\alpha}$, then, $l\leq l_{\max}$ (see Lemma \ref{lem:upper_bound_threshold_max}). Hence, the above reasoning can be applied as well when $b=l$.}. Then, for all $t \geq t_j$:
$$z_{j+1}^k(t+1)=z_j^k(t)$$
Therefore, $z_{j+1}^k(t)$ converges to $p_k\alpha_k^*$.
\end{itemize}
Consequently, we prove by induction that for all $1\leq i \leq l+1$, $z_i^k(t)$ converges to $p_k\alpha_k^*$.

\item $z_{l+2}^1(t)\rightarrow(\beta+(1-p_1)(1-\beta))p_1 \alpha_1^*$ and $z_{l+2}^2(t)\rightarrow(\gamma +(1-p_2)(1-\gamma))p_2 \alpha_2^*$.\\
To avoid redundancy , we will be limited to the first case when $0<\gamma\leq 1$ and $\beta=0$, since the proof's steps for both cases are exactly the same.
We have that: 
$$l(p_1\alpha_1^*+p_2\alpha_2^*)+\gamma p_2\alpha_2^*=1-\alpha$$
As $\sum_{i=1}^l z_i^1(t)+\sum_{i=1}^lz_i^2(t)$ converges to $l(p_1\alpha_1^*+p_2\alpha_2^*)$ which is strictly less than $1-\alpha$, then there exists $t_l$ such that for all $t \geq t_l$, we have that: 
$$\sum_{i=1}^l z_i^1(t)+\sum_{i=1}^lz_i^2(t)<1-\alpha$$
As $\sum_{i=1}^{l+1} z_i^1(t)+\sum_{i=1}^{l+1} z_i^2(t)$ converges to $(l+1)(p_1\alpha_1^*+p_2\alpha_2^*)$ which is strictly greater than $1-\alpha$, then there exists $t_{l+1}$ such that for all $t \geq t_{l+1}$, we have that: 
$$\sum_{i=1}^{l+1} z_i^1(t)+\sum_{i=1}^{l+1}z_i^2(t) >1-\alpha$$
For $t \geq \max \{t_l,t_{l+1}\}$, we have that:
$$\sum_{i=1}^{l} z_i^1(t)+\sum_{i=1}^{l}z_i^2(t) < 1-\alpha < \sum_{i=1}^{l+1} z_i^1(t)+\sum_{i=1}^{l+1}z_i^2(t)$$
Denoting $\gamma(t)$ and $\beta(t)$ the users' proportion of $z_{l+1}^2(t)$ and $z_{l+1}^1(t)$ respectively which are not scheduled, therefore, the relation that links $z_{l+2}^1(t+1)$ and $z_{l+2}^2(t+1)$ to $z_{l+1}^1(t)$ and $z_{l+1}^1(t)$ when $t \geq \max \{t_l,t_{l+1}\}$: 
$$z_{l+2}^1(t+1)=\beta(t) z_{l+1}^1(t)+(1-p_1) (1-\beta(t))z_{l+1}^1(t)$$ 
$$z_{l+2}^2(t+1)=\gamma(t) z_{l+1}^2(t)+(1-p_2) (1-\gamma(t))z_{l+1}^2(t)$$
with $0<\gamma(t)\leq 1$ and $\beta(t)=0$; or $\gamma(t)=1$  and $0< \beta(t) \leq 1$.
To that extent, we show that $\beta(t)$ tends to $\beta=0$ and $\gamma(t)$ tends to $\gamma$. For that purpose, we give the following equation which is always satisfied when $t \geq \max \{t_l,t_{l+1}\}$:  
\begin{equation}\label{eq:1-alpha_expression}
\sum_{i=1}^{l} z_i^1(t)+\sum_{i=1}^{l}z_i^2(t)+\gamma(t) z_{l+1}^2(t)+\beta(t) z_{l+1}^1(t)= 1-\alpha
\end{equation}
Tending $t$ to $+ \infty$ in the equation \ref{eq:1-alpha_expression}, we obtain: 
$$\underset{t \rightarrow +\infty}{\lim} [\gamma(t) z_{l+1}^2(t)+\beta(t) z_{l+1}^1(t)]= \gamma p_2 \alpha_2^*$$
We consider the set $\{t: \beta(t) \neq 0\}$. If this set is infinite, then there exists a strictly increasing function $n(.)$ from $\mathbf{N}$ to $\{t \in \mathbf{N} \beta(t) \neq 0\}$, such that $\beta(n(t))$ is a sub-sequence of $\beta(t)$.
As $\beta(n(t)) \neq 0$, then $\gamma(n(t))=1$. Therefore, we get:
$$\underset{t \rightarrow +\infty}{\lim} [ z_{l+1}^2(n(t))+\beta(n(t)) z_{l+1}^1(n(t))]= \gamma p_2 \alpha_2^*$$
Since $z_{l+1}^2(n(t))$ converges to $p_2\alpha_2^*$, then:
$$\underset{t \rightarrow +\infty}{\lim} [\beta(n(t)) z_{l+1}^1(n(t))]= (\gamma-1) p_2 \alpha_2^*$$
$(\gamma-1) p_2 \alpha_2^*$ is less than $0$, and $\beta(n(t)) z_{l+1}^1(n(t))$ is greater than $0$ for all $t$. Thus:
$$\underset{t \rightarrow +\infty}{\lim} [\beta(n(t)) z_{l+1}^1(n(t))]= (\gamma-1) p_2 \alpha_2^*=0$$
This implies that $\gamma=1=\gamma(n(t))$, and $\underset{t \rightarrow +\infty}{\lim} \beta(n(t))=0$ because $z_{l+1}^1(n(t))$ converges to $p_1\alpha_1^* \neq 0$.
Hence $\underset{t \rightarrow +\infty}{\lim} \beta(t)=0=\beta$, i.e. $\underset{t \rightarrow +\infty}{\lim} \gamma(t)=\gamma=1$.\\
If $\{t: \beta(t) \neq 0\}$ is finite, then there exists $t_e$ such that for all $t \geq t_e$, $\beta(t)=0$. Therefore, for all $t \geq t_e$, we have that: 
$$\underset{t \rightarrow +\infty}{\lim} [\gamma(t) z_{l+1}^2(t)]= \gamma p_2 \alpha_2^*$$
That means $\underset{t \rightarrow +\infty}{\lim} \beta(t)=0$, and $\underset{t \rightarrow +\infty}{\lim} \gamma(t)=\gamma$.
Hence, in both cases, $\beta(t)\rightarrow\beta=0$ and $\gamma(t)\rightarrow\gamma$.\\
Consequently, combining the last result with the one derived in the first step, we conclude that $z_{l+2}^1(t)$ converges to $(\beta+(1-p_1)(1-\beta))p_1 \alpha_1^*$ and $z_{l+2}^2(t)$ converges to $(\gamma +(1-p_2)(1-\gamma))p_2 \alpha_2^*$. 
Similar analysis can be applied to come with the aforementioned result when $\gamma(t)=1$ and $0<\beta(t) \leq 1$.
\item For $i>l+2$, $z_i^1(t)\rightarrow(1-p_1)^{i-l-2}(\beta+(1-p_1)(1-\beta))p_1 \alpha_1^*$ and $z_i^2(t)\rightarrow(1-p_2)^{i-l-2}(\gamma +(1-p_2)(1-\gamma))p_2 \alpha_2^*$:\\
For $t \geq \max\{t_l,t_{l+1}\}$, we are sure that the action prescribed to $z^k_i(t)$ for all $i \geq l+2$ is the active action. As consequence, $z^k_{i+1}(t+1)$ satisfies:
$$z^k_{i+1}(t+1)=(1-p_k)z^k_i(t)$$ 
Therefore, as $z_{l+2}^1(t)$ converges to $(\beta+(1-p_1)(1-\beta))p_1 \alpha_1^*$ and $z_{l+2}^2(t)$ converges to $(\gamma +(1-p_2)(1-\gamma))p_2 \alpha_2^*$, one can easily establish by induction that  $z_i^1(t)$ converges to $(1-p_1)^{i-l-2}(\beta+(1-p_1)(1-\beta))p_1 \alpha_1^*$ and $z_i^2(t)$ converges to $(1-p_2)^{i-l-2}(\gamma+(1-p_2)(1-\gamma))p_2 \alpha_2^*$ for all $i > l+2$.
\end{enumerate}
We conclude that for all states $i$ and $k=1,2$, $z^k_i(t)$ converges. On the other hands, according to Proposition \ref{prop:fixed_point}, the only possible limit of $\boldsymbol{z}(t)$ is $\boldsymbol{z}^*$. As consequence, for each $k$ and $i$, $z_i^k(t)$ converges to $z_i^{k,*}$.

**here**

\section{Proof of Proposition \ref{prop:kurth_theorem_age}}\label{app:prop:kurth_theorem_age}
For a given $\boldsymbol{z}$, let $m_1(\boldsymbol{z})$ and $m_2(\boldsymbol{z})$ be the highest states of the class 1 and the class 2 respectively and $l_1(\boldsymbol{z})$ and $l_2(\boldsymbol{z})$ be the thresholds of class 1 and 2 respectively at time $t$ when $\boldsymbol{Z}^N(t)=\boldsymbol{z}$.  Given that, we introduce the following lemma. 
\begin{Lemma}\label{lem:prob_bound_last_state}
For any $\mu$, there exists positive constant $C(\boldsymbol{z})$ such that:
\begin{equation}
P( ||\boldsymbol{Z}^N(t+1)-\boldsymbol{z}'|| \geq \mu | \boldsymbol{Z}^N(t)=\boldsymbol{z}) \leq \frac{C(\boldsymbol{z})}{N}
\end{equation}
where $C(\boldsymbol{z})$ is independent of $N$ and $\boldsymbol{z}'=Q(\boldsymbol{z})\boldsymbol{z}=\mathbf{E}(\boldsymbol{Z}^N(t+1)|\boldsymbol{Z}^N(t)=\boldsymbol{z})$
\end{Lemma}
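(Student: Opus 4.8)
The plan is to exploit that, conditionally on $\boldsymbol{Z}^N(t)=\boldsymbol{z}$, the only randomness driving $\boldsymbol{Z}^N(t+1)$ comes from the channel states $\{c_i^k(t)\}$ of the $\alpha N$ scheduled users, which are mutually independent and equal to $1$ with probability $p_k$ for a class-$k$ user. First I would separate the deterministic and random parts of the update. Given $\boldsymbol{z}$ and the fixed tie-breaking rule, let $n_i^k\le N z_i^k$ be the (deterministic) number of scheduled class-$k$ users at state $i$, so that $\sum_i n_i^k$ equals $N$ times the class-$k$ scheduled proportion, and let $X_i^k\sim\mathrm{Binomial}(n_i^k,p_k)$ count the successes among those users; the $X_i^k$ are mutually independent across $(i,k)$. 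Tracking where users land in one slot gives $N Z_1^{k,N}(t+1)=\sum_i X_i^k$ and $N Z_i^{k,N}(t+1)=N z_{i-1}^k-X_{i-1}^k$ for $i\ge 2$, so that, writing $Y_i^k:=Z_i^{k,N}(t+1)-\mathbb{E}[Z_i^{k,N}(t+1)\mid\boldsymbol{z}]$ for the centered component of $\boldsymbol{Z}^N(t+1)-\boldsymbol{z}'$, one has $Y_1^k=\tfrac1N\sum_i(X_i^k-p_kn_i^k)$ and $Y_i^k=-\tfrac1N(X_{i-1}^k-p_kn_{i-1}^k)$ for $i\ge 2$: each is a centered binomial divided by $N$.

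The two quantitative ingredients are a variance bound and a finiteness observation. Since $\mathrm{Var}(X_i^k)=n_i^kp_k(1-p_k)\le N/4$ and $\mathrm{Var}\bigl(\sum_i X_i^k\bigr)\le N/4$ as well, every component obeys $\mathrm{Var}(Y_i^k\mid\boldsymbol{z})\le 1/(4N)$. Moreover, in one slot no age in class $k$ can exceed $m_k(\boldsymbol{z})+1$, so $\|\boldsymbol{Z}^N(t+1)-\boldsymbol{z}'\|=\sum_{k=1,2}\sum_{i=1}^{m_k(\boldsymbol{z})+1} i\,|Y_i^k|$ is a finite sum of $M:=m_1(\boldsymbol{z})+m_2(\boldsymbol{z})+2$ terms, with $M$ and all the weights $i$ appearing bounded in terms of $\boldsymbol{z}$ only.

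I would then conclude by a Chebyshev-plus-union-bound argument. Put $\delta_{k,i}:=\mu/(Mi)$, so that $\sum_{k,i} i\,\delta_{k,i}=\mu$; hence $\{\|\boldsymbol{Z}^N(t+1)-\boldsymbol{z}'\|\ge\mu\}$ forces $|Y_i^k|\ge\delta_{k,i}$ for at least one pair $(k,i)$. Chebyshev's inequality gives $P(|Y_i^k|\ge\delta_{k,i}\mid\boldsymbol{z})\le 1/(4N\delta_{k,i}^2)=M^2 i^2/(4N\mu^2)$, and a union bound over the finitely many pairs yields the claim with $C(\boldsymbol{z})=\frac{M^2}{4\mu^2}\sum_{k=1,2}\sum_{i=1}^{m_k(\boldsymbol{z})+1}i^2$, which is manifestly independent of $N$.

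The substantive point, rather than any delicate estimate, is the finiteness of the support of $\boldsymbol{Z}^N(t+1)$ given $\boldsymbol{Z}^N(t)=\boldsymbol{z}$: this is what lets the unbounded weight $i$ in $\|\cdot\|$ be controlled by a constant depending on $\boldsymbol{z}$ alone, and it is also why the constant in the statement is allowed to depend on $\boldsymbol{z}$. One should also note that, once the (arbitrary but fixed) tie-breaking rule is prescribed, the $n_i^k$ and hence $\boldsymbol{z}'=\mathbb{E}(\boldsymbol{Z}^N(t+1)\mid\boldsymbol{Z}^N(t)=\boldsymbol{z})$ are deterministic functions of $\boldsymbol{z}$, so the $X_i^k$ really are independent binomials and the computation of the conditional mean above is exact; this is consistent with Lemma~\ref{lem:z_evolution}. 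A Hoeffding/Chernoff bound would replace $C(\boldsymbol{z})/N$ by $C(\boldsymbol{z})e^{-cN}$, but only the $1/N$ rate is needed for the iteration that proves Proposition~\ref{prop:kurth_theorem_age}.
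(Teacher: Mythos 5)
Your proposal is correct and follows essentially the same route as the paper: a component-wise Chebyshev bound combined with a union bound, with the key observation that after one slot the support is confined to states up to $m_k(\boldsymbol{z})+1$, so the weighted norm involves only finitely many terms and the resulting constant depends on $\boldsymbol{z}$ (through $m_1,m_2$) but not on $N$. Your bookkeeping is marginally cleaner (uniform $n_i^kp_k(1-p_k)\leq N/4$ variance bound and per-state binomials, rather than the paper's explicit case split among deterministic states, state $1$, the threshold state with fraction $\beta_k(\boldsymbol{z})$, and states above it), but the argument is the same in substance.
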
  
\begin{proof}
\renewcommand{\qedsymbol}{$\blacksquare$}
By definition of $m_1(\boldsymbol{z})$ and $m_2(\boldsymbol{z})$, we have that $\boldsymbol{z}=(z^1_1,\cdots,z_{m_1(\boldsymbol{z})}^1,z_1^2,\cdots,z^2_{m_2(\boldsymbol{z})})$. On can easily show that $m_1(\boldsymbol{z}')=m_1(\boldsymbol{z})+1$ and $m_2(\boldsymbol{z}')=m_2(\boldsymbol{z})+1$ since the users' proportions at states $m_1(\boldsymbol{z})$ and $m_2(\boldsymbol{z})$ in class 1 and class 2 will become at states $m_1(\boldsymbol{z})+1$ and $m_2(\boldsymbol{z})+1$ at the next time slot respectively.
To prove this lemma, we use the Chebychev inequality presented as follows:
\begin{equation}
P(|X-\mathbb{E}(X)| > \mu) \leq \frac{Var(X)}{\mu^2}
\end{equation}
for any $\mu>0$ and random variable $X$.\\
As $\boldsymbol{z}'=\mathbf{E}(\boldsymbol{Z}^N(t+1)|\boldsymbol{Z}^N(t)=\boldsymbol{z})$, we can apply the Chebychev inequality. However we need to find the distribution of $\boldsymbol{Z}^N(t+1)$ knowing $\boldsymbol{Z}^N(t)=\boldsymbol{z}$ in order to derive the expression of $Var(\boldsymbol{Z}^N(t+1)|\boldsymbol{Z}^N(t)=\boldsymbol{z})$.
It is more simple to study the parameters of one dimensional random variable than multi-dimensional random variable. Hence, instead of investigating $\boldsymbol{Z}^N(t+1)$, we look into $Z_i^{N,k}$. In this regard, we have that:
\begin{equation}
\{\boldsymbol{Z}^N(t+1): \ ||\boldsymbol{Z}^N(t+1)-\boldsymbol{z}'|| \geq \mu \} \subset \underset{k,i}{\cup} \{\boldsymbol{Z}^N(t+1): \ ||Z_i^{N,k}(t+1)-z_i^{'k}||i > \frac{\mu}{m_1(\boldsymbol{z}')+m_2(\boldsymbol{z}')}\}
\end{equation}  
Therefore:
\begin{align}
P( ||\boldsymbol{Z}^N(t+1)-\boldsymbol{z}'|| \geq \mu | \boldsymbol{Z}^N(t)=\boldsymbol{z})& \leq P( \underset{k,i}{\cup} \{||Z_i^{N,k}(t+1)-z_i^{'k}||i > \frac{\mu}{m_1(\boldsymbol{z}')+m_2(\boldsymbol{z}')} | \boldsymbol{Z}^N(t)=\boldsymbol{z}\})\\
&\leq  \sum_{k,i} P( \{||Z_i^{N,k}(t+1)-z_i^{'k}||i > \frac{\mu}{m_1(\boldsymbol{z}')+m_2(\boldsymbol{z}')} | \boldsymbol{Z}^N(t)=\boldsymbol{z}\})
\end{align}
Now, we look for the distribution of $Z_i^{N,k}(t+1)$ knowing $\boldsymbol{Z}^N(t)=\boldsymbol{z}$.\\ 
For $2 \leq i \leq l_k(\boldsymbol{z})$, as all the users at state $i-1$ less strictly than $l_k(\boldsymbol{z})$ will transit to the state $i$ at the next time slot, then we have $Z_i^{N,k}(t+1)=z_{i-1}^{k}=z_i^{'k}$. This implies that:
\begin{equation}
 P( \{||Z_i^{N,k}(t+1)-z_i^{'k}||i > \frac{\mu}{m_1(\boldsymbol{z}')+m_2(\boldsymbol{z}')} | \boldsymbol{Z}^N(t)=\boldsymbol{z}\})=0
\end{equation}
For $i=1$, defining $\alpha_1(\boldsymbol{z})$ and $\alpha_2(\boldsymbol{z})$ as the proportions of the scheduled users in class 1 an class 2 respectively when $\boldsymbol{Z}^N(t)=\boldsymbol{z}$, then $N Z_1^{N,k}(t+1) | \boldsymbol{Z}^N(t)=\boldsymbol{z}$ follows a binomial distribution with parameters $p_k$ and $\alpha_k(\boldsymbol{z}) N$. Therefore, $Var(N Z_1^{N,k}(t+1) | \boldsymbol{Z}^N(t)=\boldsymbol{z})= p_k (1-p_k) \alpha_k(\boldsymbol{z}) N$, which means that $Var(Z_1^{N,k}(t+1) | \boldsymbol{Z}^N(t)=\boldsymbol{z})= \frac{p_k (1-p_k) \alpha_k(\boldsymbol{z})}{N}$.
As a results, according to Chebychev inequality, we have that:
\begin{equation}
 P( \{||Z_1^{N,k}(t+1)-z_1^{'k}|| > \frac{\mu}{m_1(\boldsymbol{z}')+m_2(\boldsymbol{z}')} | \boldsymbol{Z}^N(t)=\boldsymbol{z}\})\leq \frac{p_k (1-p_k) \alpha_k(\boldsymbol{z})}{N \mu^2}(m_1(\boldsymbol{z}')+m_2(\boldsymbol{z}'))^2 
\end{equation}
For $i \geq l_k(\boldsymbol{z})+2$, $N Z_i^{N,k}(t+1) | \boldsymbol{Z}^N(t)=\boldsymbol{z}$ follows a binomial distribution with parameters $1-p_k$ and $z_{i-1}^{k} N$. Hence, $Var(Z_i^{N,k}(t+1) | \boldsymbol{Z}^N(t)=\boldsymbol{z})=  \frac{p_k (1-p_k) z_{i-1}^{k}}{N}$.
Thus:  
\begin{equation}
 P( \{||Z_i^{N,k}(t+1)-z_i^{'k}|| > \frac{\mu}{i(m_1(\boldsymbol{z}')+m_2(\boldsymbol{z}'))} | \boldsymbol{Z}^N(t)=\boldsymbol{z}\})\leq \frac{p_k (1-p_k) z_{i-1}^{k}}{N \mu^2}(m_1(\boldsymbol{z}')+m_2(\boldsymbol{z}'))^2 i^2 
\end{equation}
Denoting $\beta_k(\boldsymbol{z})$ the users' proportion of $z_{l_k(\boldsymbol{z})}^k$ that will not be transmitted, then for $i=l_k(\boldsymbol{z})+1$, $N Z_i^{N,k}(t+1) | (\boldsymbol{Z}^N(t)=\boldsymbol{z})= \beta_k(\boldsymbol{z}) N z_{i-1}^k+X$, where $X$ follows a binomial distribution with parameters $1-p_k$ and $(1-\beta_k(\boldsymbol{z}))z_{i-1}^{k} N$, then:
\begin{equation}
P( \{||Z_i^{N,k}(t+1)-z_i^{'k}|| > \frac{\mu}{i(m_1(\boldsymbol{z}')+m_2(\boldsymbol{z}'))} | \boldsymbol{Z}^N(t)=\boldsymbol{z}\})\leq \frac{p_k (1-p_k)(1-\beta_k(\boldsymbol{z})) z_{i-1}^{k}}{N \mu^2}(m_1(\boldsymbol{z}')+m_2(\boldsymbol{z}'))^2 i^2
\end{equation}
We end up with:
$$P( ||\boldsymbol{Z}^N(t+1)-\boldsymbol{z}'|| \geq \mu | \boldsymbol{Z}^N(t)=\boldsymbol{z}) \leq $$
$$(m_1(\boldsymbol{z}')+m_2(\boldsymbol{z}'))^2 .[ \frac{p_1(1-p_1) \alpha_1(\boldsymbol{z})}{N \mu^2}+\frac{p_2(1-p_2) \alpha_2(\boldsymbol{z})}{N \mu^2}+ \sum_{i \geq l_1(\boldsymbol{z})+2} \frac{p_1(1-p_1)i^2 z^1_{i-1}}{N \mu^2}+$$
$$\sum_{i \geq l_2(\boldsymbol{z})+2} \frac{p_2(1-p_2)i^2 z^2_{i-1}}{N \mu^2}+ \frac{p_1(1-p_1) (l_1(\boldsymbol{z})+1)^2 (1-\beta_1(\boldsymbol{z})) z^1_{l_1(\boldsymbol{z})}}{N \mu^2}+ \frac{p_1(1-p_2) (l_2(\boldsymbol{z})+1)^2 (1-\beta_2(\boldsymbol{z})) z^2_{l_2(\boldsymbol{z})}}{N \mu^2}]$$
Knowing that $\alpha_k(\boldsymbol{z}) \leq 1$, $\sum_{i \geq l_k(\boldsymbol{z})} z^k_i \leq 1$, $1-\beta_k(\boldsymbol{z}) \leq 1$, and for all state $i$ in the vector $\boldsymbol{z}'$, $i \leq m_1(\boldsymbol{z}')+m_2(\boldsymbol{z}')$ then:
$$P( ||\boldsymbol{Z}^N(t+1)-\boldsymbol{z}'|| \geq \mu | \boldsymbol{Z}^N(t)=\boldsymbol{z}) \leq  \frac{(m_1(\boldsymbol{z}')+m_2(\boldsymbol{z}'))^4}{\mu^2 N} [2 p_1(1-p_1)+2 p_2(1-p_2)]$$
Hence, denoting by $C(\boldsymbol{z})$, $\frac{(m_1(\boldsymbol{z}')+m_2(\boldsymbol{z}'))^4}{\mu^2} [2 p_1(1-p_1)+2 p_2(1-p_2)]=\frac{(m_1(\boldsymbol{z})+1+m_2(\boldsymbol{z})+1)^4}{\mu^2} [2 p_1(1-p_1)+2 p_2(1-p_2)]$, we obtain as a result:
\begin{equation}
P( ||\boldsymbol{Z}^N(t+1)-\boldsymbol{z}'|| \geq \mu | \boldsymbol{Z}^N(t)=\boldsymbol{z}) \leq \frac{C(\boldsymbol{z})}{N}
\end{equation}
\end{proof}
Now, we give a lemma that bounds the probability knowing the initial state $\boldsymbol{z}(0)=\boldsymbol{x}$.
One can easily verifies that $m_1(\boldsymbol{z}(t))=m_1(\boldsymbol{x})+t$ and $m_2(\boldsymbol{z}(t))=m_2(\boldsymbol{x})+t$ by induction. Without loss of generality, we let $m_k(\boldsymbol{z}(t))=m_k(t)$ for $k=1,2$.\\
\begin{Lemma}\label{lem:prob_bound_initial_state}
For any $\mu$, there exists positive constant $C(t+1)$ such that:
\begin{equation}
P_{\boldsymbol{x}}( ||\boldsymbol{Z}^N(t+1)-\boldsymbol{z}(t+1)|| \geq \mu) \leq \frac{C(t+1)}{N}
\end{equation}
where $C(t+1)$ is independent of $N$.
\end{Lemma}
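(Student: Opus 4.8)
The plan is to prove, by induction on $t\ge 0$, the (formally stronger) statement that for every $\mu>0$ there is a constant $C(t)$, independent of $N$, with $P_{\boldsymbol{x}}(\|\boldsymbol{Z}^N(t)-\boldsymbol{z}(t)\|\ge\mu)\le C(t)/N$; the lemma is then the instance at time $t+1$. The base case $t=0$ is immediate because $\boldsymbol{Z}^N(0)=\boldsymbol{z}(0)=\boldsymbol{x}$ deterministically, so the probability is $0$ and $C(0)=0$ works.

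For the inductive step, recall $\boldsymbol{z}(t+1)=Q(\boldsymbol{z}(t))\boldsymbol{z}(t)=\mathbb{E}[\boldsymbol{Z}^N(t+1)\mid\boldsymbol{Z}^N(t)=\boldsymbol{z}(t)]$ and use the triangle inequality
\[
\|\boldsymbol{Z}^N(t+1)-\boldsymbol{z}(t+1)\|\ \le\ \underbrace{\big\|\boldsymbol{Z}^N(t+1)-Q(\boldsymbol{Z}^N(t))\boldsymbol{Z}^N(t)\big\|}_{(\mathrm{I})}\ +\ \underbrace{\big\|Q(\boldsymbol{Z}^N(t))\boldsymbol{Z}^N(t)-Q(\boldsymbol{z}(t))\boldsymbol{z}(t)\big\|}_{(\mathrm{II})},
\]
so that $\{\|\boldsymbol{Z}^N(t+1)-\boldsymbol{z}(t+1)\|\ge\mu\}\subseteq\{(\mathrm{I})\ge\mu/2\}\cup\{(\mathrm{II})\ge\mu/2\}$. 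Term $(\mathrm{I})$ is exactly the object bounded by Lemma~\ref{lem:prob_bound_last_state}: conditionally on $\boldsymbol{Z}^N(t)=\boldsymbol{z}$, and since every state reachable from $\boldsymbol{x}$ in $t$ steps satisfies $m_k(\boldsymbol{z})\le m_k(\boldsymbol{x})+t$ (ages reset to $1$ or increase by $1$), the constant $C(\boldsymbol{z})$ of that lemma is dominated by a deterministic $\widetilde C(t)$ depending only on $t$, $\mu$, $p_1$, $p_2$ and $m_k(\boldsymbol{x})$; averaging over $\boldsymbol{Z}^N(t)$ gives $P_{\boldsymbol{x}}((\mathrm{I})\ge\mu/2)\le \widetilde C(t)/N$.

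The heart of the argument is term $(\mathrm{II})$, for which I would establish a Lipschitz estimate for the fluid map: there is a finite constant $L_t$, depending only on $t$, $p_1$, $p_2$, $\alpha$ and $m_k(\boldsymbol{x})$, with $\|Q(\boldsymbol{z}_1)\boldsymbol{z}_1-Q(\boldsymbol{z}_2)\boldsymbol{z}_2\|\le L_t\|\boldsymbol{z}_1-\boldsymbol{z}_2\|$ for all admissible age-profiles $\boldsymbol{z}_1,\boldsymbol{z}_2$ at time $t$ (embedded in the common space of states $\le m_k(\boldsymbol{x})+t$). Granting this, $(\mathrm{II})\le L_t\|\boldsymbol{Z}^N(t)-\boldsymbol{z}(t)\|$, hence $P_{\boldsymbol{x}}((\mathrm{II})\ge\mu/2)\le P_{\boldsymbol{x}}(\|\boldsymbol{Z}^N(t)-\boldsymbol{z}(t)\|\ge\mu/(2L_t))\le C(t)/N$ by the induction hypothesis applied with threshold $\mu/(2L_t)$. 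Combining, the claim holds at $t+1$ with $C(t+1):=\widetilde C(t)+C(t)$, independent of $N$; a union bound over $t$ then yields Proposition~\ref{prop:kurth_theorem_age}.

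The main obstacle is the Lipschitz bound on $\boldsymbol{z}\mapsto Q(\boldsymbol{z})\boldsymbol{z}$. The map shifts every coordinate one state up, injects $p_k\alpha_k(\boldsymbol{z})$ at state $1$, and partially keeps the threshold proportion via $\beta_k(\boldsymbol{z})$; its only non-linearity comes from the dependence of $\alpha_k(\boldsymbol{z})$, $l_k(\boldsymbol{z})$ and $\beta_k(\boldsymbol{z})$ on $\boldsymbol{z}$. Because the Whittle indices impose a fixed total order on $(\text{state},\text{class})$ pairs, selecting the top mass $\alpha$ makes $\alpha_1(\boldsymbol{z})$ — and therefore $\alpha_2(\boldsymbol{z})=\alpha-\alpha_1(\boldsymbol{z})$, $l_k(\boldsymbol{z})$ and $\beta_k(\boldsymbol{z})$ — a continuous, piecewise-linear function of $\boldsymbol{z}$ on the finitely many polytopes into which the admissible region is split by the ordering; on each polytope $Q(\boldsymbol{z})\boldsymbol{z}$ is affine, and a common finite Lipschitz constant over the finitely many pieces furnishes $L_t$. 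The two delicate points are (i) verifying the map is genuinely Lipschitz across the faces where $l_k(\boldsymbol{z})$ jumps — this holds because the mass that changes schedule is itself $O(\|\boldsymbol{z}_1-\boldsymbol{z}_2\|)$ in the weighted $\ell_1$ norm $\|\cdot\|$ — and (ii) tracking the growth of $L_t$ with $t$ through the increasing dimension and the weights $i$ in $\|\cdot\|$; neither affects the conclusion, since $N\to\infty$ with $t$ held fixed.
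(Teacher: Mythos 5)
Your proposal is correct in substance and shares the paper's skeleton (induction on $t$, the triangle-inequality split into the one-step fluctuation $\|\boldsymbol{Z}^N(t+1)-Q(\boldsymbol{Z}^N(t))\boldsymbol{Z}^N(t)\|$ bounded by Lemma~\ref{lem:prob_bound_last_state} and a drift term $\|Q(\boldsymbol{Z}^N(t))\boldsymbol{Z}^N(t)-Q(\boldsymbol{z}(t))\boldsymbol{z}(t)\|$), but you handle the drift term by a genuinely different device. The paper does not use any global regularity of the fluid map: it conditions on the event $\|\boldsymbol{Z}^N(t)-\boldsymbol{z}(t)\|<\rho$, discards the complementary event via the induction hypothesis, and then invokes a purely \emph{local} fact (Corollary~\ref{cor:continuous_Q}, resting on the cited result that $\boldsymbol{Q}(\cdot)$ is literally constant in a small neighborhood of $\boldsymbol{z}(t)$) to make the drift term smaller than $\nu<\mu$; the one-step lemma is then applied only to states $\boldsymbol{z}$ in that neighborhood, with $m_k(\boldsymbol{z})\le m_k(\boldsymbol{z}(t))$. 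You instead assert a \emph{global} Lipschitz bound $\|Q(\boldsymbol{z}_1)\boldsymbol{z}_1-Q(\boldsymbol{z}_2)\boldsymbol{z}_2\|\le L_t\|\boldsymbol{z}_1-\boldsymbol{z}_2\|$ on the truncated (states $\le m_k(\boldsymbol{x})+t$) simplex and feed $\mu/(2L_t)$ into the induction hypothesis, and you bound the one-step constant uniformly over all reachable states via $m_k(\boldsymbol{z})\le m_k(\boldsymbol{x})+t$. Your route buys self-containedness (no per-point $\rho$, no appeal to the external piecewise-constancy result) and a cleaner quantifier structure, at the price of actually having to prove the Lipschitz lemma: the piecewise-linear/continuity argument you sketch (scheduled mass of each (state,class) cell being $\min\{z_j,\max\{0,\alpha-F_{j-1}\}\}$ in the Whittle order, so switched mass is $O(\|\boldsymbol{z}_1-\boldsymbol{z}_2\|)$ and each switched unit moves weight at most $O(m_k(\boldsymbol{x})+t)$ in the weighted $\ell_1$ norm) is sound, but it should be written out, including a fixed tie-breaking convention when Whittle indices of distinct (state,class) pairs coincide, so that the total order and hence $\alpha_1(\cdot)$, $l_k(\cdot)$, $\beta_k(\cdot)$ are well defined and continuous across the faces; with $t$ fixed as $N\to\infty$, the growth of $L_t$ is indeed harmless, exactly as you note.
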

\begin{proof}
\renewcommand{\qedsymbol}{$\blacksquare$}
We recall from Lemma \ref{lem:prob_bound_last_state} that for any $\mu>0$, there exists a constant $C(\boldsymbol{z})$ independent of $N$ such that:
\begin{equation}
P( ||\boldsymbol{Z}^N(t+1)-Q(\boldsymbol{z})\boldsymbol{z}|| \geq \mu | \boldsymbol{Z}^N(t)=\boldsymbol{z}) \leq \frac{C(\boldsymbol{z})}{N}
\end{equation}
Before proving the present lemma, we give an important lemma that will helps us in the later analysis. 
\begin{Lemma}
For any proportion vector $\boldsymbol{z}$, there exists $\sigma>0$ such that if $||\boldsymbol{Z}^N(t)-\boldsymbol{z}||\leq \sigma$, then $\boldsymbol{Q}(\boldsymbol{Z}^N(t))=\boldsymbol{Q}(\boldsymbol{z})$.
\end{Lemma}
\begin{proof}
One can deduce from the analysis done in \cite[Section~IV-C]{maatouk2020optimality} that there exists $\sigma>0$ such that if $\boldsymbol{Z}^N(t) \in \Omega_{\sigma}(\boldsymbol z)$, $\boldsymbol{Q}(\boldsymbol{Z}^N(t))$ is constant and doesn't depend on $\boldsymbol Z^N(t)$. 
Therefore, there exists $\sigma>0$ such that $\boldsymbol{Q}(\boldsymbol{Z}^N(t))=\boldsymbol{Q}(\boldsymbol{z})$. That concludes the proof.
\end{proof}
\begin{corollary}\label{cor:continuous_Q}
For any $v>0$, there exists $\rho$ such that $||\boldsymbol{Z}^N(t)-\boldsymbol{z}(t)||\leq\rho \Rightarrow ||\boldsymbol{Q}(\boldsymbol{Z}^N(t))\boldsymbol{Z}^N(t)-\boldsymbol{Q}(\boldsymbol{z}(t))\boldsymbol{z}(t)||\leq v$
\end{corollary}
\begin{proof}
According to the previous lemma, if $||\boldsymbol{Z}^N(t)-\boldsymbol{z}(t)||\leq \sigma$, then $\boldsymbol{Q}(\boldsymbol{Z}^N(t))=\boldsymbol{Q}(\boldsymbol{z}(t))$. This implies that $||\boldsymbol{Q}(\boldsymbol{Z}^N(t))\boldsymbol{Z}^N(t)-\boldsymbol{Q}(\boldsymbol{z}(t))\boldsymbol{z}(t)||=||\boldsymbol{Q}(\boldsymbol{z}(t))\boldsymbol{Z}^N(t)-\boldsymbol{Q}(\boldsymbol{z}(t))\boldsymbol{z}(t)||\leq ||\boldsymbol{Q}(\boldsymbol{z}(t))|| ||\boldsymbol{Z}^N(t)-\boldsymbol{z}(t)||$. 
That is, choosing $\rho=\min\{\frac{v}{||\boldsymbol{Q}(\boldsymbol{z}(t))||},\sigma\}$, we get $||\boldsymbol{Q}(\boldsymbol{Z}^N(t))\boldsymbol{Z}^N(t)-\boldsymbol{Q}(\boldsymbol{z}(t))\boldsymbol{z}(t)||\leq v$.
\end{proof}
With the above corollary being laid out, we prove the statement by a mathematical induction. 

For $t=1$, applying Lemma \ref{lem:prob_bound_last_state}, the following holds:
\begin{align}
{\Pr}_{\boldsymbol{x}}(||\boldsymbol{Z}^N(1)-\boldsymbol{z}(1)||\geq\mu)
=&P( ||\boldsymbol{Z}^N(t+1)-Q(\boldsymbol{x})\boldsymbol{x}|| \geq \mu | \boldsymbol{Z}^N(t)=\boldsymbol{x}) \leq \frac{C(x)}{N}\nonumber\\&=\frac{C(1)}{N}
\end{align}
and the desired result holds for $t=1$ by simply choosing $C(1)=\frac{(m_1(\boldsymbol{x})+1+m_2(\boldsymbol{x})+1)^4}{\mu^2} [2 p_1(1-p_1)+2 p_2(1-p_2)]$. Let us suppose that the statement holds for any $t\geq 1$. We investigate the property for $t+1$. To that end, let us consider $\nu<\mu$. Therefore, according to Corollary \ref{cor:continuous_Q}, there exists $\rho$ such that:  
\begin{equation}\label{eq:relation_less_v}
||\boldsymbol{Z}^N(t)-\boldsymbol{z}(t)||\leq\rho \Rightarrow ||\boldsymbol{Q}(\boldsymbol{Z}^N(t))\boldsymbol{Z}^N(t)-\boldsymbol{Q}(\boldsymbol{z}(t))\boldsymbol{z}(t)||\leq v
\end{equation}
Bearing that in mind, we have that:
\begin{align}
{\Pr}_{\boldsymbol{x}}(||\boldsymbol{Z}^N(t+1)-\boldsymbol{z}(t+1)||\geq\mu)=&{\Pr}_{\boldsymbol{x}}(||\boldsymbol{Z}^N(t+1)-\boldsymbol{z}(t+1)||\geq\mu\Big|||\boldsymbol{Z}^N(t)-\boldsymbol{z}(t)||\geq\rho){\Pr}_{\boldsymbol{x}}(||\boldsymbol{Z}^N(t)-\boldsymbol{z}(t)||\geq\rho)\nonumber \\
&+{\Pr}_{\boldsymbol{x}}(||\boldsymbol{Z}^N(t+1)-\boldsymbol{z}(t+1)||\geq\mu\Big|||\boldsymbol{Z}^N(t)-\boldsymbol{z}(t)||<\rho){\Pr}_{\boldsymbol{x}}(||\boldsymbol{Z}^N(t)-\boldsymbol{z}(t)||<\rho)\nonumber\\
\leq^{(a)}& \frac{C'(t)}{N}+{\Pr}_{\boldsymbol{x}}(||\boldsymbol{Z}^N(t+1)-\boldsymbol{z}(t+1)||\geq\mu\Big|||\boldsymbol{Z}^N(t)-\boldsymbol{z}(t)||<\rho)
\label{firststepinit}
\end{align}
where $(a)$ follows from ${\Pr}_{\boldsymbol{x}}(||\boldsymbol{Z}^N(t+1)-\boldsymbol{z}(t+1)||\geq\mu\Big|||\boldsymbol{Z}^N(t)-\boldsymbol{z}(t)||\geq\rho)\leq1$ and $C'(t)$ being the constant related to the statement holding for $t$ and for $\rho$. Next, we tackle the second term of the inequality in (\ref{firststepinit}):
\begin{align}
{\Pr}_{\boldsymbol{x}}(||\boldsymbol{Z}^N(t+1)-&\boldsymbol{z}(t+1)||\geq\mu\Big|||\boldsymbol{Z}^N(t)-\boldsymbol{z}(t)||<\rho)\nonumber\\
=&{\Pr}_{\boldsymbol{x}}(||\boldsymbol{Z}^N(t+1)-Q(\boldsymbol{Z}^N(t))\boldsymbol{Z}^N(t)+Q(\boldsymbol{Z}^N(t))\boldsymbol{Z}^N(t)-\boldsymbol{z}(t+1)||\geq\mu\Big|||\boldsymbol{Z}^N(t)-\boldsymbol{z}(t)||<\rho)\nonumber\\
\leq^{(a)}&{\Pr}_{\boldsymbol{x}}(||\boldsymbol{Z}^N(t+1)-Q(\boldsymbol{Z}^N(t))\boldsymbol{Z}^N(t)||+||Q(\boldsymbol{Z}^N(t))\boldsymbol{Z}^N(t)-Q(\boldsymbol{z}(t))\boldsymbol{z}(t)||\geq\mu\Big|||\boldsymbol{Z}^N(t)-\boldsymbol{z}(t)||<\rho)\nonumber\\
\leq^{(b)}&{\Pr}_{\boldsymbol{x}}(||\boldsymbol{Z}^N(t+1)-Q(\boldsymbol{Z}^N(t))\boldsymbol{Z}^N(t)||\geq\mu-\nu \Big|||\boldsymbol{Z}^N(t)-\boldsymbol{z}(t)||<\rho)\nonumber\\
=&\sum_{\substack{\boldsymbol{z}\in\Omega_{\rho}(\boldsymbol{z}(t))\\ m_k(\boldsymbol z) \leq m_k(\boldsymbol z(t)) \\ k=1,2}}{\Pr}_{\boldsymbol{x}}(\boldsymbol{Z}^N(t)=\boldsymbol{z}\Big|\boldsymbol{Z}^N(t)\in\Omega_{\rho}(\boldsymbol{z}(t))){\Pr}_{\boldsymbol{x}}(||\boldsymbol{Z}^N(t+1)-Q(\boldsymbol{z})\boldsymbol{z}||\geq\mu-\nu|\boldsymbol{Z}^N(t)=\boldsymbol{z})\nonumber\\
&+\sum_{\substack{\boldsymbol{z}\in\Omega_{\rho}(\boldsymbol{z}(t))\\ m_1(\boldsymbol z) > m_1(\boldsymbol z(t)) \\ or \\ m_2(\boldsymbol z) > m_2(\boldsymbol z(t))}}{\Pr}_{\boldsymbol{x}}(\boldsymbol{Z}^N(t)=\boldsymbol{z}\Big|\boldsymbol{Z}^N(t)\in\Omega_{\rho}(\boldsymbol{z}(t))){\Pr}_{\boldsymbol{x}}(||\boldsymbol{Z}^N(t+1)-Q(\boldsymbol{z})\boldsymbol{z}||\geq\mu-\nu|\boldsymbol{Z}^N(t)=\boldsymbol{z})
\label{secondstepinit}
\end{align}
where $(a)$ and $(b)$ follows from the triangular inequality and the relationship in \eqref{eq:relation_less_v}. 
One can notice that at any time slot $t$, $m_k(\boldsymbol Z^N(t)) \leq m_k(\boldsymbol z(t))$. In light of that fact, the second term of the equation \eqref{secondstepinit} is equal to $0$. Bearing that in mind,
We have for $\boldsymbol{z} \in \Omega_{\rho}(\boldsymbol{z}(t))$ such that  $m_k(\boldsymbol z) \leq m_k(\boldsymbol z(t))$:
\begin{align}
&{\Pr}_{\boldsymbol{x}}(||\boldsymbol{Z}^N(t+1)-Q(\boldsymbol{z})\boldsymbol{z}||\geq\mu-\nu|\boldsymbol{Z}^N(t)=\boldsymbol{z})\leq \frac{C_1(\boldsymbol{z}(t))}{N}
\end{align}
where $C_1(t)=\frac{(m_1(\boldsymbol{z}(t))+m_2(\boldsymbol{z}(t))+2)^4}{(\mu-\nu)^2} [2 p_1(1-p_1)+2 p_2(1-p_2)]=\frac{(m_1(t)+m_2(t)+2)^4}{(\mu-\nu)^2} [2 p_1(1-p_1)+2 p_2(1-p_2)]$. By substituting the above results in (\ref{secondstepinit}), we get:
\begin{equation}
{\Pr}_{\boldsymbol{x}}(||\boldsymbol{Z}^N(t+1)-\boldsymbol{z}(t+1)||\geq\mu\Big|||\boldsymbol{Z}^N(t)-\boldsymbol{z}(t)||<\rho)\leq \frac{C_1(t)}{N}
\end{equation}
Combining this with (\ref{firststepinit}), we can conclude that there exists a constant $C(t+1)$ such that:
\begin{equation}
{\Pr}_{\boldsymbol{x}}(||\boldsymbol{Z}^N(t+1)-\boldsymbol{z}(t+1)||\geq\mu)\leq \frac{C(t+1)}{N}
\end{equation}
which concludes our inductive proof.
\end{proof}
Knowing that: 
\[P_x(\underset{0 \leq t < T}{\text{sup}} ||\boldsymbol{Z}^N(t)-\boldsymbol{z}(t)|| \geq \mu)\leq \sum_{t=0}^{T-1}P_x( ||\boldsymbol{Z}^N(t)-\boldsymbol{z}(t)|| \geq \mu)\]
Therefore, from Lemma \ref{lem:prob_bound_initial_state}, there exists a constant $C$ which doesn't depend on $N$ such that: 
\[P_x(\underset{0 \leq t < T}{\text{sup}} ||\boldsymbol{Z}^N(t)-\boldsymbol{z}(t)|| \geq \mu) \leq \frac{C}{N}\]
Which concludes the proof.

\section{Proof of Lemma \ref{lem:result_kurth_theorem}}\label{app:lem:result_kurth_theorem}

We show first of all that $\boldsymbol{z}(t)$ converges to $\boldsymbol{z}^*$ with respect to our considered norm, i.e. $\underset{t \rightarrow +\infty}{\lim}\sum_{i=1}^{+\infty} |z_i^k(t)-z_i^{k,*}|i=0$ for $k=1,2$. For that purpose, we use the limit inversion theorem which states that:
\begin{itemize}
    \item If the series $\sum_i f_i(t)$ is uniformly convergent on $\mathbb{R}^+$ 
    \item If for each integer $i$, $f_i(t)$ admits a finite limit $r_i$ when $t$ tends to $+\infty$.
\end{itemize}
Therefore, $\underset{t \rightarrow +\infty}{\lim}\sum_{i=1}^{+\infty} f_i(t)=\sum_{i=1}^{+\infty}\underset{t \rightarrow +\infty}{\lim} f_i(t)=\sum_{i=1}^{+\infty} r_i$.\\
By letting  $f_i(t)$ denotes $|z_i^k(t)-z_i^{k,*}|i$ for a given $k$, proving the result above is equivalent to establish that:
$$\underset{t \rightarrow +\infty}{\lim}\sum_{i=1}^{+\infty} |z_i^k(t)-z_i^{k,*}|i=\sum_{i=1}^{+\infty}\underset{t \rightarrow +\infty}{\lim} |z_i^k(t)-z_i^{k,*}|i$$
To that extent, we check if the aforementioned conditions are satisfied for this specific function $f_i(t)=|z_i^k(t)-z_i^{k,*}|i$.
\begin{itemize}
    \item Uniform convergence: According to Weierstrass criterion, $\sum_i f_i(t)$ is uniformly convergent if for each $i$ the function $f_i(t)$ is bounded by a constant $c_i$ such that $\sum_i c_i$ is convergent. Based on the proof of the Proposition \ref{prop:z_convergence}, one can deduce that for large enough $t$ denoted by $t_l$, the following induction relation always holds for $t\geq t_l$ and $i\geq l_{\max}+1$:
    $$z_{i+1}^k(t+1)=p_k z^k_i(t)$$
    That is, choosing $t_0$ greater than $t_l$, and denoting by $i_0=m_k(t_0)$ the highest state of the vector $\boldsymbol{z}(t_0)$ which is greater than $l_{\max}+1$, we have that for each $i>i_0$:
    \begin{equation}
    z_i^k(t)=\left\{
    \begin{array}{ll}
        0 & if  \ \ t_0 \leq t < t_0+i-i_0\\
        p_k^{i-i_0}z_{i_0}^k(t-(i-i_0)) & if \ \ t \geq t_0+i-i_0 
    \end{array}
    \right.
    \end{equation}
Based on the above equation, for each $i>i_0$, $z^k_i(t)$ is less than $p_k^{i-i_0}$ for all $t\geq t_0$. To that extent, we investigate the evolution of the series of interest only when $t\geq t_0$ (the limit inversion theorem still applicable since $+\infty>t_0$).
Moreover, we have that for all $t \geq t_0$: $$\sum_i |z_i^k(t)-z_i^{k,*}|i=\sum_{i=1}^{i_0} |z_i^k(t)-z_i^{k,*}|i+\sum_{i_0+1}^{+\infty} |z_i^k(t)-z_i^{k,*}|i \leq i_0^2+\sum_{i=i_0+1}^{+\infty} (p_k^{i-i_0}i+z_i^{k,*}i)$$
This last sum is known to be a finite sum since $\sum_{i=1}^{+\infty} z_i^{k,*}i$ is the optimal average age of the relaxed problem for the class $k$ which is finite, and $\sum_{i=1}^{+\infty} p^{i}i$ is a finite sum for any $0\leq p< 1$. Hence, the uniform convergence can be accordingly concluded.
    \item Existence of the limit of $f_i(t)=|z_i^k(t)-z_i^{k,*}|i$: According to the result of Proposition \ref{prop:z_convergence}, we have $\underset{t \rightarrow +\infty}{\lim}|z_i^k(t)-z_i^{k,*}|i=0$ which is finite. Therefore, the second condition is satisfied.   
\end{itemize}
Leveraging these findings, we can inverse the order between the limit and the sum. Subsequently: 
$$\underset{t \rightarrow +\infty}{\lim}\sum_{i=1}^{+\infty} |z_i^k(t)-z_i^{k,*}|i=\sum_{i=1}^{+\infty}\underset{t \rightarrow +\infty}{\lim} |z_i^k(t)-z_i^{k,*}|i=0$$
In other words, for $k=1,2$, $\sum_{i=1}^{+\infty} |z_i^k(t)-z_i^{k,*}|i$ tends to $0$ when $t$ grows. Consequently, $\boldsymbol{z}(t)$ converges to $\boldsymbol{z}^*$ with respect to our defined norm.\\
Therefore, for $0<\nu<\mu$, there exists $T_0$ such that for any $t\geq T_0$:
\begin{equation}
||\boldsymbol{z}(t)-\boldsymbol{z}^*||\leq\nu
\end{equation}
By leveraging Proposition \ref{prop:kurth_theorem_age}, we have:
\begin{align}
&{\Pr}_{\boldsymbol{x}}(\underset{T_0\leq t<T}{\sup} ||\boldsymbol{Z}^N(t)-\boldsymbol{z}^*||\geq\mu)
\nonumber\\&
\leq {\Pr}_{\boldsymbol{x}}(\underset{T_0\leq t<T}{\sup} ||\boldsymbol{Z}^N(t)-\boldsymbol{z}(t)||+||\boldsymbol{z}(t)-\boldsymbol{z}^*||\geq\mu)\nonumber\\&\leq {\Pr}_{\boldsymbol{x}}(\underset{T_0\leq t<T}{\sup} ||\boldsymbol{Z}^N(t)-\boldsymbol{z}(t)||\geq\mu-\nu)\nonumber\\&
\leq
{\Pr}_{\boldsymbol{x}}(\underset{0\leq t<T}{\sup} ||\boldsymbol{Z}^N(t)-\boldsymbol{z}(t)||\geq\mu-\nu)\leq \frac{s}{N}
\end{align}
which concludes the proof.

\section{Proof of Proposition \ref{prop:optimality_whittle_index}}\label{app:prop:optimality_whittle_index}

We have that: 
\begin{align}
\big |\frac{1}{T} \mathbb{E}^{wi}\left[ \sum_{t=0}^{T-1}  \sum_{k=1}^{K} \sum_{i=1}^{+\infty} Z_i^{k,N}(t)i \Big | \boldsymbol{Z}^N(0)=\boldsymbol{x}\right]-\sum_{k=1}^K \sum_{i=1}^{+\infty} z_i^{k,*}i \big |=& \big |\frac{1}{T} \mathbb{E}^{wi}\left[ \sum_{t=0}^{T-1}  \sum_{k=1}^{K} \sum_{i=1}^{+\infty} (Z_i^{k,N}(t)i- z_i^{k,*}i) \Big | \boldsymbol{Z}^N(0)=\boldsymbol{x} \right] \big |\\
\leq& \big |\frac{1}{T}  \sum_{t=0}^{T_0-1}  \sum_{k=1}^{K} \sum_{i=1}^{+\infty} \mathbb{E}^{wi}\left[Z_i^{k,N}(t)i - z_i^{k,*}i\Big | \boldsymbol{Z}^N(0)=\boldsymbol x \right] \big | \label{eq:first_term} \\
+& \big |\frac{1}{T}  \sum_{t=T_0}^{T-1}  \sum_{k=1}^{K} \sum_{i=1}^{+\infty} \mathbb{E}^{wi}\left[Z_{i}^{k,N}(t)i - z_i^{k,*}i\Big | \boldsymbol{Z}^N(0)=\boldsymbol{x} \right] \big | \label{eq:second_term}
\end{align}
We start by bounding \eqref{eq:first_term}. We have that: 
\begin{align} 
\big |\frac{1}{T}  \sum_{t=0}^{T_0-1}  \sum_{k=1}^{K} \sum_{i=1}^{+\infty} \mathbb{E}^{wi}\left[Z_i^{k,N}(t)i - z_i^{k,*}i\Big | \boldsymbol{Z}^N(0)=\boldsymbol{x} \right] \big | &\leq \frac{1}{T}  \sum_{t=0}^{T_0-1}  \sum_{k=1}^{K} \sum_{i=1}^{+\infty} \mathbb{E}^{wi}\left[ \big |Z_i^{k,N}(t)i - z_i^{k,*}i \big | \Big | \boldsymbol{Z}^N(0)=\boldsymbol{x} \right]\\ 
&\leq \frac{1}{T}  \sum_{t=0}^{T_0-1}  \sum_{k=1}^{K} \sum_{i=1}^{+\infty} \mathbb{E}^{wi}\left[Z_i^{k,N}(t)i\Big | \boldsymbol{Z}^N(0)=\boldsymbol{x} \right]] + \frac{1}{T}  \sum_{t=0}^{T_0-1}  \sum_{k=1}^{K} \sum_{i=1}^{+\infty}z_i^{k,*}i\\
&=\frac{1}{T}  \sum_{t=0}^{T_0-1}  \sum_{k=1}^{K} \sum_{i=1}^{\max\{m_1(t),m_2(t)\}} \mathbb{E}^{wi}\left[Z_i^{k,N}(t)i\Big | \boldsymbol{Z}^N(0)=\boldsymbol{x} \right] + \frac{1}{T}  \sum_{t=0}^{T_0-1} C^{RP}
\end{align}
As $m_k(.)$ is increasing with $t$, then denoting $m(t)=\max\{m_1(t),m_2(t)\}$, we get:
\begin{align}
\frac{1}{T}  \sum_{t=0}^{T_0-1}  \sum_{k=1}^{K} \sum_{i=1}^{\max\{m_1(t),m_2(t)\}} \mathbb{E}^{wi}\left[Z_i^{k,N}(t)i\Big | \boldsymbol{Z}^N(0)=\boldsymbol{x} \right] + \frac{1}{T}  \sum_{t=0}^{T_0-1} C^{RP} \leq \frac{(m(T_0)+C^{RP})T_0}{T} \label{eq:first_bound}
\end{align}


We denote $Y_N$ the event {$\underset{T_0 \leq t < T}{\text{sup}} ||\boldsymbol{Z}^N(t)-\boldsymbol{z}^*|| \geq \mu$}, and we proceed to bound the second term \eqref{eq:second_term}. 
\begin{align}
\big |\frac{1}{T}  \sum_{t=T_0}^{T-1} \sum_{k=1}^{K} \sum_{i=1}^{+\infty} \mathbb{E}^{wi}\left[Z_{i}^{k,N}(t)i - z_i^{k,*}i\Big | \boldsymbol{Z}^N(0)=\boldsymbol{x} \right] \big |&=P_{\boldsymbol{x}}(Y_N) \big |\frac{1}{T}  \sum_{t=T_0}^{T-1}  \sum_{k=1}^{K} \sum_{i=1}^{+\infty} \mathbb{E}^{wi}\left[Z_{i}^{k,N}(t)i - z_i^{k,*}i\Big | Y_N, \boldsymbol{Z}^N(0)=\boldsymbol{x} \right] \big |+\\
&(1-P_{\boldsymbol{x}}(Y_N)) \big |\frac{1}{T}  \sum_{t=T_0}^{T-1}  \sum_{k=1}^{K} \sum_{i=1}^{+\infty} \mathbb{E}^{wi}\left[Z_{i}^{k,N}(t)i - z_i^{k,*}i \Big | \overline{Y_N}, \boldsymbol{Z}^N(0)=\boldsymbol{x} \right] \big |\\
& \leq^{(a)} \frac{(T-T_0)(m(T)+C^{RP})}{T}P_{\boldsymbol{x}}(Y_N)+(1-P_{\boldsymbol{x}}(Y_N))\mu \label{eq:second_bound}
\end{align}
where $(a)$ results from:
\begin{align}
\big |\frac{1}{T}  \sum_{t=T_0}^{T-1}  \sum_{k=1}^{K} \sum_{i=1}^{+\infty} \mathbb{E}^{wi}\left[Z_{i}^{k,N}(t)i - z_i^{k,*}i \Big | \overline{Y_N}, \boldsymbol{Z}^N(0)=\boldsymbol{x} \right] \big |&\leq \underset{T_0 \leq t < T}{\text{sup}} \mathbb{E}^{wi}\left[ \sum_{k=1}^{K} \sum_{i=1}^{+\infty} |Z_{i}^{k,N}(t)i - z_i^{k,*}i | \Big| \overline{Y_N}, \boldsymbol{Z}^N(0)=\boldsymbol{x} \right]\\
&= \mathbb{E}^{wi}\left[\underset{T_0 \leq t < T}{\text{sup}} ||\boldsymbol{Z}^N(t)-\boldsymbol{z}^*|| \Big | \overline{Y_N}, \boldsymbol{Z}^N(0)=\boldsymbol{x} \right]<\mu
\end{align}
According to Lemma \ref{lem:result_kurth_theorem}, we have $\lim_{N \rightarrow \infty}P_{\boldsymbol{x}}(Y_N)=0$. Thus, combining the result \eqref{eq:first_bound} and \eqref{eq:second_bound}, we obtain:
\begin{equation}
\lim_{N \rightarrow \infty} \big |\frac{1}{T} \mathbb{E}^{wi}\left[ \sum_{t=0}^{T-1}  \sum_{k=1}^{K} \sum_{i=1}^{+\infty} Z_i^{k,N}(t)i \Big | \boldsymbol{Z}^N(0)=\boldsymbol{x}\right]-\sum_1^K  \sum_{i=1}^{+\infty} z_i^{k,*}i \big |\leq \frac{T_0(m(T_0)+C^{RP})}{T}+\mu
\end{equation}
This inequality is true for all $\mu > 0$, then:
\begin{equation}
\lim_{N \rightarrow \infty} \big |\frac{1}{T} \mathbb{E}^{wi}\left[ \sum_{t=0}^{T-1}  \sum_{k=1}^{K} \sum_{i=1}^{+\infty} Z_i^{k,N}(t)i \Big | \boldsymbol{Z}^N(0)=\boldsymbol{x}\right]-\sum_{k=1}^K  \sum_{i=1}^{+\infty} z_i^{k,*}i \big |\leq \frac{T_0(m(T_0)+C^{RP})}{T}
\end{equation}
Finally we have:
\begin{equation}
\lim_{T \rightarrow \infty} \lim_{N \rightarrow \infty} \big |\frac{1}{T} \mathbb{E}^{wi}\left[ \sum_{t=0}^{T-1}  \sum_{k=1}^{K} \sum_{i=1}^{+\infty} Z_i^{k,N}(t)i \Big | \boldsymbol{Z}^N(0)=\boldsymbol{x}\right]-\sum_{k=1}^K  \sum_{i=1}^{+\infty} z_i^{k,*}i \big |=0
\end{equation}
As consequence:
\begin{equation}
\underset{ T \rightarrow +\infty }{\text{lim}} \lim_{N \rightarrow \infty}  \frac{1}{T} \mathbb{E}^{wi}\left[ \sum_{t=0}^{T-1}  \sum_{k=1}^{K} \sum_{i=1}^{+\infty} Z_i^{k,N}(t)i \Big | \boldsymbol{Z}^N(0)=\boldsymbol{x}\right]=\sum_{k=1}^K  \sum_{i=1}^{+\infty} z_i^{k,*}i 
\end{equation}

\end{appendices}
\end{document}